\title{Discrepancy-based Inference for Intractable Generative Models using Quasi-Monte Carlo}
\author{Ziang Niu$^{1,*}$, Johanna Meier$^{2,*}$, Fran\c{c}ois-Xavier Briol$^{3,\dagger}$\vspace{3mm} \\ 
{\small
$^1$University of Pennsylvania,
$^2$Leibniz Universit\"{a}t Hannover, 
$^3$University College London,} \\ 
{\small
$^*$contributed equally,  
$^\dagger$corresponding author.}}
\def\mMMD{\mathrm{MMD}}
\def\VHK{V_{\text{HK}}}
\def\X{\mathcal{X}}
\def\PX{\mathcal{P}(\mathcal{X})}
\def\PY{\mathcal{P}(\mathcal{Y})}
\def\Pk{\mathcal{P}_k(\mathcal{X})}
\def\Hk{\mathcal{H}_k}
\def\F{\mathcal{F}}
\def\E{\mathbb{E}}
\def\P{\mathbb{P}}
\def\Q{\mathbb{Q}}
\def\R{\mathbb{R}}
\def\N{\mathbb{N}}
\def\U{\mathbb{U}}
\def\lf{\left\lfloor}   
\def\rf{\right\rfloor}
\newenvironment{talign*}
 {\csname align*\endcsname}
 {\endalign}
\newenvironment{talign}
 {\csname align\endcsname}
 {\endalign}
\def\d{d}
\newtheorem{assumption}{Assumption}
\newtheorem{proposition}{Proposition}
\newtheorem{theorem}{Theorem}
\newtheorem{lemma}{Lemma}
\newtheorem{corollary}{Corollary}
\begin{document}

\maketitle


\begin{abstract}
    Intractable generative models are models for which the likelihood is unavailable but sampling is possible. Most approaches to parameter inference in this setting require the computation of some discrepancy between the data and the generative model. This is for example the case for minimum distance estimation and approximate Bayesian computation. These approaches require sampling a high number of realisations from the model for different parameter values, which can be a significant challenge when simulating is an expensive operation. In this paper, we propose to enhance this approach by enforcing ``sample diversity" in simulations of our models. This will be implemented through the use of quasi-Monte Carlo (QMC) point sets. Our key results are sample complexity bounds which demonstrate that, under smoothness conditions on the generator, QMC can significantly reduce the number of samples required to obtain a given level of accuracy when using three of the most common discrepancies: the maximum mean discrepancy, the Wasserstein distance, and the Sinkhorn divergence. This is complemented by a simulation study which highlights that an improved accuracy is sometimes also possible in some settings which are not covered by the theory.
\end{abstract}


\section{Introduction}

A particular challenge for statistics is the growing complexity of phenomena modelled by scientists, and as a result the growing complexity of the models themselves. This can often lead to cases where a closed form of the likelihood is not available anymore. As a result, classical parameter estimation tools such as maximum likelihood estimation or Bayesian inference cannot be used. Within these so-called intractable likelihood models, intractable generative models are parametric families of probability distributions which are specified through a generative process, so that it is possible to obtain realisations for any value of the parameter \cite{Cranmer2020}. These models are widely used throughout the sciences including genetics \cite{Beaumont2002}, astronomy \cite{Cameron2012} and ecology \cite{Beaumont2010}. In machine learning, one of the main applications is for the simulation of realistic looking images \cite{Mohamed2016}; see the recent line of work on generative adversarial networks \cite{Goodfellow2014}. 

Denote by $\P_\theta$ any element of a parametric family of interest with parameter $\theta$, and let $\X$ be the space of realisations from this model. The generative process of $\P_\theta$ is usually summarised through a pair $(\mathbb{U},G_\theta)$ which includes a relatively simple probability distribution $\mathbb{U}$ (such as a Gaussian or uniform) on some space $\mathcal{U}$ and a parametric map $G_\theta:\mathcal{U} \rightarrow \X$ called a generator or simulator. To obtain $n$ independent and identically distributed (IID) realisations $\{x_i\}_{i=1}^n$ from the model for some fixed parameter $\theta$, one can simply sample IID realisations $u_i \sim \mathbb{U}$, then map these samples through the generator $x_i =G_\theta(u_i)$. The main advantage of generative models is that one can model ever more complex phenomena by increasing the flexibility of the generator, as long as the map $G_\theta$ can be evaluated pointwise. 

Since simulating data is the only option available in the case of generative models, many inference methods for this class are based on simulating synthetic data for various parameter values, then comparing the simulated data to the observations to select a ``good'' parameter value. The latter usually requires defining some notion of distance, or discrepancy, between the two datasets. Once a discrepancy is defined, one possible approach is the framework of \emph{minimum distance estimation (MDE)} \cite{Parr1980}, where an estimator is constructed as the minimiser (over the set of model parameters) of the discrepancy between datasets. In the Bayesian literature, an alternative approach called \emph{approximate Bayesian computation (ABC)} \cite{Beaumont2002} consists of constructing a pseudo-posterior distribution over parameters by selecting parameter values simulated from a prior distribution for which the discrepancy between simulated and actual data is small. In all of the cases above, thinking of the actual data as an approximation to the data-generating process of interest, the main computational challenge can be summarised as having to efficiently estimate some discrepancy given access to realisations of two distributions.

{\color{black}There is a vast literature on possible discrepancies, each with competing advantages for parameter estimation including efficiency, robustness to model misspecification, computational cost and sample complexity.}
In this paper, we will not aim to be exhaustive, but will focus on a small subset of discrepancies which are popular in the literature because
they lend themselves to efficient implementations. The first discrepancy is the maximum mean discrepancy ($\mMMD$) \cite{Gretton2006}, which compares embeddings of probability distributions into reproducing kernel Hilbert spaces, and can be straightforwardly computed through evaluations of a kernel. This was studied by \cite{Briol2019,CheriefAbdellatif2020,CheriefAbdellatif2019,Alquier2020,Dellaporta2022} in the context of MDE, and by \cite{Li2015,Dziugaite2015,Li2017,Sutherland2017,Binkowski2018} for the case where $G_\theta$ is a neural network in particular. It was also used by \cite{Nakagome2013,Park2016,Mitrovic2016,Kajihara2018,Bharti2020} in the context of ABC. The two other discrepancies we will consider are the Wasserstein distance, as well as its relaxation called the Sinkhorn divergence. These can be efficiently implemented thanks to algorithmic advances in computational optimal transport \cite{Peyre2019}. They were considered for MDE by \cite{Bassetti2006,Bernton2017,Genevay2018,Deshpande2018,Wu2019,Nadjahi2019,Nguyen2020,Shen2020} and for ABC by \cite{Bernton2019,Goffard2020,Nadjahi2020}.

Clearly, any algorithmic development improving our ability to estimate these discrepancies will significantly reduce the overall computational cost of implementing all of the algorithms described above. We propose to tackle this problem through the use of quasi-Monte Carlo (QMC) point sets \cite{Dick2010}. In particular, we focus on the case where $\mathbb{U}$ is a uniform distribution\footnote{The assumption that $\U$ is uniform is relatively minor due to Sklar's theorem, which states that any multivariate distribution can be obtained through a transformation of a uniform distribution.} and replace independent and identical distributed (IID) realisations by some QMC point set. This is a rather simple algorithmic trick, which we will call QMC sampling and which has been explored for a wide range of models; see for example \cite{Cambou2017} for copula models,  or \cite{Hofert2018,Hofert2020} for neural networks. Once again, a full review of QMC sampling is out of scope for this paper. Intuitively, this approach consists of generating a more ``diverse'' set of samples from the model. This can be observed visually through the example in Figure \ref{fig:QMCpointset2D} which compares realisations from a Gaussian distribution obtained through Monte Carlo (MC) and QMC. Clearly, the realisations obtained through QMC provide an improved approximation of $\P_\theta$ in the intuitive sense that they provide a more uniform coverage of areas of high-probability under $\P_\theta$.

The main contribution of this paper is a set of theoretical results demonstrating the advantages of QMC sampling for performing inference with discrepancies. In particular, Theorem \ref{thm:QMC_concentration_ineq}, Theorem \ref{thm:Wasserstein_sample_complexity_QMC} and Theorem \ref{thm:Sinkhorn_QMC} provide sample complexity results with respect to the MMD, Wasserstein and Sinkhorn divergence respectively. In each case, the theorem provides sufficient conditions for estimating the discrepancy at a rate which is linear (up to log factors) in the number of realisations $n$. This is a significant improvement upon the usual MC rate which decreases at a root-$n$ speed. Of course, such speed-ups do come at the cost of the generality of the method as they require certain regularity conditions on $G_\theta$ and $\X$. Despite this drawback, we show through an extensive simulation study that faster rates than MC (although not necessarily linear) can still be obtained for QMC in some settings not covered by our theory. We therefore see this paper as an initial step in the study of the use of QMC sampling for discrepancy estimation.

\begin{figure}[t!]
    \centering
  \includegraphics[width=0.9\textwidth]{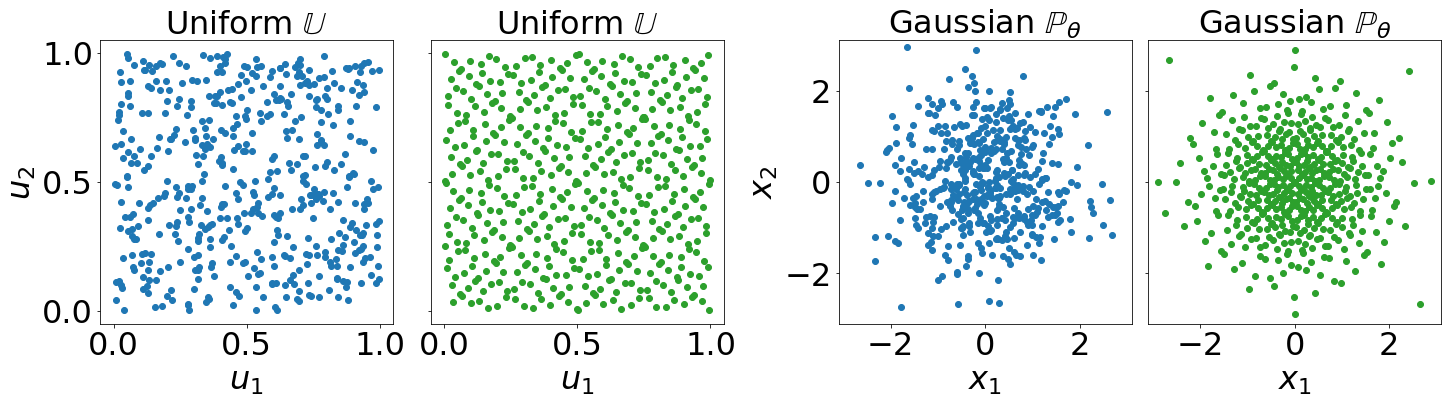}
  \caption{Realisations from $\P_\theta =\mathcal{N}(0, I)$, a zero mean Gaussian with identity covariance matrix. We compare realisations from a $\text{Unif}([0,1]^2)$ (left plot) against a QMC point set (center left plot), together with their projections through the generator $x = G_\theta(u) = (\Phi^{-1}_{\theta_1}(u_1),\Phi^{-1}_{\theta_2}(u_2))$ (center right and right plot), where $\Phi_\theta$ denotes the cumulative distribution function of $\P_\theta$.
  }
  \label{fig:QMCpointset2D}
\end{figure}

The remainder of this paper is structured as follows. In Section \ref{sec:background}, we introduce intractable generative models and the most common distances used for inference, including the MMD, Wasserstein distance and Sinkhorn divergence. In Section \ref{sec:sample_complexity}, we derive our novel sample complexity results. Finally, the performance of the performance of these novel estimators is studied numerically in Section \ref{sec:experiments}. We conclude by discussing potential future directions in Section \ref{sec:conclusions}.


\section{Background}\label{sec:background}

This section will recall background material on inference for intractable generative models (in Section \ref{sec:inference_problem}), then introduce the main discrepancies considered in the literature (in Section
\ref{sec:distances}).

\subsection{Inference for Intractable Generative Models through Discrepancies}\label{sec:inference_problem}

Throughout this paper, we will consider settings where the base space is $\mathcal{U} = [0,1]^s$, the data space satisfies $\mathcal{X} \subseteq \R^d$ and the parameter space satisfies $\Theta \subseteq \R^p$ for $s,p,d \in \mathbb{N}_+ = \{1,2,3,\ldots\}$. We will denote by $\mathcal{P}(\mathcal{X})$ the set of all Borel probability distributions on $\mathcal{X}$. 

The inference task of interest can be summarised as follows. Given IID realisations $\{y_{j}\}_{j=1}^{m}$ from some unknown $\Q \in \mathcal{P}(\X)$, we would like to find the parameter $\theta^{*}\in\Theta$ such that $\P_{\theta^{*}}$ is closest to $\Q$ in some sense. In particular, if $\Q \in \{\P_{\theta}\in\mathcal{P}(\mathcal{X}):\theta\in\Theta\}$ (i.e. the model is well-specified), our task is to recover the parameter value $\theta^*$ which was used to simulate the observations $\{y_{j}\}_{j=1}^{m}$.  One approach is to use a discrepancy, which we will define to be any function $D:\PX \times \PX \rightarrow [0,\infty)$. Specific examples will be provided in Section \ref{sec:distances}, but for now we will only assume such a discrepancy has been selected, and describe how it can be used for inference. Firstly, we may construct an estimator through the framework of MDE \cite{Parr1980}:
\begin{talign*}
\hat{\theta}^D_m \in \arg\min_{\theta\in\Theta}D\left(\P_{\theta},\Q^m\right),
\end{talign*}
 where $\Q^{m}(\d x)=\frac{1}{m}\sum_{j=1}^{m}\delta_{y_{j}}(\d x)$ is an empirical measure, and $\delta_{y_{j}}$ a Dirac measure at $y_{j}$. Of course, this is usually an intractable optimisation problem since it requires evaluating $D$ pointwise at $\P_\theta$, which is itself unknown.  {\color{black}As a result, a common approach is to solve the optimisation problem through evaluations of $D\left(\P^n_{\theta},\Q^{m}\right)$, or of its gradient, where $\P^{n}_\theta(\d x)=\frac{1}{n}\sum_{i=1}^{n}\delta_{x_{i}}(\d x)$  and is obtained from realisations $\{x_i\}_{i=1}^n$ from $\P_\theta$.}
{\color{black} For all discrepancies considered in this paper, $D\left(\P^n_{\theta},\Q^{m}\right)$ is a biased estimate of $D\left(\P_{\theta},\Q^{m}\right)$. This leads to the use of stochastic optimisation methods with biased gradient estimates, which leads to a bias in the estimated parameter \cite{Tadic2017,Karimi2019}. However, any approach leading to more efficient estimation of $D\left(\P_{\theta},\Q^{m}\right)$ may be able to significantly reduce this bias.
}

Secondly, we may use ABC, which aims to construct a pseudo-posterior which closely approximates the exact Bayesian posterior \cite{Beaumont2002}. This can be achieved by sampling parameter values $\{\theta_k\}_{k=1}^K$ (for some $K\in \mathbb{N}$) from a prior distribution $\pi$, then for each of these values simulating a dataset $\{x^k_i\}_{i=1}^n$. Each of these parameter values is then accepted as a realisation from the  pseudo-posterior if $D(\P_{\theta_k}^n,\Q^m) \leq \varepsilon$ holds for some threshold parameter $\varepsilon >0$. This straightforward procedure allows us to sample from the following pseudo-posterior:
\begin{talign*}
    \Pi^{D}_{\varepsilon}(d\theta|y_1,\ldots,y_m) \propto \Pi(d\theta) \mathbb{E}\left[\bm{1}_{\left\{D\left(\P^n_{\theta},\Q^{m}\right) \leq \varepsilon \right\}}(d\theta)\right],
\end{talign*}
where $\bm{1}_A$ is an indicator function for the event $A$, and the expectation is with respect to the randomness in the simulated data. {\color{black}Note that this sampling procedure is only necessary due to the intractability of $D(\P_{\theta_k},\Q^m)$ for intractable generative models; if this quantity was tractable, we would instead want to verify whether $D(\P_{\theta_k},\Q^m) \leq \varepsilon$ instead of $D(\P_{\theta_k}^n,\Q^m) \leq \varepsilon$.}

QMC has previously been used for ABC \cite{Buchholz2019}, but this was used to improve sampling of parameters instead of simulating the data. Finally, we also point out that recent generalised Bayesian procedures for generative models are also discrepancy-based; see for example \cite{Schmon2020,Pacchiardi2021}.

Clearly MDE and ABC critically rely on $D\left(\P^n_{\theta},\Q^{m}\right)$ approaching $D\left(\P_{\theta},\Q^{m}\right)$ at a fast rate in $n$. Whether this is possible will depend on the discrepancy $D$.


\subsection{Examples of Discrepancies for Inference}\label{sec:distances}

Recall that any discrepancy $D$ such that $\forall \P_1,\P_2,\P_3 \in \PX$: (i) $D(\P_1,\P_2)=0$ if and only if $\P_1=\P_2$, (ii) $D(\P_1,\P_2)=D(\P_2,\P_1)$, and (iii) $D(\P_1,\P_2)\leq D(\P_1,\P_3)+D(\P_3,\P_2)$, is called a probability metric on $\PX$. If only (i) holds, $D$ is called a (statistical) divergence. The discrepancies in this paper closely relate to integral probability metrics (IPMs) \cite{Muller1997}. Given a set of functions $\mathcal{F}$, an IPM is a probability metric which takes the form:
\begin{talign*}
    D_{\mathcal{F}}(\P,\Q):=\sup_{f\in\mathcal{F}}\left|\int_{\mathcal{X}}f(x)\P(\d x)-\int_{\mathcal{X}}f(x)\Q(\d x)\right|.
\end{talign*}
In practice, $\mathcal{F}$ needs to be large enough to be able to differentiate $\P$ from $\Q$, but also small enough so that $D_{\mathcal{F}}(\P,\Q)$ can be computed, or at least approximated up to high accuracy. It should also not be too large since we might otherwise have $D_{\mathcal{F}}(\P,\Q) = \infty$ for all $\P \neq \Q$. This can significantly restrict the choices of $\F$ available for inference. In the case where $\mathcal{F}$ is finite-dimensional, the discrepancy above can be thought of as comparing a finite number of summary statistics of $\P$ and $\Q$, as commonly done for the method of simulated moments or in ABC. For this case, the use of QMC was previously studied in \cite{Forneron2019}. In contrast, our work will focus on the most common discrepancies based on infinite-dimensional $\mathcal{F}$, which we introduce below. 

\paragraph{Maximum Mean Discrepancy} Let $\mathcal{F} = \{f:\mathcal{X}\rightarrow \R : \|f\|_{\Hk} \leq 1\}$, the unit-ball of a reproducing kernel Hilbert space (RKHS) $\mathcal{H}_{k}$ with kernel $k:\mathcal{X}\times\mathcal{X}\rightarrow\R$. In this case, the IPM is called the (kernel) \emph{maximum mean discrepancy} \cite{Gretton2006}.  We will assume that the kernel is characteristic, which guarantees that the discrepancy is a metric on the set
\begin{talign*}
\mathcal{P}_{k}(\mathcal{X}) := \{ \P \in \PX : \int_{\mathcal{X}}\sqrt{k(x,x)}\P(\d x)<\infty\} \subseteq \PX,
\end{talign*}
see \cite{Sriperumbudur2009} for more details. The name MMD originates from the fact that the IPM can be expressed as $\mMMD(\P,\Q)=\|\int_\X k(\cdot,y) \P(\d y) - \int_\X k(\cdot,y) \Q(\d y) \|_{\mathcal{H}_{k}}$, which is the size of the difference between $\P$ and $\Q$ when embedded in $\mathcal{H}_k$. The squared-$\mMMD$ can alternatively be expressed as
\begin{talign}
    \mMMD^{2}(\P,\Q)
    & :=\int_{\mathcal{X}}\int_{\mathcal{X}}k(x,y)\P(\d x)\P(\d y)-2\int_{\mathcal{X}}\int_{\mathcal{X}}k(x,y)\P(\d x)\Q(\d y) \nonumber\\
    &\qquad +\int_{\mathcal{X}}\int_{\mathcal{X}}k(x,y)\Q(\d x)\Q(\d y). \label{eq:defn_MMD}
\end{talign}
Note that this expression does not require the computation of a supremum anymore.
Given two empirical measures $\P^n = \frac{1}{n}\sum_{i=1}^n \delta_{x_i}$ and $\Q^m = \frac{1}{m} \sum_{j=1}^m \delta_{y_j}$ approximating $\P$ and $\Q$ respectively, this expression lends itself naturally to the following approximation:
\begin{talign}
    \mMMD^{2}(\P^{n},\Q^{m})
    & =\frac{\sum_{i\neq j}^{n}k(x_i,x_j)}{n^2}-\frac{2\sum_{i=1}^{n}\sum_{j=1}^{m}k(x_i,y_{j})}{nm}+\frac{\sum_{i\neq j}^{m}k(y_{i},y_{j})}{m^2}. \label{eq:MMD_Vstat_defn}
\end{talign}
The use of a U-statistic may also be preferred in some case; see for example \cite{Briol2019}.
One of the main advantages of the MMD is the fact that it can be easily approximated, but also that the kernel choice allows for significant flexibility. The most common example is the Gaussian (or squared-exponential) kernel $k(x,x')= \lambda^2 \exp\left(-\|x-x'\|_{2}^{2}/\sigma^{2}\right)$ where $\lambda,\sigma>0$. QMC point sets were already used with the MMD in \cite{Hofert2018,Hofert2020} in the context of neural network generators, but those papers do not study the sample complexity of the approach from a theoretical viewpoint.

\paragraph{Wasserstein Distance} Let $c:\X \times \X \rightarrow [0,\infty)$ be a metric (called cost function), $p \geq 1$ and $\Gamma(\P,\Q) \subset \mathcal{P}(\mathcal{X} \times \mathcal{X})$ be the set of distributions  with marginals $\P \in \PX$ and $\Q \in \PX$ in the first and second coordinate respectively. The Wasserstein distance can be expressed as:
\begin{talign*}
    W_{c,p}(\P,\Q) & :=  \left(\min\limits_{\gamma\in \Gamma(\P,\Q)}\int_{\mathcal{X}\times\mathcal{X}}c^{p}(x,y)\gamma(dx,dy) \right)^{\frac{1}{p}}, 
\end{talign*}
A common choice for $c$ is the Euclidean distance, but other metrics can be used. The Wasserstein distance is a probability metric on the set 
\begin{talign*}
\mathcal{P}_{c,p}(\mathcal{X}) = \{\P \in \PX : \int_\X c^p(x,y) \P(\d x) < \infty \; \forall y \in \X \} \subseteq \PX.
\end{talign*}
 Although computing the Wasserstein distance for general $\P$ and $\Q$ is usually not possible, it is straightforward to do so for empirical measures $\P^n$ and $\Q^m$ (see for example Chapter 3 in \cite{Peyre2019}):
\begin{talign*}
   W_{c,p}&(\P^n,\Q^m) = \left(\min\limits_{P} \sum_{i=1}^n \sum_{j=1}^m c^p(x_i,y_j) P_{ij}\right)^{\frac{1}{p}},
\end{talign*}
where the minimisation is performed over all $n \times m$ matrices such that $P_{ij} \neq 0$ $\forall i,j$, $\sum_{i=1}^n P_{ij} = \frac{1}{m}$ and $\sum_{j=1}^m P_{ij} = \frac{1}{n}$.
To approximate $W_{c,p}(\P,\Q^m)$, a natural approach is to use $W_{c,p}(\P^n,\Q^m)$, but this is known to have a slow convergence rate as $n$ increases whenever $d>1$ \cite{Fournier2015}. In the special case where $p=1$, the Wasserstein distance is an IPM which corresponds to taking $\mathcal{F}$ to be the set of functions with Lipschitz constant $1$: $\{f:\X \rightarrow \R \text{ s.t. }\forall x,y \in \X,  |f(x)-f(y)|\leq c(x,y)  \}$. This is therefore another setting of infinite-dimensional $\F$ where the supremum does not need to be computed numerically.

\paragraph{Sinkhorn Divergence} A common relaxation of the Wasserstein distance is the following: 
\begin{talign*}
    \bar{W}_{c,p,\lambda}(\P,\Q) & := \min\limits_{\gamma\in \Gamma(\P,\Q)}\int_{\mathcal{X}\times\mathcal{X}}c^{p}(x,y)\gamma(dx,dy) +\lambda H(\gamma\|\P\otimes\Q)\nonumber, \\
    H(\gamma\|\P\otimes\Q) & :=\int_{\mathcal{X}\times\mathcal{X}}\log\left(\frac{\gamma(dx,dy)}{\P(dx)\Q(dy)}\right)\gamma(dx,dy)
\end{talign*}
where $H(\pi \|\P\otimes\Q)$ is called the relative entropy, and $\P\otimes\Q$ is the product measure. Since this discrepancy is not normalised, it is common to work instead with the \emph{Sinkhorn divergence} \cite{Genevay2019}:
\begin{talign*}
    S_{c,p,\lambda}(\P,\Q)=\bar{W}_{c,p,\lambda}(\P,\Q)-\frac{1}{2}\left(\bar{W}_{c,p,\lambda}(\P,\P)+\bar{W}_{c,p,\lambda}(\Q,\Q)\right),
\end{talign*}
which guarantees the resulting value is greater or equal to zero. The Sinkhorn divergence is also symmetric, but does not satisfy the triangle inequality and so is not a metric. However, it does interpolate between the two IPMs we have seen so far: as $\lambda \rightarrow 0$, $S_{c,p,\lambda}(\P,\Q) \rightarrow W_{c,p}(\P,\Q)$, whereas when $\lambda \rightarrow \infty$, $S_{c,p,\lambda}(\P,\Q) \rightarrow \text{MMD}(\P,\Q)$ with kernel $k=-c$ \cite{Feydy2019}. Once again, it is straightforward to compute $S_{c,p,\lambda}(\P^n,\Q^m)$ in the case of empirical measures, and this can be used to estimate the exact Sinkhorn divergence: $S_{c,p,\lambda}(\P,\Q^m)$.
{\color{black}
From a computational viewpoint, one particular advantage of the Sinkhorn divergence over the Wasserstein distance is that it has better sample complexity when using Monte Carlo points in multiple dimensions \cite{Genevay2019}. We will return to this point in the next section on QMC sample complexity.}

\paragraph{Sliced Discrepancies} A final example of discrepancies commonly used for inference are the so-called \emph{sliced discrepancies} \cite{Kolouri2020}. The main motivation for these is to construct discrepancies which will be useful for high-dimensional problems. This is done by projecting probability distributions on $\X$ to probability distributions on some lower dimensional space $\mathcal{Y}$ (usually one dimension) using a map $\mathcal{S}_{\xi}: \PX \rightarrow \PY$, then comparing these projections using any discrepancy $D: \PY \times \PY \rightarrow [0,\infty)$, such as those discussed above. The corresponding sliced discrepancy consists of an average over possible projections:
\begin{talign*}
SD(\P,\Q) = \int_{\Xi} D(\mathcal{S}_\xi \P, \mathcal{S}_\xi \Q) d\xi,
\end{talign*}
where $\mathcal{S}_\xi \P, \mathcal{S}_\xi \Q$ are the projections of $\P,\Q$ along the direction $\xi \in \Xi$, and $\Xi$ is the space of directions considered. In order to compute the discrepancy, an MC estimator is used: $\widehat{SD}(\P,\Q) = \frac{1}{L} \sum_{l=1}^L D(\mathcal{S}_{\xi_l} \P, \mathcal{S}_{\xi_l} \Q)$ where $\{\xi_l\}_{l=1}^L$ are MC realisations from a uniform distribution over $\Xi$. The most common sliced-discrepancy is the sliced-Wasserstein distance $SW_{c,p}$ \cite{Deshpande2018,Wu2019,Nadjahi2019,Nguyen2020}, in which case $D$ is $W_{c,p}$ and the projections are constructed using the Radon transform.


\section{Sample Complexity with Quasi-Monte Carlo}
\label{sec:sample_complexity}

Now that we have introduced the main discrepancies which will be considered in this paper, we are ready to introduce our novel sample complexity results based on QMC and RQMC. We first introduce the methodology in Section \ref{sec:QMCgenerator}, then provide theoretical results demonstrating improved sample complexity for MMD in Section \ref{sec:sample_complexity_MMD} and for the Wasserstein distance and its Sinkhorn approximation in Section  \ref{sec:sample_complexity_Wasserstein} and \ref{sec:sample_complexity_sinkhorn} respectively. These results all build upon the work of \cite{Basu2016}, which considered the use of QMC for integrating compositions of functions.

\paragraph{Notation} For two sequences $\{f_n\}_{n \in \N}$ and $\{g_n\}_{n \in \N}$,  $f_n = O(g_n) \Leftrightarrow \limsup_{n \rightarrow \infty} |f_n/g_n| < \infty$. For some $f:\mathcal{X} \rightarrow \mathbb{R}$ and multi-index $\alpha =(\alpha_1,\ldots,\alpha_d) \in \mathbb{N}^d$, we will denote by $\partial^\alpha f$ the partial derivative $\partial^{|\alpha|}f/\partial^{\alpha_1} x_1 \ldots \partial^{\alpha_d}x_d$. The space $\mathcal{C}^{m}(\X)$ of $m$-continuously differentiable functions ($m \in \mathbb{N}$) corresponds to functions such that $\partial^\alpha f$ is continuous $\forall \alpha \in \mathbb{N}^d$ such that $|\alpha|=\alpha_{1}+\cdots+\alpha_{d} \leq m$. Similarly, $\mathcal{C}^{m,m}(\X \times \X)$ will denote functions $f:\X \times \X \rightarrow \R$ such that $\partial^{\alpha,\alpha}f$ exists and is continuous $\forall \alpha \in \mathbb{N}^d$ with $|\alpha|\leq m$. {\color{black} Relatedly, if we have a set $\beta \subseteq 1:d$, we write $\partial_{\beta}$ to denote the (first-order) mixed partial derivatives of $f$ with respect to the coordinates in the set $\beta$}. Finally, we will write $L^{p}(\X)$ to denote the $p$-integrable functions; i.e. $f:\X \rightarrow \R$ satisfying $\|f\|_{L^p(\X)} := (\int_{\X} f^p(x) dx)^{\frac{1}{p}} < \infty$ (where we will use the common abuse of terminology to avoid technicalities with equivalence classes).

\subsection{Enhancing Sample Diversity through quasi-Monte Carlo}\label{sec:QMCgenerator}

Recall that to obtain realisations $\{x_i\}_{i=1}^n$ from $\P_\theta$, the generative approach consist of obtaining realisations $\{u_i\}_{i=1}^n \sim \text{Unif}([0,1]^s)$, then mapping these through the generator: $x_i = G_{\theta}(u_i)$. Under sufficient regularity conditions on $G_\theta$, we would expect two realisations $x_1,x_2$ to be far from one another whenever $u_1,u_2$ are also far from one another. The main idea in this paper is that we may improve sample diversity by selecting $\{u_i\}_{i=1}^n$ according to a QMC point set. This notion of diversity is usually measured through the \emph{star-discrepancy} of a point set:
\begin{talign*}
    D^{*}(\{u_i\}_{i=1}^n):=\sup_{v\in[0,1]^{s}}\left|\frac{1}{n} \sum_{i=1}^n \bm{1}_{ [0,v)}(u_i)-\prod_{i=1}^{s}v_{i}\right|.
\end{talign*}
We will call a point set $\{u_i\}_{i=1}^n$ such that $D^{*}(\{u_i\}_{i=1}^n)=O(n^{-1} (\log{n})^{{\alpha}_s}) $  for some ${\alpha}_s>0$ as $ n\rightarrow\infty $  a \emph{QMC point set}, and $\alpha_s$ will usually depend on the dimensionality $s$ of the domain $\mathcal{U}$. This is also sometimes referred to as a low-discrepancy point set, but we will avoid this terminology to avoid any confusion between discrepancies on probability distributions and the star discrepancy. Popular constructions \cite{Dick2010} include Hammersley point sets, which are based on infinite van der Corput sequence, and can achieve ${\alpha}_s=s-1$. Alternatively, lattice point sets achieve ${\alpha}_2=2$ and ${\alpha}_s=s$ for $s\geq 3$, $(t,m,s)-$nets in base b achieve ${\alpha}_s=s-1$, and the Halton sequence achieves $\alpha_s = s$. 

Bounds on $D^{*}(\{u_i\}_{i=1}^n)$ are particularly useful since they provide bounds on the integration error for an estimate $\frac{1}{n} \sum_{i=1}^n f(u_i)$ of some real-valued function $f:[0,1]^s \rightarrow \R$ whenever it has bounded Hardy-Krause variation, which will be denoted by $\VHK(f)$. Since the notation for the Hardy-Krause variation is rather involved, we refer the reader to Appendix \ref{appendix:background} for details.

Related constructions are the \emph{randomized QMC (RQMC) point sets}, which are sets of points $\{u_i\}_{i=1}^n$ with distribution $\text{Unif}([0,1]^s)$ such that $\exists N,B>0$ such that for $\forall n \geq N$, $D^*(\{u_i\}_{i=1}^n) \leq B (\log n)^{\alpha_s}n^{-1}$ with probability $1$ for some $\alpha_s>0$. The most common approach to construct these consists of ``scrambling'' a QMC point set, which consists of applying random transformations which preserve the low discrepancy structure. This allows those point sets to be used to obtain unbiased estimates of integrals of some functions against $[0,1]^s$. Details on the construction of the scrambled points can be found in Chapter 17 in \cite{Owen2013}. 

{\color{black}
In the remainder, we will provide technical conditions on $\X$ and $G_\theta$ so that for any $D$ amongst the discrepancies previously mentioned and assuming we use $n$ QMC points, we have
\begin{align*}
|D(\P_\theta,\Q^m) - D(\P_\theta^n,\Q^m)| = O(n^{-1} (\log n)^{\alpha_s}).
\end{align*}
This is an improvement on the MC rate for which the rate would be $O(n^{-\frac{1}{2}})$.
Since the cost of generating MC or QMC realisations is linear in the number of samples, a natural approach to balance the error in $n$ and $m$ of estimating $D(\P_\theta,\Q)$ is to take $n$ growing with $\sqrt{m}$. Note however that this optimal scaling is asymptotic and relies on a number of unknown constants dependent on the QMC point set used and the cost of evaluating the generator. This scaling will be studied further in the experiments.
}
{
\color{black}
\subsection{Technical Assumptions}\label{sec:technical_conditions}

Before stating our sample complexity results, we introduce and discuss the assumptions that will be required. Our first assumption concerns the domain of the generator and the point sets:
\begin{assumption}\label{assumption:points}
Given a model $\P_\theta$ with generative process $(\text{Unif}([0,1]^s), G_\theta)$, we assume we have access to $x_i = G_\theta(u_i)$ for $i=1,\ldots,n$
where $\{u_i\}_{i=1}^n \subset [0,1]^s$ form a QMC or RQMC point set for some $\alpha_s > 0$. Furthermore, we write $\P^n_\theta =\frac{1}{n}\sum_{i=1}^n \delta_{x_i}$.
\end{assumption}

This assumption is very mild since it only assumes we can write the generative model in terms of a generator mapping from $[0,1]^s$ (which is always possible due to Sklar's theorem) and that we have access to a QMC or RQMC point set such as those mentioned above. Such point sets are widely available, for example in Python through the packages \texttt{SciPy} \cite{SciPy2020} and \texttt{QMCPy} \cite{Choi2020}.

For the MMD and Sinkhorn divergence results, we will also require a second assumption on the generator. For this, we will use the notation $a_{v}:b_{-v}$ to represent a point $u\in[a,b]^{s}$ with $u_{j}=a_{j}$ for $j\in v$, and $u_{j}=b_{j}$ for $j\notin v$; see Appendix \ref{appendix:background} for more details. 
\begin{assumption}\label{assumptions:generator}
The generator is a map $G_\theta:[0,1]^s \rightarrow \X$ where:
\begin{enumerate}
     \item $\partial^{(1,\ldots,1)} (G_\theta)_j \in \mathcal{C}([0,1]^s)$ for all $j = 1, \ldots, d$. 
    \item $\partial^{v} (G_\theta)_j(\cdot:1_{-v}) \in L^{p_j}([0,1]^{|v|})$ for all $j = 1, \ldots, d$ and  $v \in \{0,1\}^s\setminus(0,\ldots,0)$, where $p_j \in [1,\infty]$ and $\sum_{j=1}^d p_j^{-1} \leq 1$.
\end{enumerate}
\end{assumption}
Assumption 2.1 is fairly straightforward and simply requires that the mixed partial derivative of the generator with respect to each coordinate is a continuous function, which is usually a condition which should be easy to verify (this needs to be done on a case-by-case basis). For example, in the case of neural network-based generators, the chain rule guarantees that this assumption will be satisfied whenever the activation functions are smooth enough. This is for example the case for the logistic, hyperbolic tangent, Gaussian, softplus and softmax activation functions which are all infinitely differentiable. However, neural generators with less regular activation functions such as the rectified linear unit will not satisfy the condition.  

Assumption 2.2 requires certain integrability conditions for derivatives of the generator. When $\X$ is compact, it follows directly from the first condition. However, this is not true when $\X$ is not bounded and the requirement that $\sum_{j=1}^d p_j^{-d} \leq 1$ is slightly harder to satisfy in that case, especially for high dimensional problems. One straightforward, but rather restrictive, way of guaranteeing the condition is to enforce that derivatives of the form $\partial^{v} (G_\theta)_j(\cdot:1_{-v})$ are all bounded. Alternatively, we could require that $\partial^{v} (G_\theta)_j \in L^{p_j}([0,1]^{|v|})$ for all $j = 1, \ldots, d$ and $v \in \{0,1\}^s\setminus(0,\ldots,0)$, where $p_j \in [1,\infty]$ and $\sum_{j=1}^d p_j^{-1} \leq 1/2$; see Corollary 7 of \cite{Basu2016} for a more detailed discussion. This holds for example when the generator has bounded derivatives. 
}

\subsection{Sample Complexity for Maximum Mean Discrepancy}\label{sec:sample_complexity_MMD}

We are now ready to present our sample complexity results. Our first set of results will provide sufficient conditions on $k$ and $G_\theta$ to guarantee improved sample complexity by the use of (R)QMC point sets. We say that a kernel $k$ is bounded if  $\exists C>0$ such that $\sup_{x,x' \in \mathcal{X}} |k(x,x')| \leq C$. Before presenting this result, we briefly recall a result using IID samples which will be used as a reference. 
\begin{proposition}[Lemma 1 in \cite{Briol2019}]\label{prop:IID_Concentration_ineq}
Assume that $k$ is bounded and let $\P \in \Pk$. Let $\P^{n} = \frac{1}{n} \sum_{i=1}^n \delta_{x_i}$ where $\{x_i\}_{i=1}^n$ are IID realisations from $\P$. Then, with probability $1-\delta$:
\begin{talign*}
\mMMD(\P,\P^{n}) = O(n^{-\frac{1}{2}}) \sqrt{\log(\delta^{-1})}.
\end{talign*}
\end{proposition}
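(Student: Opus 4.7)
The plan is to exploit the Hilbert space structure of the MMD. Recall that $\mMMD(\P, \P^n) = \|\mu_\P - \mu_{\P^n}\|_{\Hk}$ with $\mu_\P = \int_\X k(\cdot, x) \P(\d x)$ and $\mu_{\P^n} = \frac{1}{n}\sum_{i=1}^n k(\cdot, x_i)$, so the assumption $\P \in \Pk$ together with kernel boundedness (say $\sup_{x,x'} |k(x,x')| \leq C$) ensures both embeddings live in $\Hk$ and, via the reproducing property, that $\|k(\cdot, x)\|_{\Hk} = \sqrt{k(x,x)} \leq \sqrt{C}$ for all $x$. From here the argument follows the standard two-step template: first bound $\E[\mMMD(\P,\P^n)]$, then deviate from the expectation via a concentration inequality.

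For the first step, I would pass to the squared MMD by Jensen's inequality, $\E[\mMMD(\P,\P^n)] \leq \sqrt{\E[\mMMD^2(\P,\P^n)]}$, and expand $\mMMD^2(\P,\P^n) = \|\mu_{\P^n}\|_{\Hk}^2 - 2\langle \mu_{\P^n}, \mu_\P\rangle_{\Hk} + \|\mu_\P\|_{\Hk}^2$. The cross term has expectation $-2\|\mu_\P\|_{\Hk}^2$ by unbiasedness of $\mu_{\P^n}$, while the $n$ diagonal entries of $\|\mu_{\P^n}\|_{\Hk}^2$ contribute $n^{-1}\E[k(X,X)]$ and the $n(n-1)$ off-diagonal entries contribute $(1-n^{-1})\|\mu_\P\|_{\Hk}^2$, using $\|\mu_\P\|_{\Hk}^2 = \E[k(X,X')]$ for independent copies $X, X' \sim \P$. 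Collecting these terms yields $\E[\mMMD^2(\P,\P^n)] = n^{-1}(\E[k(X,X)] - \E[k(X,X')]) \leq 2C/n$, so $\E[\mMMD(\P,\P^n)] \leq \sqrt{2C/n}$.

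For the second step, I would view $Z = \mMMD(\P, \P^n)$ as a function of the IID sample $(x_1, \ldots, x_n)$ and apply McDiarmid's bounded differences inequality. By the triangle inequality in $\Hk$, replacing one coordinate $x_i$ by $x_i'$ alters $Z$ by at most $n^{-1}\|k(\cdot, x_i) - k(\cdot, x_i')\|_{\Hk} \leq 2\sqrt{C}/n$. McDiarmid then gives $\mathbb{P}(Z - \E[Z] \geq t) \leq \exp(-nt^2/(2C))$ for any $t > 0$; equating the right-hand side to $\delta$ produces a deviation of order $n^{-1/2}\sqrt{\log(\delta^{-1})}$. Summing this with the expectation bound from the first step yields the claimed $O(n^{-1/2})\sqrt{\log(\delta^{-1})}$ rate.

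The main technical point, rather than a real obstacle, is verifying the bounded-differences constant: it hinges entirely on the reproducing property $\|k(\cdot, x)\|_{\Hk}^2 = k(x,x)$, which upgrades the scalar bound on $k$ to a Hilbert-norm bound on each summand in $\mu_{\P^n}$, giving the crucial $O(1/n)$ coordinate sensitivity that McDiarmid requires. Everything else reduces to routine Hilbert-space expansions and a single application of a textbook concentration inequality.
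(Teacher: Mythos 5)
Your proof is correct and is essentially the standard argument behind this result: the paper itself gives no proof, deferring entirely to Lemma 1 of the cited reference, whose proof is exactly your two-step template of bounding $\E[\mMMD(\P,\P^n)]$ via Jensen and the variance expansion (yielding $\sqrt{2C/n}$) and then applying McDiarmid with coordinate sensitivity $2\sqrt{C}/n$. Note only that your final bound is of the form $O(n^{-1/2})\bigl(1+\sqrt{\log(\delta^{-1})}\bigr)$, which is the precise statement in the reference; the proposition's display absorbs the additive constant into the big-$O$ as part of its acknowledged simplification.
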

We also only provide a simplified version of the statement which does not make the constants explicit for simplicity. It is also possible to obtain similar results for convergence of the MMD in the case of dependent realisations; see \cite{CheriefAbdellatif2019}. Although the rate in $n$ is independent of dimensions, we will require a large number of samples in order to converge to zero due to the small exponent. The original statement is valid for finite $n$, but we present it in this asymptotic form for ease of comparison with the QMC/RQMC result below.

We now present a new sample complexity for $\mMMD$ using QMC sequences. To do so, we need to show that the space of functions of the form $f \circ G_\theta$ for $f \in \Hk$ is continuously embedded into a space for which QMC can provide fast convergence rates. This is a challenging task, as was highlighted by \cite{Li2020}, and we provide an auxiliary theorem for this (Theorem \ref{thm:general_Sbolev_composition}) in Appendix \ref{appendix:proof_preliminary}. For this theorem to hold, we show that sufficient conditions can be obtained by ensuring that the generator $G_\theta$ and domain $\X$ are regular enough. 

{\color{black}
\begin{theorem}\label{thm:QMC_concentration_ineq}
Let $\P_\theta \in \Pk$ and suppose Assumption \ref{assumption:points} and \ref{assumptions:generator} hold. 
Further assume that $k \in \mathcal{C}^{s,s}(\X)$ and $\forall t\in\mathbb{N}_0^{d},|t|\leq s,\sup_{x\in\X}\partial^{t,t}k(x,x)<C_k$ where $C_k$ is some universal constant depending only on $k$.
Then, 
    \begin{talign*}
    \mMMD(\P_{\theta},\P_{\theta}^{n})=O(n^{-1}{\color{black}(\log{n})^{{\alpha}_s}}). 
    \end{talign*}
\end{theorem}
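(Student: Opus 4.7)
The plan is to exploit the integral probability metric representation of the MMD, reduce the problem to a Koksma--Hlawka style quadrature bound on $[0,1]^s$, and then uniformly control the Hardy--Krause variation of $f\circ G_\theta$ over the unit ball of $\Hk$.

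First, since $\P_\theta$ is the pushforward of $\text{Unif}([0,1]^s)$ under $G_\theta$, for any $f\in\Hk$ we have
\begin{talign*}
\int_{\X} f\,\d\P_\theta - \int_{\X} f\,\d\P_\theta^n = \int_{[0,1]^s}(f\circ G_\theta)(u)\,du - \frac{1}{n}\sum_{i=1}^n(f\circ G_\theta)(u_i),
\end{talign*}
and the Koksma--Hlawka inequality bounds the absolute value of the right-hand side by $D^*(\{u_i\}_{i=1}^n)\cdot\VHK(f\circ G_\theta)$. Taking the supremum over $\|f\|_{\Hk}\leq 1$ and using the IPM representation $\mMMD(\P_\theta,\P_\theta^n) = \sup_{\|f\|_{\Hk}\leq 1}|\int f\,\d\P_\theta - \int f\,\d\P_\theta^n|$, together with the (R)QMC rate $D^*(\{u_i\}_{i=1}^n)=O(n^{-1}(\log n)^{\alpha_s})$ afforded by Assumption \ref{assumption:points}, reduces the proof to showing that $\sup_{\|f\|_{\Hk}\leq 1}\VHK(f\circ G_\theta)<\infty$.

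For the latter, I would invoke the composition result flagged in the excerpt (the auxiliary Theorem \ref{thm:general_Sbolev_composition}), which, under Assumption \ref{assumptions:generator}, bounds $\VHK(h\circ G_\theta)$ for any sufficiently smooth $h:\X\to\R$ by an explicit constant depending on $\sup_{x\in\X}|\partial^\alpha h(x)|$ for multi-indices $|\alpha|\leq s$ and on $L^{p_j}$-norms of derivatives of the coordinates of $G_\theta$. Taking $h=f$, the task reduces to uniformly bounding $\sup_{x\in\X}|\partial^\alpha f(x)|$ over the RKHS unit ball. Here the kernel smoothness enters: because $k\in\mathcal{C}^{s,s}(\X)$, standard RKHS derivative reproducing theory gives $\partial^{\alpha,0}k(\cdot,x)\in\Hk$ with $\partial^\alpha f(x)=\langle f,\partial^{\alpha,0}k(\cdot,x)\rangle_{\Hk}$ and $\|\partial^{\alpha,0}k(\cdot,x)\|_{\Hk}^2 = \partial^{\alpha,\alpha}k(x,x)\leq C_k$, so Cauchy--Schwarz yields $|\partial^\alpha f(x)|\leq \sqrt{C_k}\,\|f\|_{\Hk}$, uniformly in $x$ and in $f$ over the unit ball. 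Plugging this back into the composition bound closes the argument.

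The main obstacle is the composition step: $\VHK$ is a sum over all coordinate subsets $v\subseteq\{1,\ldots,s\}$ of $L^1$-norms of the mixed partial derivatives $\partial^v(f\circ G_\theta)$, and a Fa\`a di Bruno type expansion rewrites each such derivative as a sum of products of partial derivatives of $f$ (evaluated at $G_\theta(u)$) with partial derivatives of various orders of the coordinates of $G_\theta$. Verifying integrability of every resulting product on $[0,1]^{|v|}$ is exactly what forces the H\"older-type exponent condition $\sum_j p_j^{-1}\leq 1$ in Assumption \ref{assumptions:generator}.2, and the combinatorial bookkeeping is precisely what the auxiliary theorem delivers, extending the composition machinery of Basu and Owen (2016) from scalar integrands to the RKHS-indexed family $\{f\circ G_\theta:\|f\|_{\Hk}\leq 1\}$.
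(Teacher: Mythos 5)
Your proposal is correct and follows essentially the same route as the paper: reduce via the IPM representation and the Koksma--Hlawka inequality to a uniform bound on $\VHK(f\circ G_\theta)$ over the unit ball of $\Hk$, then obtain that bound from the Fa\`a di Bruno/H\"older composition result (Theorem \ref{thm:general_Sbolev_composition}) combined with the derivative-reproducing property $|\partial^\alpha f(x)|\leq (\partial^{\alpha,\alpha}k(x,x))^{1/2}\|f\|_{\Hk}\leq C_k^{1/2}\|f\|_{\Hk}$. The paper packages the latter two ingredients inside its auxiliary theorem (citing Corollary 4.36 of \cite{Steinwart2008} for the derivative bound), but the substance of your argument is identical.
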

A direct implication is the following corollary, which follows from the triangle inequality.
\begin{corollary}\label{cor:MMD}
Suppose the conditions in Theorem \ref{thm:QMC_concentration_ineq} hold. Then,
\begin{align*}
    \left|\mMMD(\P_\theta,\Q^{m})-\mMMD(\P^{n}_\theta,\Q^{m})\right|=O(n^{-1}(\log{n})^{{\alpha}_s}). 
\end{align*}
\end{corollary}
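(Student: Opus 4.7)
The plan is to derive the corollary as an essentially immediate consequence of the preceding theorem via the reverse triangle inequality for the MMD. Recall from Section \ref{sec:distances} that the MMD is a metric on $\Pk$ whenever the kernel $k$ is characteristic, which is implicit in the setup of Theorem \ref{thm:QMC_concentration_ineq} (the hypothesis $\P_\theta \in \Pk$ ensures the relevant quantities are well-defined, and characteristicness is assumed throughout Section \ref{sec:distances}). The empirical measures $\P_\theta^n$ and $\Q^m$ are supported on finitely many points, so they trivially lie in $\Pk$ whenever the kernel is bounded on their supports, which follows from the assumption $\sup_{x\in\X}\partial^{t,t}k(x,x) < C_k$ specialised to $t=0$.

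First I would invoke the reverse triangle inequality, which holds for any metric $D$ on a common domain: for all $\P_1, \P_2, \P_3$,
\begin{align*}
|D(\P_1, \P_3) - D(\P_2, \P_3)| \leq D(\P_1, \P_2).
\end{align*}
Applying this with $D = \mMMD$, $\P_1 = \P_\theta$, $\P_2 = \P_\theta^n$ and $\P_3 = \Q^m$ yields
\begin{align*}
|\mMMD(\P_\theta, \Q^m) - \mMMD(\P_\theta^n, \Q^m)| \leq \mMMD(\P_\theta, \P_\theta^n).
\end{align*}

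Next I would simply invoke Theorem \ref{thm:QMC_concentration_ineq}, which under the stated assumptions on the kernel $k$, the generator $G_\theta$ and the QMC point set $\{u_i\}_{i=1}^n$ gives $\mMMD(\P_\theta, \P_\theta^n) = O(n^{-1}(\log n)^{\alpha_s})$. Combining the two displays yields the claimed rate. Since there is essentially nothing to prove beyond chaining two well-known facts, there is no real obstacle here; the entire content of the corollary is already encapsulated in Theorem \ref{thm:QMC_concentration_ineq}, and the corollary merely translates the statement into the practically relevant form that concerns estimation of the discrepancy between the model $\P_\theta$ and the empirical data distribution $\Q^m$ via the simulator-based proxy $\P_\theta^n$.
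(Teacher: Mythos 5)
Your proposal is correct and follows exactly the paper's own argument: the reverse triangle inequality for the MMD metric bounds the difference by $\mMMD(\P_\theta,\P_\theta^n)$, and Theorem \ref{thm:QMC_concentration_ineq} supplies the rate. Nothing to add.
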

}

The proof is in Appendix \ref{appendix:proof_MMD}. When using QMC, our result is only valid asymptotically in $n$, whereas for MC the result is also valid for finite $n$, although it only holds with probability $1-\delta$. When using a RQMC point set the result above holds with probability $1$ for finite but large enough $n$. As compared to Proposition \ref{prop:IID_Concentration_ineq}, this theorem requires additional regularity from the generator (as per Assumption \ref{assumptions:generator}), but also smoothness for $k$. It does however provide a significantly faster convergence rate. The smoothness condition for the kernel is always satisfied in the case of the Gaussian kernel since it is infinitely differentiable; see Section 4.4. of \cite{Steinwart2008}. Note that Theorem \ref{thm:QMC_concentration_ineq} has direct implications for the work of \cite{Hofert2018,Hofert2020}, which considered the use of QMC sampling in the context of MMD generative adversarial networks.

\subsection{Sample Complexity for the Wasserstein Distance} \label{sec:sample_complexity_Wasserstein} 

The main competitor to $\mMMD$ for inference in generative models is the Wasserstein distance. An interesting question is therefore whether QMC can also lead to improved sample complexity results in this setting. We first recall a result for the case of MC realisations. Extensions of this result to dependent realisations can also be found in \cite{Fournier2015}. 
\begin{proposition}[Theorem 1 in \cite{Fournier2015}, simplified] \label{prop:fournier}
Let $\X = \R^d$, $p> 0 $, $c$ be a metric and $\P \in\mathcal{P}_{c,q}(\mathcal{X})$ for $q > p$. Let $\P^n$ be the empirical measure obtained from $n$ IID realisations of $\P$. Then,
\begin{talign*}
    \E\left[W_{c,p}^p(\P,\P^n)\right]=
\begin{cases}
O \left(n^{-\frac{1}{2}}+n^{-\frac{(q-p)}{q}}\right) & \text{ if } p>\frac{d}{2} \text{ and } q \neq 2p.\\
O \left(n^{-\frac{1}{2}}\log(1+n)+ n^{-\frac{(q-p)}{q}}\right)  & \text{ if } p=\frac{d}{2} \text{ and } q \neq 2p.\\
O\left(n^{-\frac{p}{d}} + n^{-\frac{(q-p)}{q}}\right)  & \text{ if } p \in \left[1,\frac{d}{2}\right) \text{ and } q \neq \frac{d}{(d-p)}.
\end{cases}
\end{talign*}
\end{proposition}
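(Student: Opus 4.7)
My plan is to follow the classical dyadic / multiresolution argument for convergence rates of empirical measures in Wasserstein distance. The three regimes in the statement come directly from whether a geometric series in the dyadic level index converges, diverges logarithmically, or diverges polynomially, and the extra $n^{-(q-p)/q}$ term arises from a moment-based truncation of the unbounded support. Concretely, I would fix a radius $R>0$, split both $\P$ and $\P^n$ into their restrictions to $B_R=\{x\in\X:c(x,x_0)\leq R\}$ and its complement for a reference point $x_0$, and use the $q$-moment bound together with Markov's inequality to show that the expected $W_{c,p}^p$-cost of transporting the two tails is $O(R^{p-q})$. What then remains is to bound $\E[W_{c,p}^p(\P|_{B_R},\P^n|_{B_R})]$ and to optimize $R$ in $n$ at the end.

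For the bounded piece I would use a nested dyadic grid of $B_R$: at level $k$, partition $B_R$ into $O(2^{kd})$ cubes $Q$ of diameter $\asymp R\, 2^{-k}$. A tree-transport plan, which moves mass from each child cube up to its parent and back down to the matching child on the empirical side, yields
\begin{talign*}
W_{c,p}^p\!\left(\P|_{B_R},\P^n|_{B_R}\right) \;\lesssim\; R^p \sum_{k=0}^{K} 2^{-kp} \sum_{Q:\,\text{level }k} \bigl|\P(Q)-\P^n(Q)\bigr|,
\end{talign*}
for a truncation level $K$ to be chosen. Taking expectations, using the binomial variance bound $\E\bigl|\P^n(Q)-\P(Q)\bigr|\leq \sqrt{\P(Q)/n}$, and applying Cauchy--Schwarz against the $O(2^{kd})$ cubes at level $k$ gives
\begin{talign*}
\E\!\left[W_{c,p}^p(\P|_{B_R},\P^n|_{B_R})\right] \;\lesssim\; R^p\, n^{-\frac{1}{2}}\sum_{k=0}^{K} 2^{k(d/2-p)}.
\end{talign*}

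Reading off the three regimes now amounts to summing this geometric series. If $p>d/2$ the sum converges uniformly in $K$, giving rate $n^{-1/2}$; if $p=d/2$ the sum equals $K+1$, and the optimal choice $K\asymp \log n$ yields the $n^{-1/2}\log(1+n)$ term; if $p<d/2$ the sum is dominated by its top level, and balancing the bias from stopping refinement at scale $2^{-K}$ against the variance from empirical counts gives $K\asymp d^{-1}\log_2 n$ and rate $n^{-p/d}$. Finally, adding back the tail contribution $R^{p-q}$ and optimizing $R$ against the dominant rate produces the additive $n^{-(q-p)/q}$ term in each of the three cases, which matches the exceptional condition $q\neq 2p$ (respectively $q\neq d/(d-p)$) that rules out the degenerate balance where the two contributions coincide.

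The main obstacle, and the technical heart of the argument, is making the tree-transport bound rigorous with the right constants for a general metric $c$ on $\R^d$: one must carefully relate $c$ to the Euclidean geometry of the dyadic cubes, and confirm that $\P\in\mathcal{P}_{c,q}(\X)$ is enough to control both the expected tail of $\P^n$ (via Fubini) and the cost of transporting it. A secondary subtlety is the low-regularity regime $p<d/2$, where the bias--variance trade-off must reproduce the $n^{-p/d}$ rate exactly, with no spurious logarithmic loss; this forces one to be careful that the constants in the per-level cube count $O(2^{kd})$ do not accumulate multiplicatively across the $K\asymp d^{-1}\log_2 n$ levels.
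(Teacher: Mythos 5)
First, note that the paper does not prove this proposition: it is quoted (in simplified form) from Theorem 1 of Fournier and Guillin (2015), so there is no in-paper proof to compare against; your sketch has to be judged against the argument in that reference. Your treatment of the bounded piece is essentially the right mechanism and matches the cited proof: the dyadic tree-transport bound, the bound $\E|\P^n(Q)-\P(Q)|\leq\sqrt{\P(Q)/n}$, Cauchy--Schwarz over the $O(2^{kd})$ cubes, and the trichotomy of the geometric series $\sum_k 2^{k(d/2-p)}$ correctly produce the three rates $n^{-1/2}$, $n^{-1/2}\log(1+n)$, and $n^{-p/d}$ for compactly supported $\P$.

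The genuine gap is in the passage to unbounded support. With a single truncation radius $R$, your two contributions are $R^{p}\cdot(\text{bounded-case rate})$ and $R^{p-q}$ for the tails; in the regime $p>d/2$ this is $R^{p}n^{-1/2}+R^{p-q}$, whose optimum over $R$ is $n^{-(q-p)/(2q)}$ — strictly worse than the claimed $O(n^{-1/2}+n^{-(q-p)/q})$ for every admissible $q>p$ (indeed one can check that no choice of $R$ makes both terms simultaneously $O(n^{-1/2}+n^{-(q-p)/q})$ unless $p/q\le 0$). The culprit is the factor $R^{p}$ in front of the bounded-case rate: applying Cauchy--Schwarz uniformly over all $\asymp 2^{kd}$ cubes of $B_R$ is wasteful, because under the $q$-moment condition almost all of those cubes carry negligible mass. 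The cited proof avoids this by decomposing $\R^d$ into dyadic annuli $2^{j}\leq |x|<2^{j+1}$, applying on each annulus and each resolution level $k$ the \emph{minimum} of the trivial mass bound $\P(\text{annulus})$ and the CLT bound $\sqrt{\P(\text{annulus})}\,2^{kd/2}n^{-1/2}$, and then summing the resulting double series using $\P(|x|\geq 2^{j})\leq M_q 2^{-jq}$. It is the competition between exponents in the annulus index $j$ of that double sum — not a balance between a main term and a tail term at a single optimized radius — that produces the exponent $(q-p)/q$ and explains the exceptional values $q=2p$ and $q=d/(d-p)$ (where the series in $j$ becomes critical and acquires a logarithm). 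Your explanation of those exceptional conditions as "the degenerate balance where the two contributions coincide" therefore also does not survive: in your scheme the two contributions coincide at the optimal $R$ by construction, for every $q$. To repair the sketch you would need to replace the single-radius truncation by this weighted annulus decomposition (or an equivalent chaining over both scale and distance from the origin).
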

The result above is in expectation, but leads directly to a result in probability using Markov's inequality. 
{\color{black} This result shows a significant disadvantage of using the Wasserstein distance for inference in generative models from a computational viewpoint}: it suffers from the curse of dimensionality when $p$ is small relative to $d$ (the scenario most common in practice).
Indeed, in the third case considered above the $n$ required to estimate the distance accurately increases exponentially quickly with $d$. 

The case most commonly considered in practice for inference in generative models is $p=1$ (see for example \cite{Bernton2017,Bernton2019}), in which case the rate is $O(n^{-1/2})$ if $d=1$, $O(n^{-1/2}\log(1+n))$ if $d=2$, and $O(n^{-1/d})$ for $d \geq 3$.  In the next result, we derive a novel result to show the impact of the use of QMC point sets to estimate the Wasserstein distance when $q=1$, in which case the Wasserstein is an IPM. The proof is in Appendix \ref{appendix:proof_Wasserstein}.
{\color{black}
\begin{theorem}\label{thm:Wasserstein_sample_complexity_QMC}
Let $\P_\theta \in \mathcal{P}_{c,1}(\X)$ where $c(x,y)=\|x-y\|$ for some norm $\|\cdot\|$ on $\X \subseteq \R^d$. Suppose that Assumption \ref{assumption:points} holds with $s=d=1$, and assume that $\VHK(G_\theta) < \infty$. Then,  
\begin{talign*}
W_{c,1}(\P_\theta,\P_\theta^n) = O(n^{-1}{\color{black}(\log{n})^{{\alpha}_s}}).
\end{talign*}
\end{theorem}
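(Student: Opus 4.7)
The plan is to exploit the dual (integral probability metric) representation of the Wasserstein-1 distance and reduce the problem to a standard QMC integration bound. Since $p=1$, we may write
\begin{talign*}
W_{c,1}(\P_\theta,\P_\theta^n) = \sup_{f:\, \mathrm{Lip}(f)\leq 1} \left|\int_\X f(x)\,\P_\theta(\d x) - \int_\X f(x)\,\P_\theta^n(\d x)\right|,
\end{talign*}
where $\mathrm{Lip}(f)$ denotes the Lipschitz constant of $f$ with respect to $c=\|\cdot\|$. Because $\P_\theta-\P_\theta^n$ is a signed measure with total mass zero, we may shift $f$ by a constant without altering the integrand, so we can assume $f(x_0)=0$ for a fixed reference point $x_0$; the bound $|f(x)|\leq \|x-x_0\|$ then guarantees integrability against $\P_\theta\in\mathcal{P}_{c,1}(\X)$.

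Next, I would use the generative representation $x_i = G_\theta(u_i)$ with $\{u_i\}_{i=1}^n\subset[0,1]$ a QMC (or RQMC) point set, and rewrite the difference of expectations as a one-dimensional QMC integration error:
\begin{talign*}
\int_\X f\,\d\P_\theta - \int_\X f\,\d\P_\theta^n = \int_{[0,1]} (f\circ G_\theta)(u)\,\d u - \frac{1}{n}\sum_{i=1}^n (f\circ G_\theta)(u_i).
\end{talign*}
Applying the Koksma--Hlawka inequality (see Appendix \ref{appendix:background}) to the integrand $f\circ G_\theta$, I obtain the bound $\VHK(f\circ G_\theta)\cdot D^*(\{u_i\}_{i=1}^n)$. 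In the one-dimensional case $s=1$, Hardy--Krause variation coincides with ordinary total variation, and for any 1-Lipschitz $f$ and any function $g$ of bounded variation, the elementary inequality $\sum_i |f(g(t_{i+1}))-f(g(t_i))| \leq \sum_i |g(t_{i+1})-g(t_i)|$ over partitions yields
\begin{talign*}
\VHK(f\circ G_\theta) \leq \mathrm{Lip}(f)\cdot \VHK(G_\theta) \leq \VHK(G_\theta).
\end{talign*}

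Combining these two ingredients and using the QMC star-discrepancy rate $D^*(\{u_i\}_{i=1}^n) = O(n^{-1}(\log n)^{\alpha_s})$ guaranteed by Assumption \ref{assumption:points}, the bound is uniform in $f$, so taking the supremum over 1-Lipschitz functions yields the claimed rate. In the RQMC case, the bound holds with probability one for $n$ large enough, using the analogous almost-sure control of $D^*$.

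The main obstacle is really a bookkeeping one rather than a deep technical point: I need to ensure that Koksma--Hlawka applies to $f\circ G_\theta$ even when $f$ is unbounded, which is why the normalisation $f(x_0)=0$ combined with the assumption $\P_\theta \in \mathcal{P}_{c,1}(\X)$ is important (it makes the integral well-defined and the variation finite, since $f\circ G_\theta$ is then bounded on the range of $G_\theta$ whenever $G_\theta$ itself has bounded variation). Once these regularity issues are handled, the argument is essentially a one-line chain: IPM dual form, Koksma--Hlawka, and the Lipschitz-times-BV inequality.
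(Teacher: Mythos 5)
Your proposal is correct and follows essentially the same route as the paper's proof: Kantorovich--Rubinstein duality, pullback through $G_\theta$ to a one-dimensional integration error, Koksma--Hlawka, and the bound $\VHK(f\circ G_\theta)\leq \mathrm{Lip}(f)\,\VHK(G_\theta)$ (the paper carries an extra constant $M$ from the equivalence of the norm $\|\cdot\|$ with $|\cdot|$ on $\R$, which your version implicitly absorbs into the $O(\cdot)$). Your additional remark about normalising $f$ at a reference point to ensure integrability is a sensible refinement that the paper's proof leaves implicit.
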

}

{\color{black} Since our goal is to approximate $W_{c,1}(\P_\theta,\Q^{m})$ with $W_{c,1}(\P_\theta^{n},\Q^{m})$, we also consider:
\begin{corollary}\label{cor:wasserstein}
Suppose the conditions in Theorem \ref{thm:Wasserstein_sample_complexity_QMC} hold. Then,
\begin{align*}
    \left|W_{c,1}(\P_\theta,\Q^{m})-W_{c,1}(\P_\theta^{n},\Q^{m})\right|=O(n^{-1}(\log{n})^{{\alpha}_s}).
\end{align*}
\end{corollary}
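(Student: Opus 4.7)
The plan is to reduce the corollary to Theorem \ref{thm:Wasserstein_sample_complexity_QMC} via a reverse-triangle-inequality argument. Since $c$ is assumed to be a metric and $p=1$, $W_{c,1}$ is a genuine probability metric on $\mathcal{P}_{c,1}(\X)$, and the three measures $\P_\theta$, $\P_\theta^n$, $\Q^m$ all lie in this space: $\P_\theta$ by the hypothesis carried over from the theorem, and the empirical measures $\P_\theta^n$, $\Q^m$ trivially because they have finite support.

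First I would apply the triangle inequality for $W_{c,1}$ twice, obtaining
\[
W_{c,1}(\P_\theta, \Q^m) \leq W_{c,1}(\P_\theta, \P_\theta^n) + W_{c,1}(\P_\theta^n, \Q^m),
\]
together with the symmetric inequality in which the roles of $\P_\theta$ and $\P_\theta^n$ are swapped. Rearranging both and using the symmetry $W_{c,1}(\P_\theta, \P_\theta^n) = W_{c,1}(\P_\theta^n, \P_\theta)$ yields the standard reverse-triangle bound
\[
\bigl| W_{c,1}(\P_\theta, \Q^m) - W_{c,1}(\P_\theta^n, \Q^m) \bigr| \leq W_{c,1}(\P_\theta, \P_\theta^n).
\]

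Finally, I would invoke Theorem \ref{thm:Wasserstein_sample_complexity_QMC} directly to control the right-hand side by $O(n^{-1}(\log n)^{\alpha_s})$, which delivers the claimed rate. There is no substantive obstacle here: all of the analytic content (the Koksma--Hlawka inequality together with the boundedness of $\VHK(G_\theta)$) is already packaged inside Theorem \ref{thm:Wasserstein_sample_complexity_QMC}, and the corollary simply transfers that one-sample bound into the inference-relevant quantity, in which $\Q^m$ plays the role of a fixed reference measure. The only things one must check are that the triangle inequality applies and that $W_{c,1}(\P_\theta, \P_\theta^n)$ is finite, both of which are immediate from the standing assumptions.
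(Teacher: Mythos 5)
Your proposal is correct and matches the paper's own argument: the paper likewise notes that $W_{c,1}$ satisfies the triangle inequality (citing Proposition 2.3 of Peyr\'e and Cuturi) and deduces the bound from Theorem \ref{thm:Wasserstein_sample_complexity_QMC}. You simply spell out the reverse-triangle-inequality step more explicitly.
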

}
\color{black}{We note that the assumption that $\VHK(G_\theta) < \infty$ is weaker than that imposed in Assumption \ref{assumptions:generator}, so that the discussion about sufficient conditions also holds here.}
This result shows that the convergence rate can be improved by a $O(n^{-1/2})$ term (up to logarithms) when using a RQMC/QMC point set instead of MC samples in $d=1$ (once again, QMC results are only valid asymptotically). This is significant since in $d=1$, the computational cost for the Wasserstein distance is $O(n \log n)$, which is significantly faster than the $O(n^2)$ cost for the MMD distance. For $d>1$, the optimal rate for approximating an arbitrary distribution with a deterministic point set is $W_{c,1}(\P,\P^n) = O(n^{-1/d})$; see Theorem 2 in \cite{Novak2016}. We therefore cannot hope to obtain an improved sample complexity result in this case.

Fortunately, this is not the end of the story. First, the $d=1$ rate also transfers to sliced-Wasserstein distances in $d>1$ using Theorem 2 in \cite{Nadjahi2020b}. As we will see in the next section, the use of QMC and RQMC for the sliced-Wasserstein distance leads to very favourable computational costs, and warrants further study. Second, the next section will show that the Sinkhorn divergence can also be approximated at a fast rate even for $d>1$.

\subsection{Sample Complexity for the Sinkhorn Divergence} \label{sec:sample_complexity_sinkhorn} 

As for the other discrepancies, we will first review an existing result about the sample complexity of the Sinkhorn divergence with MC samples. Note that the result, which was proved in \cite{Genevay2019}, is in terms of distance between estimated Sinkhorn divergence and the exact Sinkhorn divergence. Results of this form can be obtained from our theorems for the MMD and Wasserstein distance since they are both metrics and hence satisfy the triangle inequality, but here we are working with a divergence instead of a metric and so directly present the result in this form.
\begin{proposition}[Corollary 1 in \cite{Genevay2019}]\label{prop:genevay2019}
Let $\P,\Q \in \mathcal{P}_{c,p}(\X)$ on some bounded $\X \subset \mathbb{R}^{d}$, and suppose $c \in \mathcal{C}^{\infty,\infty}(\X \times \X)$ is a Lipschitz continuous cost function. Let $\P_n$ and $\Q_n$ consist of $n$ IID realisations from $\P$ and $\Q$ respectively. Then, with probability $1-\delta$:
\begin{talign*}
\left|S_{c,p,\lambda}(\P,\Q)-S_{c,p,\lambda}(\P^{n},\Q^{n})\right| = O(n^{-\frac{1}{2}})\sqrt{\log(\delta^{-1})}.
\end{talign*}
\end{proposition}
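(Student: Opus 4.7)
The plan is to exploit the dual (Kantorovich-type) formulation of entropic optimal transport together with empirical process techniques. First, I would use the dual representation of the regularized Wasserstein distance $\bar{W}_{c,p,\lambda}(\P,\Q)$ as a supremum over pairs of continuous potentials $(f,g)$ of
\begin{talign*}
\int_\X f(x)\,\P(\d x) + \int_\X g(y)\,\Q(\d y) - \lambda \int_{\X\times\X} \exp\!\left(\frac{f(x)+g(y)-c^p(x,y)}{\lambda}\right)\P(\d x)\Q(\d y) + \lambda.
\end{talign*}
Decomposing the Sinkhorn divergence as $\bar{W}_{c,p,\lambda}(\P,\Q)-\tfrac{1}{2}\bar{W}_{c,p,\lambda}(\P,\P)-\tfrac{1}{2}\bar{W}_{c,p,\lambda}(\Q,\Q)$, each of the three dual programs delivers a variational representation, so that the difference $S_{c,p,\lambda}(\P,\Q) - S_{c,p,\lambda}(\P^n,\Q^n)$ can be controlled by empirical process quantities of the form $\sup_{f \in \mathcal{F}} |\int f \, d(\P - \P^n)|$, together with analogous terms in $\Q$, where $\mathcal{F}$ is a class containing the optimal dual potentials for both the population and empirical problems.

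The second step, which is the main obstacle, is to show that $\mathcal{F}$ may be taken as a ball in a suitable reproducing kernel Hilbert space (or Sobolev space) with radius uniform in $n$. For this I would exploit the Sinkhorn fixed-point relations $f^\star(x) = -\lambda\log\int \exp((g^\star(y)-c^p(x,y))/\lambda)\,\Q(\d y)$ and the symmetric equation for $g^\star$. Iteratively differentiating these expressions and using the assumed $\mathcal{C}^{\infty,\infty}$ smoothness and Lipschitz property of $c$, one can propagate regularity from the cost to the optimal potentials and obtain uniform bounds on all their partial derivatives on the bounded domain $\X$. This places $f^\star$ and $g^\star$ inside a bounded Sobolev ball $\mathcal{F}$ whose radius depends on $c$, $\lambda$, $p$ and the diameter of $\X$, but crucially not on the sample size $n$ or on whether we are considering the dual of the population or empirical problem.

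Third, once such a class $\mathcal{F}$ is identified, I would invoke standard empirical process tools: the expected supremum of $|\int f\, d(\P-\P^n)|$ over a Sobolev ball on a bounded domain can be bounded via its Rademacher complexity, which decays at rate $n^{-1/2}$. To upgrade the in-expectation bound to the high-probability statement $O(n^{-1/2})\sqrt{\log(\delta^{-1})}$, I would apply a bounded-differences inequality such as McDiarmid's, noting that changing a single sample alters the Sinkhorn divergence by at most $O(n^{-1})$ because the dual potentials are uniformly bounded on $\X$. Summing the three empirical process contributions arising from the decomposition then yields the stated bound. The regularity analysis of the entropic potentials in step two is the delicate part; everything downstream reduces to classical concentration arguments.
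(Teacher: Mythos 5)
This proposition is not proved in the paper at all: it is imported verbatim as Corollary~1 of \cite{Genevay2019}, so there is no in-paper argument to compare against. Your sketch faithfully reconstructs the proof strategy of that reference — dual formulation of entropic OT, uniform Sobolev/RKHS bounds on the optimal potentials obtained by differentiating the Sinkhorn fixed-point equations, a Rademacher-complexity bound for the resulting empirical process, and a bounded-differences concentration step — and I see no gap in it beyond the (acknowledged) technical work of propagating regularity from $c$ to the potentials, where the constants inherit a strong dependence on $\lambda$ and $d$.
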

The constant in this rate depends on $\lambda$ and $d$, and more detailed can be found in Theorem 3 of \cite{Genevay2019}. Most strikingly, the dependence on $\lambda$ is exponential as $\lambda \rightarrow \infty$. See also \cite{Mena2019} for a more refined result when using the squared Euclidean metric as cost function. Given a fixed value of $\lambda$ and $d$, the rate in $n$ is the MC rate. As we will see in the next results, this can be improved upon using QMC/RQMC point sets. {\color{black}
Note here we need to restrict the domain to be compact, which is more restrictive than for our results for the MMD or Wasserstein distance, but is similar to the requirement in \ref{prop:genevay2019}. 

\begin{assumption}\label{assumptions:domain}
Assume that the domain $\X \subset \mathbb{R}^d$ is a compact space.
\end{assumption}
This restriction for the domain is necessary since our proof builds on \cite{Genevay2019}, which requires this assumption to hold. Although there are some results that allow the support of the distribution to be unbounded, for example in \cite{Mena2019} where the compactness assumption was relaxed to distributions with sub-Gaussian tails on unbounded domains, this proof technique require us to enforce stronger regularity conditions for the generator which would limit the applicability of the proof.
}
\begin{theorem}\label{thm:Sinkhorn_QMC}
Let $c\in\mathcal{C}^{\infty,\infty}(\X \times \X)$ and suppose  $\P_{\theta},\Q \in \mathcal{P}_{c,p}(\X)$. Furthermore, suppose Assumptions \ref{assumption:points},  \ref{assumptions:generator} and \ref{assumptions:domain} hold. Then 
\begin{talign*}
    \left|S_{c,p,\lambda}(\P_{\theta},\Q^m)-S_{c,p,\lambda}(\P_{\theta}^{n},\Q^m)\right|=O(n^{-1}{\color{black}(\log{n})^{{\alpha}_s}}).
\end{talign*}
\end{theorem}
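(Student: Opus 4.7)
The plan is to mimic the architecture of Theorem 3 in \cite{Genevay2019}, which controls the sample complexity of the Sinkhorn divergence by reducing it to an MMD-type quantity in a suitably smooth RKHS, and then to replace their IID/Rademacher argument with a QMC/Koksma--Hlawka argument using our Theorem \ref{thm:QMC_concentration_ineq}. The first step is to write the Sinkhorn divergence via its dual (Sinkhorn potential) formulation: there exist potentials $f_{\P_\theta,\Q^m}, g_{\P_\theta,\Q^m}$ (and analogously for $\P_\theta^n$) such that the divergence can be expressed as
\begin{talign*}
S_{c,p,\lambda}(\P,\Q) = \int f_{\P,\Q}\,d\P + \int g_{\P,\Q}\,d\Q - \text{auto-correlation corrections},
\end{talign*}
where each auto-correlation term is handled symmetrically. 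Using the envelope property of the Sinkhorn dual (optimality of the potentials at the argmax), the difference $|S_{c,p,\lambda}(\P_\theta,\Q^m)-S_{c,p,\lambda}(\P_\theta^n,\Q^m)|$ can be bounded above and below by $\int \phi\, d(\P_\theta - \P_\theta^n)$ for dual potentials $\phi$ evaluated at either the true or the empirical problem, plus cross-terms arising from the symmetric auto-correlation pieces, which have the same form.

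The second step is to argue that, on a compact $\X$ with $c \in \mathcal{C}^{\infty,\infty}(\X\times\X)$, the optimal Sinkhorn potentials $\phi$ are infinitely differentiable with derivatives bounded uniformly in $n$ and $m$. Following Proposition 1 and Lemma 2 of \cite{Genevay2019}, one shows that $\phi$ lies in the RKHS $\mathcal{H}_{k_\lambda}$ of a kernel $k_\lambda$ derived from $\exp(-c^p/\lambda)$, and that $k_\lambda \in \mathcal{C}^{s,s}(\X \times \X)$ with $\sup_x \partial^{t,t} k_\lambda(x,x) < \infty$ for all multi-indices $|t|\le s$; compactness of $\X$ plus smoothness of $c$ together imply the uniform bound on derivatives (the constants will depend on $\lambda, d, \X$ and $c$, but not on $n, m$). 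Hence each linear functional $\int \phi \, d(\P_\theta - \P_\theta^n)$ is bounded by $\|\phi\|_{\mathcal{H}_{k_\lambda}}\cdot \mMMD_{k_\lambda}(\P_\theta, \P_\theta^n)$.

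The third step is to apply Theorem \ref{thm:QMC_concentration_ineq} (together with the underlying Sobolev composition result Theorem \ref{thm:general_Sbolev_composition}) to the kernel $k_\lambda$. Assumption \ref{assumption:points} gives the QMC/RQMC points, Assumption \ref{assumptions:generator} gives the regularity of $G_\theta$ needed for $\phi \circ G_\theta$ to have bounded Hardy--Krause variation, and the smoothness of $k_\lambda$ established in Step 2 provides the kernel hypothesis of Theorem \ref{thm:QMC_concentration_ineq}. This yields $\mMMD_{k_\lambda}(\P_\theta, \P_\theta^n) = O(n^{-1}(\log n)^{\alpha_s})$. Combining with Step 1 and bounding the RKHS norm of the dual potential by a constant (again using Genevay et al.'s regularity of Sinkhorn potentials on compact domains) delivers the stated rate.

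The hard part is Step 2: carefully transferring the IID-based bounds of \cite{Genevay2019} into a deterministic statement about dual potentials that is valid uniformly in the sample $\P_\theta^n$. Genevay et al.'s proof uses Rademacher complexity over an envelope class of potentials, whereas here we need a pointwise-in-$n$ bound that the fixed potentials $\phi$ arising at the optima of the four subproblems (for $(\P_\theta,\Q^m)$, $(\P_\theta^n,\Q^m)$, $(\P_\theta,\P_\theta)$, $(\P_\theta^n,\P_\theta^n)$) all lie in a common bounded ball of $\mathcal{H}_{k_\lambda}$. This is where compactness of $\X$ (Assumption \ref{assumptions:domain}) is essential: it guarantees that the Sinkhorn iteration has a uniform contraction constant so that the dual potentials, and their derivatives, admit bounds depending only on $\lambda, c$ and $\X$, not on the measures or their sample sizes. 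Once this uniform envelope is secured, the remaining steps are a direct application of the MMD QMC rate already developed in Section \ref{sec:sample_complexity_MMD}.
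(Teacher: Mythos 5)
Your proposal follows essentially the same route as the paper's proof: a triangle-inequality decomposition into regularised OT terms, the dual formulation with an optimality (envelope) sandwiching argument that reduces everything to integration errors of fixed optimal potentials against $\P_\theta - \P_\theta^n$, uniform regularity of those potentials on the compact domain borrowed from Genevay et al., and finally the Koksma--Hlawka/Hardy--Krause machinery of Theorem \ref{thm:general_Sbolev_composition} combined with the star-discrepancy rate. The only (inconsequential) discrepancy is your identification of the relevant RKHS: the paper places the potentials in the Sobolev space $W^{m,2}(\X)$ with $m=d/2+1$ and then invokes its norm-equivalence with a Mat\'ern RKHS, rather than working in an RKHS built from the Gibbs kernel $\exp(-c^p/\lambda)$.
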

The proof is available in Appendix \ref{appendix:proof_sinkhorn}. Note that the rate is now the same as that possible when using QMC/RQMC for the MMD, and it significantly improves on what is possible when working with the Wasserstein distance.



\section{Numerical Experiments} \label{sec:experiments}

In this section, we will return to the uniform and Gaussian models first studied in Figure \ref{fig:QMCpointset2D}, then consider inference for intractable generative models including the multivariate g-and-k distributions, a flexible class of bivariate Beta distributions, \color{black}{and the deep neural network generator of a variational autoencoder}.  The aims of this section are two-fold. First, we will verify that the theoretical results in the previous section hold in practice. Second, we will look at QMC sampling in settings where Assumption  \ref{assumptions:generator} and \ref{assumptions:domain}  are violated. The requirements on the smoothness of $G_\theta$ and the assumption that $\X$ is compact are rather restrictive but necessary to transfer existing theoretical results from the QMC theory to the setting of generative models. Thankfully, we will see that there are  many settings where these assumptions are not satisfied but the approach nevertheless provides significant speed-ups. As such, our paper provides further evidence complementing the extensive discussion of this issue in Chapters 15, 16 and 17 of Art Owen's book \cite{Owen2013}, and opens the way for further extensions of our theoretical results in Section \ref{sec:sample_complexity}.

Our simulation study uses the \texttt{SciPy} \cite{SciPy2020}, \texttt{JAX} \cite{Jax2018}, \texttt{QMCPy} \cite{Choi2020}, \texttt{POT} \cite{flamary2021pot} \color{black}{\texttt{TensorFlow} \cite{Tensorflow2015}} libraries. The code can be found at 
\begin{center}
   \url{https://github.com/johannnamr/Discrepancy-based-inference-using-QMC}.
\end{center}
Unless stated otherwise, all the RQMC results are based on generalised Halton or Sobol sequences which have been randomised using the scrambling factors of \cite{Faure2009}. The approximation of sliced-distances are based on randomly sampled slices as described in Section \ref{sec:background}. Additional results are provided in Appendix \ref{appendix:numerical_experiments}.

\begin{figure}[t!]
    \centering
  \includegraphics[width=\textwidth]{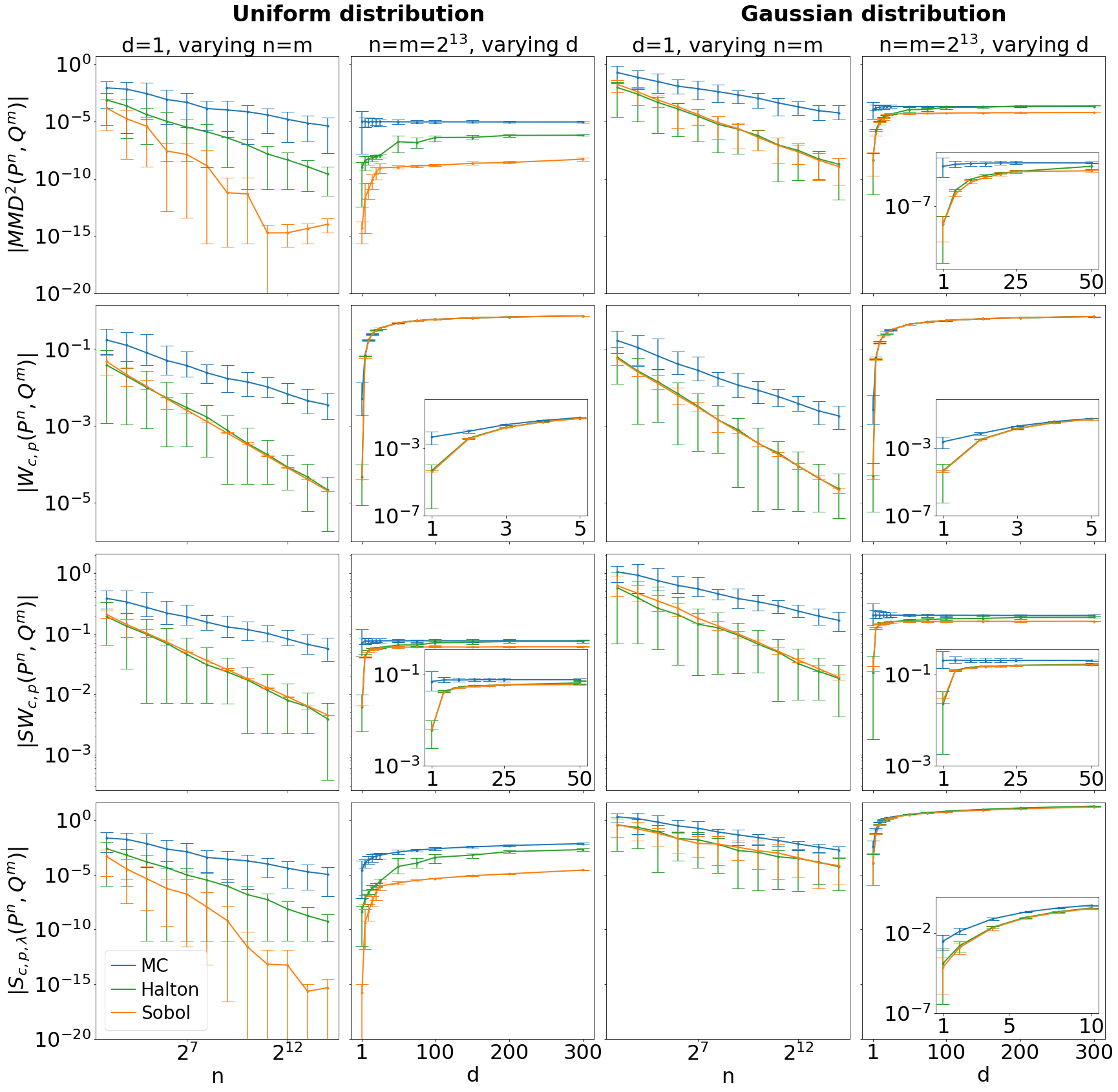}
  \caption{Sample Complexity for a Uniform and Gaussian model in MMD, Wasserstein distance, sliced-Wasserstein distance and Sinkhorn divergence.}
  \label{fig:sample_complexity_unif_gauss}
\end{figure}

\subsection{Sample Complexity for Uniform and Gaussian Models}\label{sec:experiments_unif_gauss}

We first revisit the examples in Figure \ref{fig:QMCpointset2D} which considered uniform and Gaussian distributions. These examples are of course very simple and do not require inference tools for generative models, but their simplicity allows us to study the sample complexity of QMC/RQMC in a wide range of scenarios. For the uniform distribution $\P_\theta = \text{Unif}([0,1]^d)$, we will use $\U = \text{Unif}([0,1]^s)$ with $G_\theta(u)=u$. For the Gaussian distribution $\P_\theta = \mathcal{N}(0,I)$, we use $\U = \text{Unif}([0,1]^s)$ together with the inverse CDF of the univariate standard Gaussian $\Phi$ element-wise: $G_\theta(u) = (\Phi^{-1}(u_1),\ldots,\Phi^{-1}(u_d))$. The simulator $G_\theta$ does not depend on $\theta$ here since we only study the sample complexity results for a fixed distribution. 

For these examples, we have $s=d$, $\P^n = \frac{1}{n}\sum_{i=1}^n \delta_{x_i}$, $\Q^m = \frac{1}{m}\sum_{j=1}^m \delta_{y_j}$ and $n=m$, where $\{x_i\}_{i=1}^n$ and $\{y_j\}_{j=1}^m$ are obtained through the generator $G_\theta$.  Our main results are presented in Figure \ref{fig:sample_complexity_unif_gauss}, and include simulations with MC (in blue), RQMC with Halton sequences (in green) and RQMC with Sobol sequences (in orange). All the experiments have been repeated 25 times. The lines provide the average, and the error bars also represent intervals for the range of values observed. The smaller windows provide a zoomed-in plot for the cases where the gains in performance quickly reduce with $d$.

The first row computes $|\mMMD^2(\P^n,\Q^m)|$ when using a squared-exponential kernel with lengthscale $l=1.5d^{1/2}$. This quantity should decrease as $O(n^{-1/2})$ (see Proposition \ref{thm:QMC_concentration_ineq}) when using MC, and as $O(n^{-1}{\color{black}(\log{n})^{{\alpha}_s}})$
(see Theorem \ref{thm:QMC_concentration_ineq}) when using RQMC. These rates clearly hold for both models when $d=1$, and we see that RQMC quickly provides orders of magnitude improvements as $n$ grows. For the uniform example, the Sobol sequence significantly outperforms the Halton sequence, but this is not the case for the Gaussian example. This is in line with theoretical results showing that the root-mean squared error for Sobol sequences can decrease as
$O(n^{-3/2}{\color{black}(\log{n})^{{\alpha}_s}})$ \cite{Owen2006}, and could motivate further theoretical work extending the results in this paper. Significant improvements are also observed for larger values of $d$, although the gains (if any) are limited for $d>100$ in the Gaussian case. This is not surprising since the Gaussian model does not satisfy the necessary conditions of Theorem \ref{thm:QMC_concentration_ineq} since $\X$ is unbounded and the generator is not sufficiently regular \color{black}{($G_\theta$ is unbounded, and as a result has infinite Hardy-Krause variation; see \cite{Owen2013} Section 15.11)}. The lengthscale is adapted so as to increase with dimension; this is necessary as the distance between points grows exponentially with $d$ due to the curse of dimensionality.

Additional experiments with the Mat\'ern kernel with smoothness parameter $\nu=3/2, 5/2$ and $7/2$ are also provided in Figure \ref{fig:MMD_matern_kernel}. We observe that the performance is significantly improved when using QMC points sets regardless of the choice of kernel, although this advantage decreases when $d$ increases, and is larger for smoother kernels. This is interesting to see since the Mat\'ern kernel does not satisfy the conditions of Theorem \ref{thm:QMC_concentration_ineq} when $d$ is large. Finally, we notice from Figure \ref{fig:MMD_different_pointset} that the results are not very sensitive to the choice of QMC point set.

The second row of Figure \ref{fig:sample_complexity_unif_gauss} illustrates $|W_{c,p}(\P^n,\Q^m)|$ with $c(x,y)=\|x-y\|_2$ and $p=1$. The QMC point sets lead to significant gains when $d=1$, but not for larger $d$ (a small advantage is seen until $d=5$, but this is very limited). Further experiments for alternative choices of $c$ and $p$ can also be found in Figure \ref{fig:Wass_different_cp}, where similar results are observed. All of these results are consistent with what we would expect from Theorem \ref{thm:Wasserstein_sample_complexity_QMC}, even though the regularity conditions of the theorem are not satisfied in the Gaussian case. The third row of Figure \ref{fig:sample_complexity_unif_gauss} illustrates $|SW_{c,p}(\P^n,\Q^m)|$ with $L=100$ random slices when $c = \|x-y\|_2$ and $p=1$. Clearly, we are able to obtain a gain in accuracy when using RQMC, and this is the case even for large $d$, which is a significant improvement on what is possible with the exact Wasserstein distance. Although the rate is $O(n^{-1}{\color{black}(\log{n})^{{\alpha}_s}})$
regardless of the value of $d$, the gains from using RQMC do become smaller in higher dimensions because the constant in this rate does still depend on $d$.

Finally, the fourth row of Figure \ref{fig:sample_complexity_unif_gauss} looks at the value of $|S_{c,p,\lambda}(\P^n,\Q^m)|$ with $\lambda = 2d$, $p=2$ and $c(x,y)=\|x-y\|_2$. Once again, we observe that RQMC provides significant gains in performance in $d=1$, but also for $d>1$ in the case of the uniform. For the Gaussian, although the performance is improved to some extent for $d>1$, these gains are really small. Interestingly, Figure \ref{fig:Sinkorn_different_cplambda} shows that the gains crucially depend on $p$ and $c$, but also on the choice of $\lambda$. In particular, although Figure \ref{fig:sample_complexity_unif_gauss} could lead us to believe that there are close to no gains for the Gaussian case, this is clearly not the case when using an increased regularisation level $\lambda$.

\begin{figure}[t!]
    \centering
  \includegraphics[width=\textwidth]{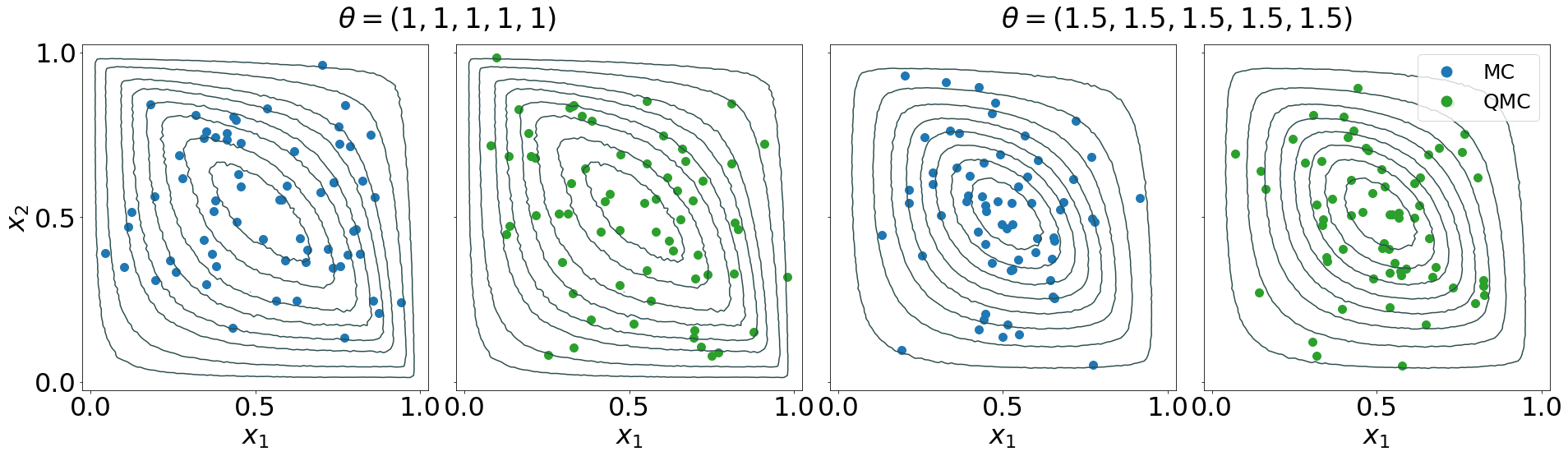}
  \caption{Realisation from the bivariate Beta model using MC and QMC point sets for $\theta = (1,1,1,1,1)$ and $\theta=(1.5,1.5,1.5,1.5,1.5)$. }
  \label{fig:scatter_bivariate_beta}
\end{figure}

\subsection{Inference for Bivariate Beta Distributions}\label{sec:bivariate_beta}

We now move on to studying discrepancy-based inference for intractable generative models with QMC and RQMC. {\color{black} Ever since the work of \cite{Olkin2003}, there has been an interest in designing flexible classes of multivariate distributions which generalise the Beta distribution (as an indicator, \cite{Olkin2003} has over $180$ citations to date). One popular approach is that of \cite{Arnold2011}, which has been used by \cite{Crackel2017} to model household purchasing habits, and by \cite{Sarabia2014} to model indicators of well-being. Although flexible, this does lead to an intractable density which makes inference challenging}. We will focus on the $d=2$ and $p=5$ version of the model previously considered by \cite{Jiang2018,Nguyen2020b}, and whose marginals are $\text{Beta}(\theta_1+\theta_3,\theta_4+\theta_5)$ and $\text{Beta}(\theta_2+\theta_4,\theta_3+\theta_5)$ distributed respectively in the first and second coordinate. In particular, denoting by $\lf x \rf$ the integer part of some $x \in \R$:
\begin{talign*}
G_\theta^1(u) & := \frac{\tilde{u}_1+\tilde{u}_3}{\tilde{u}_1+\tilde{u}_3+\tilde{u}_4+\tilde{u}_5}, 
\quad 
G_\theta^2(u) := \frac{\tilde{u}_2+\tilde{u}_4}{\tilde{u}_2+\tilde{u}_3+\tilde{u}_4+\tilde{u}_5}, 
\quad
\tilde{u}_i  = - \sum_{k=1}^{\lf \theta_i \rf} \ln (u_{ik}) + u_{i0},
\end{talign*}
where $u_{i0} \sim \text{Gamma}(\theta_i - \lf \theta_i \rf, 1)$, $u = (u_{11}, \ldots, u_{1 \theta_1},u_{21},\ldots,u_{5 \theta_5}) \sim \text{Unif}([0,1]^{s})$ and $s = \sum_{i=1}^5 \lf \theta_i \rf$. Note that the dimension $s$ of the base space now depends on the value of $\theta$.

In the special case where $\theta_i$ is an integer, $u_{i0}=0$ is fixed (as opposed to sampled from a Gamma). In this case, both $\X$ and $G_\theta$ satisfy the conditions in Assumptions \ref{assumptions:domain} and \ref{assumptions:generator}. When this is not the case, $u_{i0}$ can be generated through rejection sampling (see Appendix \ref{appendix:bivariate_beta}). In that case, the generator does not satisfy Assumption \ref{assumptions:generator} anymore, and also has a much higher-dimensional domain; i.e. $s = \sum_{i=1}^5 \lf \theta_i \rf + 15$. Here, the first term comes from the simulation of Gamma random variables with integer parameters $\lf \theta_1 \rf, \ldots, \lf \theta_5 \rf$, and the second term is the dimensionality required to simulate five Gamma random variables with scalar parameters in $(0,1)$ (that is, the simulation of a Gamma through rejection sampling requires a three-dimensional point). Despite these challenges, we will see below that certain gains in performance are still possible.

\begin{figure}[t!]
    \centering
  \includegraphics[width=0.75\textwidth]{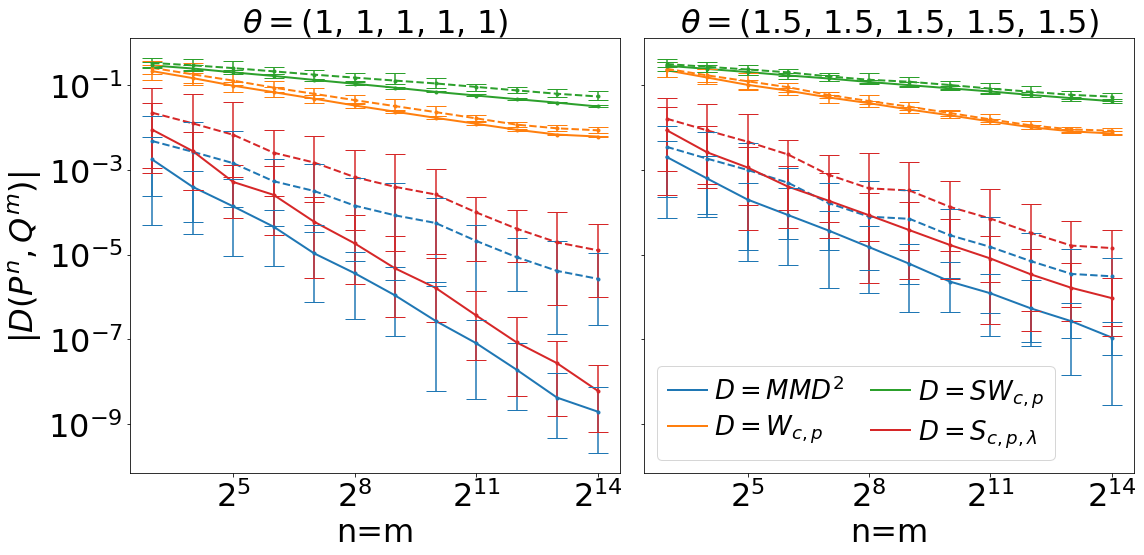}
  \caption{Sample complexity results for the bivariate Beta distribution with integer and scalar parameters. In both cases d=2, but the left-hand side plot uses $s=5$ whereas the right-hand side plot uses $s=20$. These differences in dimensionality seem to impact the convergence rate obtained through QMC point sets. Each solid line corresponds to RQMC, while the dashed lines correspond to MC point sets.}
  \label{fig:samplecomplexity_bivariate_beta}
\end{figure}

Figure \ref{fig:scatter_bivariate_beta} provides realisations from this model through MC and QMC sampling (in blue and green respectively). As observed, the QMC point set provides a slightly better coverage of the distribution, although the difference is not very large visually. In those cases, $s=5$ for the left-hand side plot, whereas $s = 20$ for the right-hand side plot. We note that this case significantly differs from the examples in the previous section since we have $s \gg d$, which may partly explain why the difference is not as striking visually. However, looking at Figure \ref{fig:samplecomplexity_bivariate_beta} (which is the equivalent of Figure \ref{fig:sample_complexity_unif_gauss} for this model), we can see that QMC leads to a significant improvement in terms of sample complexity, especially in the case of integer parameter and to a lesser extent with scalar parameter values. Once again, this difference between left-hand side and right-hand side plot is most likely due the difference in value of $s$, and the fact that Assumption \ref{assumptions:generator} is not satisfied in the latter case. We also note that the advantage provided by QMC is particularly significant for the MMD and the sliced Wasserstein distance.

\begin{figure}[t!]
    \centering
  \includegraphics[width=\textwidth]{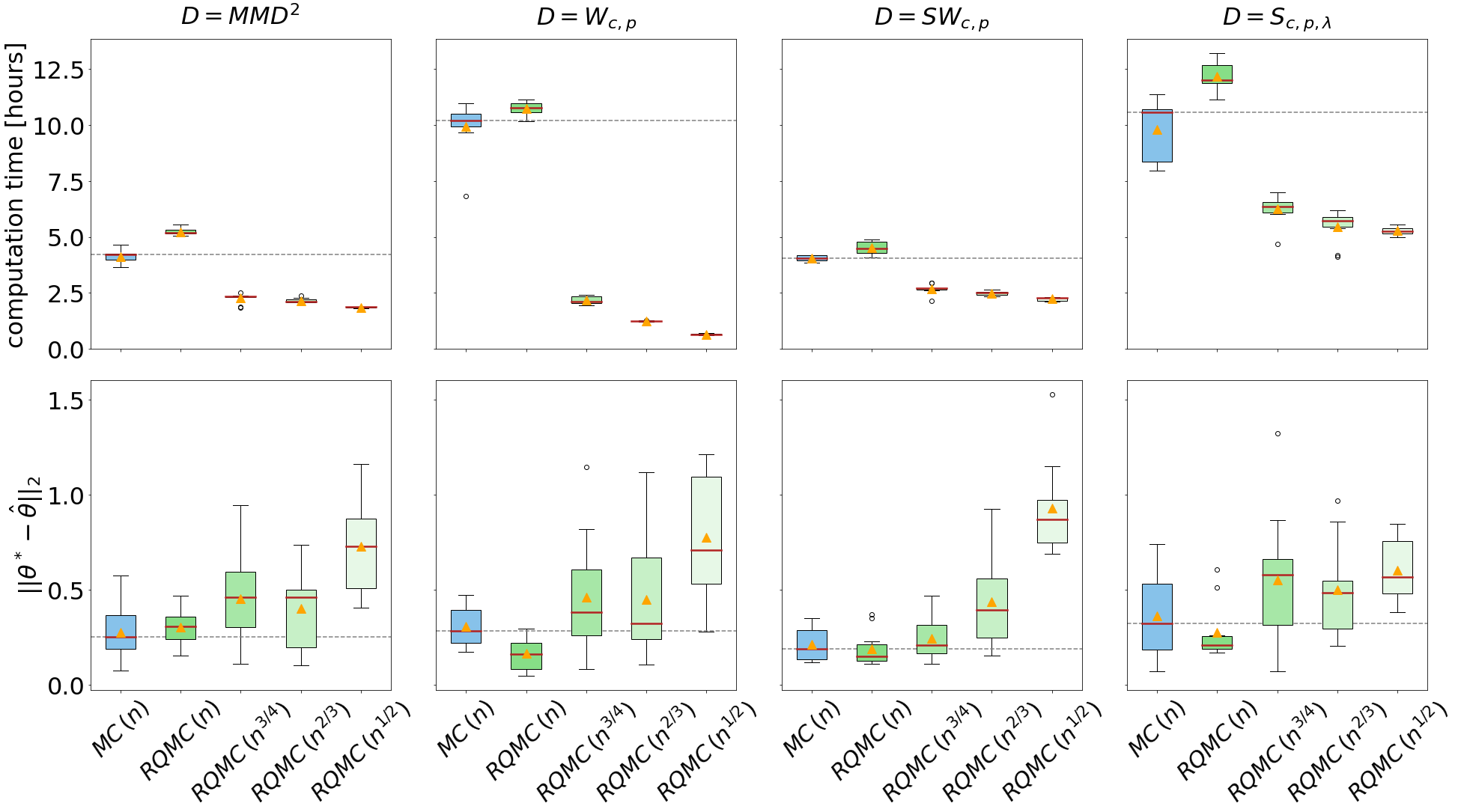}
  \caption{Minimum distance estimation for the parameters of the bivariate Beta distribution with the MMD, Wasserstein, sliced Wasserstein and Sinkhorn divergence. Each box-plot correspond to the result of $10$ repetitions of an identical experiment, and the red bars provide the sample median. The dotted horizontal line correspond to the median value for the MC-based estimators with $n$ points.}
  \label{fig:MDE_bivariate_beta}
\end{figure}

For the last part of this experiment, we perform inference for the parameter $\theta$ using an MDE approach with the MMD, the Wasserstein distance, the sliced Wasserstein distance and the Sinkhorn divergence. The generator $G_\theta$ is not differentiable in $\theta$ and we therefore propose to use a gradient-free global optimisation algorithm. We utilise the differential evolution algorithm due to \cite{Storn1997}, which is implemented as a sub-routine of the \texttt{optimize} function in the python library \texttt{SciPy} \cite{SciPy2020}. The dataset consists of $m=2^{16}$ points, from which a minibatch of $2^{10}$ points is sampled at random at every iteration. Depending on the considered experiment, either $n$ samples are generated using MC or $n, n^{\frac{3}{4}}, n^{\frac{2}{3}}$ or $n^{\frac{1}{2}}$ are simulated using RQMC at every iteration. The optimisation algorithm is run for 3,000 iterations for every setting. For the MMD, a squared-exponential kernel with lengthscale $l=1.5d^{1/2}$ is used. The Wasserstein distance is computed with $c(x,y)=\|x-y\|_2$ and $p=1$ as is the sliced Wasserstein distance based on 100 projections. The Sinkhorn divergence is considered with $\lambda=5d$, $c(x,y)=\|x-y\|_2$ and $p=2$. For the experiments, we focused on the case where $\theta^* = (\theta_1^*,\theta_2^*,\theta_3^*,\theta_4^*,\theta_5^*) =(1,1,1,1,1)$ as this was studied by \cite{Jiang2018,Nguyen2020b}. Therefore, the bounds, within each parameter is optimised by the differential evolution algorithm, are set to $[0,2]$.  We note that although the true parameter is integer valued, the optimisation algorithm will have to simulate data for parameter values which are scalar-valued. As a result, the dimensionality of the domain of the generator will generally be $s \in [15,25]$ (assuming that the optimisation routine does not explore regions of the parameter space with large parameter values relative to $\theta^*$). 

The results of our experiments are presented in Figure \ref{fig:MDE_bivariate_beta}, where we studied the computational cost and the accuracy of the estimates in $l_2$ norm for each choice of discrepancy. In each of these settings, we compared an MC method based on $n$ points with an RQMC method with $n, n^{\frac{3}{4}}, n^{\frac{2}{3}}$ and $n^{\frac{1}{2}}$ points. As could be reasonably expected, the RQMC-based estimator with $n$ points is significantly more expensive than an MC with $n$ points, but it is also much more accurate in $l_2$ error. Similarly, the RQMC-based estimators with $n^{\frac{2}{3}}$ or $n^{\frac{1}{2}}$ are less accurate in $l_2$ error but usually cheaper than MC with $n$ points.
{\color{black}
More interestingly, we see that the RQMC estimator with $n^{\frac{3}{4}}$ points is both cheaper and more accurate than the MC estimator with $n$ points for the Wasserstein distance, whilst for the Sinkhorn divergence it is cheaper and provides roughly the same level of accuracy.}
This clearly highlights that RQMC point sets can provide advantages even in cases not necessarily covered by our theoretical results. Surprisingly, this is not the case for the MMD, for which the performance of the RQMC-based estimator with $n^{\frac{3}{4}}$ is slightly worse than for the MC-based estimator in this experiment. We speculate that this may be due to a poor choice of kernel or an issue with the optimisation method since we obtained encouraging sample complexity results in Figure \ref{fig:samplecomplexity_bivariate_beta}. 


\begin{figure}[t!]
    \centering
  \includegraphics[width=0.7\textwidth]{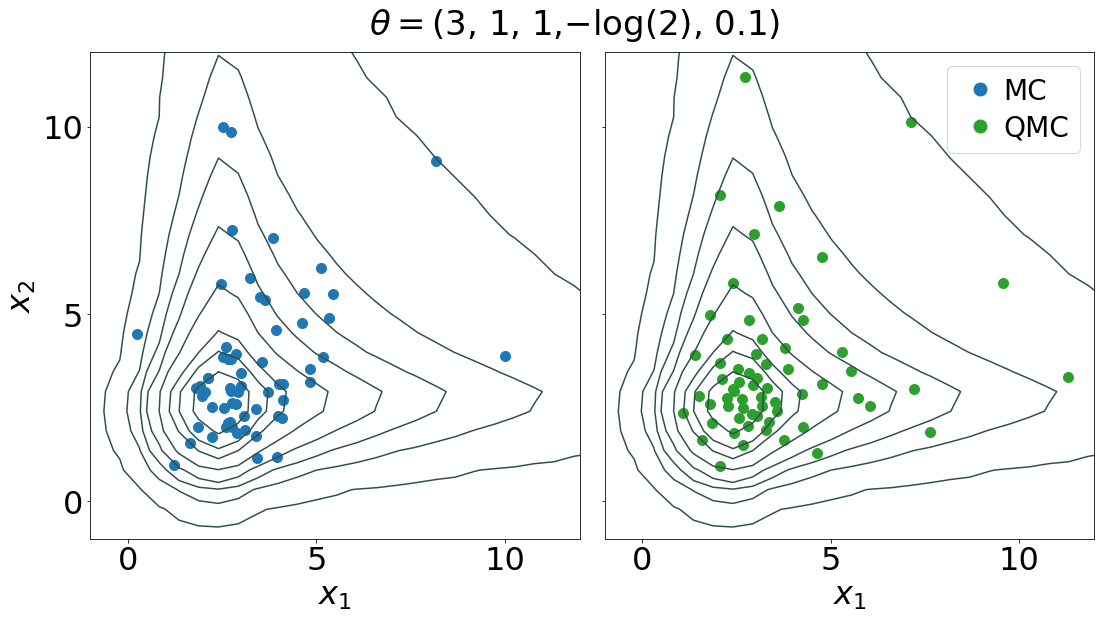}
  \caption{Realisations from the multivariate g-and-k distribution for $d=2$. The scatter plots for MC and QMC are both based on $n=2^6$ realisations, of which two fall outside the plotted interval for each point set. These points were omitted to allow for a zoomed-in view of the mode of the distribution.}
  \label{fig:gandk_d2_scatter}
\end{figure}

\subsection{Inference for Multivariate g-and-k Models}

\color{black}{Next}, we will consider is the multivariate extension of the g-and-k distribution considered in \cite{Jiang2018,Nguyen2020b}. This parametric class is very flexible as it contains four parameters controlling the mean, variance, skewness and kurtosis of the marginals, as well as a fifth parameter controlling correlations across neighbouring coordinates. Unfortunately, inference is made challenging by the fact that the density is not available in closed-form. It is however straightforward to sample from this distribution, and it has recently become one of the most common target problems to assess the performance of inference schemes for generative models; see e.g. \cite{Prangle2017,Bernton2017,Jiang2018,Bernton2019,Briol2019,Nguyen2020b,Dellaporta2022} for a small subset of recent papers using this model. {\color{black} The g-and-k has been applied to a range of applied problems, including (amongst others) insurance modelling \cite{Peters2016}, ranking and selection \cite{Haynes1997}, and modelling of the prices of short-term rentals \cite{Rodrigues2020}. }

The generator for this model is:
  \begin{talign*}
G_\theta(u)  & := \theta_1 + \theta_2 \left(1 + 0.8 \frac{(1 - \exp(-\theta_3 z)) }{(1 + \exp(-\theta_3 z) )}\right) (1 + z^2)^{\theta_4} z, 
\end{talign*}
where $\U = \text{Unif}([0,1]^d)$, $z = \Sigma^{\frac{1}{2}}\Phi^{-1}(u)^\top$ where $\Phi^{-1}(u)$ is the inverse CDF of the univariate standard Gaussian distribution applied element-wise and $\Sigma \in \R^{d\times d}$ is a symmetric tri-diagonal Toepliz matrix with diagonal entries all equal to $1$ and off-diagonal entries equal to $\theta_5$. Its square-root can be obtained in closed form and is provided in Appendix \ref{appendix:gandk}. Note that $s=d$ and we can straightforwardly replace MC realisations with a QMC or RQMC point set. Another important remark is that this generator does not satisfy the conditions of Assumption \ref{assumptions:generator} since we are using the inverse CDF of a standard Gaussian. \color{black}{As parameter of interest, we consider $\theta=(\theta_1,\theta_2,\theta_3,\exp(\theta_4),\theta_5)$, where the rescaling of is used to avoid numerical instabilities during optimisation.}

\begin{figure}[t!]
    \centering
  \includegraphics[width=\textwidth]{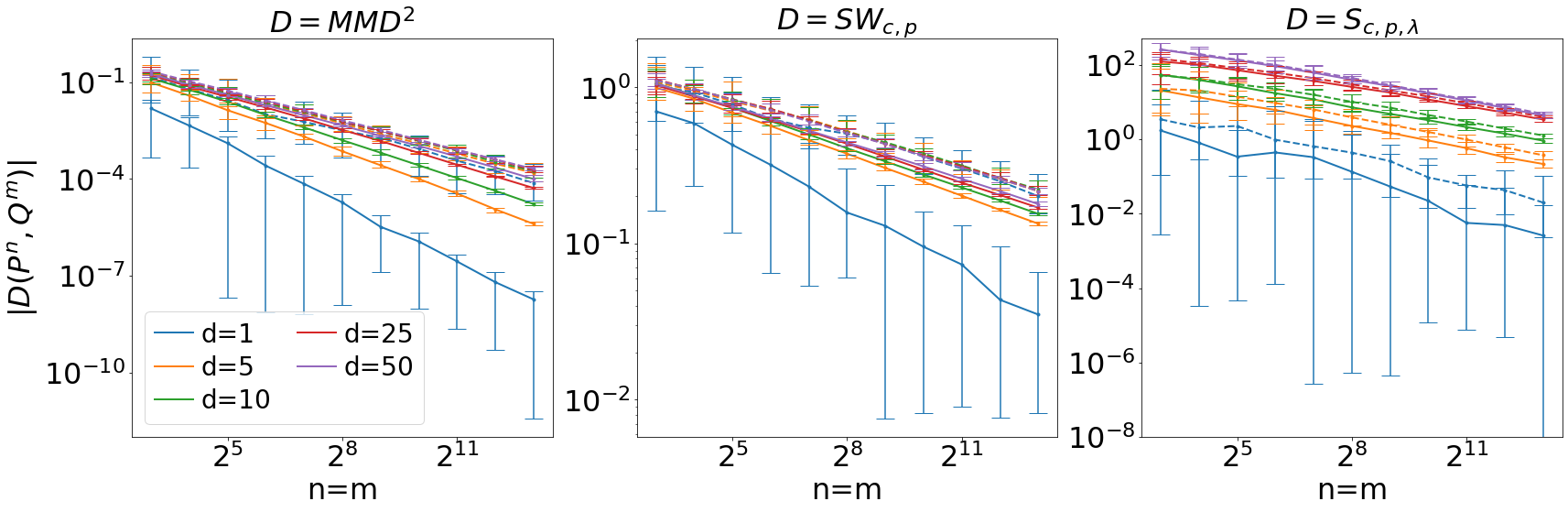}
  \caption{Sample complexity results for the multivariate g-and-k distribution in various dimensions $d$. The Wasserstein distance is omitted due to the fact that the discrepancy is prohibitively expensive when $d>1$ and $n$ is large. Each solid line corresponds to RQMC, whereas the dashed lines correspond to MC point sets.}
  \label{fig:gandk_d2_samplecomplexity}
\end{figure}

Figure \ref{fig:gandk_d2_scatter} presents a scatter plot of two point sets of size $n=2^6$ obtained through MC and RQMC in the case where $\theta = (3,1,1,-\log(2),0.1)$. We can observe that the RQMC-based point set provides a better coverage of areas of high probability than the MC-based point set. These visual results also bare out in the estimates of the discrepancies in Figure \ref{fig:gandk_d2_samplecomplexity}, where we plot the sample complexity as a function of $n$ for different values of $d$ for the MMD (with squared-exponential kernel and lengthscale $l=1.5d^{1/2}$), the sliced Wasserstein distance (with $c(x,y)=\|x-y\|_2$, $p=1$ and 100 projections) and the Sinkhorn divergence (with $\lambda=5d$, $c(x,y)=\|x-y\|_2$ and $p=2$). Here, the Wasserstein distance is omitted due to the prohibitive computational cost when $d$ is large.

\begin{figure}[t!]
    \centering
    \includegraphics[width=0.9\textwidth]{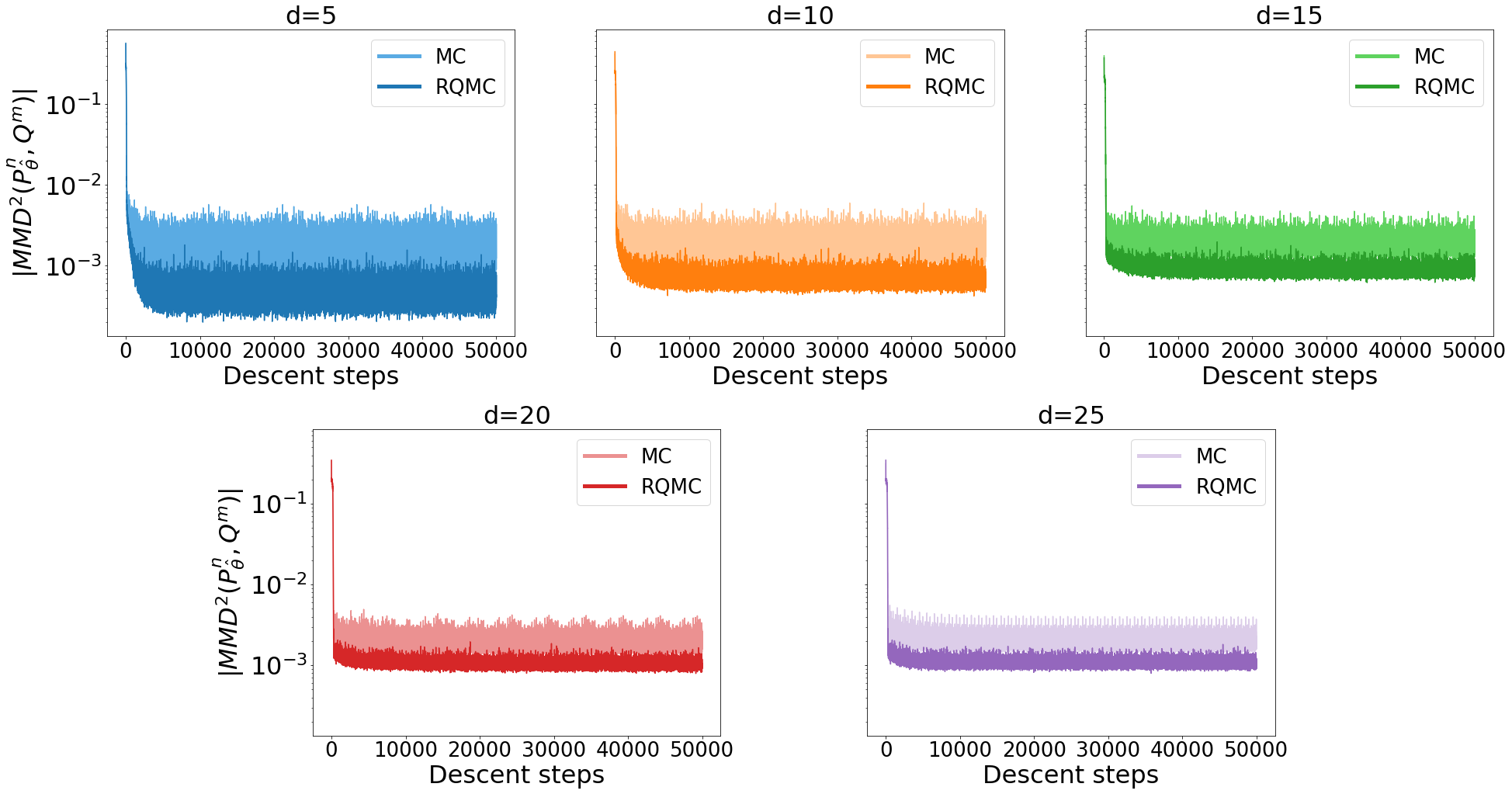}
  \caption{Minimum distance estimation with the MMD for the multivariate g-and-k distribution. The figure plots the estimated MMD between the model with the estimated parameter and the data as a function of the number of stochastic gradient descent steps.}
  \label{fig:gandk_d2_optim_MMD}
\end{figure}

In each case, the RQMC algorithms significantly outperform their MC counterpart, although this improved performance is limited for higher values of $d$. For example, in the case of MMD, the RQMC rates were of the form $n^{-\alpha}$ with $\alpha$ equal to $0.73, 0.65, 0.58$ and $0.54$ in dimensions $5, 10, 25$ and $50$ respectively, whereas $\alpha$ was approximately $0.5$ for MC in all cases. This is in line with what we would expect following the results of Section \ref{sec:experiments_unif_gauss} where the use of the inverse CDF of a Gaussian was studied in detail.

In the last part of our experiments for the multivariate g-and-k distribution, we adapt a gradient-based optimisation method to perform inference for the parameter $\theta$ using an MDE approach that builds on the MMD. The considered stochastic gradient descent (SGD) algorithm is similar to the one of \cite{Briol2019}, but uses an approximation of the squared MMD using empirical measures as in (\ref{eq:MMD_Vstat_defn}) instead of a U-statistic approximation. From $m=2^{16}$ data points, a minibatch of $2^{11}$ points is sampled for every descent step. Using either the MC or QMC approach, $n=2^{9}$ data points are simulated for each descent step. The step size of the SGD algorithm is fixed at $0.2$ (for both MC and QMC) and the optimisation is run for $50,000$ descent steps. The squared MMD and its gradient are computed based on the squared-exponential kernel with lengthscale $l=1.5d^{1/2}$.To obtain the gradient of the multivariate g-and-k distribution, we make use of automatic differentiation provided by the python library \texttt{JAX} \cite{Jax2018}. The experiments aim at retrieving the true parameter $\theta^*=(\theta_1^*,\theta_2^*,\theta_3^*,\exp(\theta_4^*),\theta_5^*)=(3,1,1,-\log(2),0.1)$ and start the SGD algorithm at $\theta^0= (0.3,0.3,0.3,0.3,0.3)$.

\begin{figure}[t!]
    \centering
      \includegraphics[width=\textwidth]{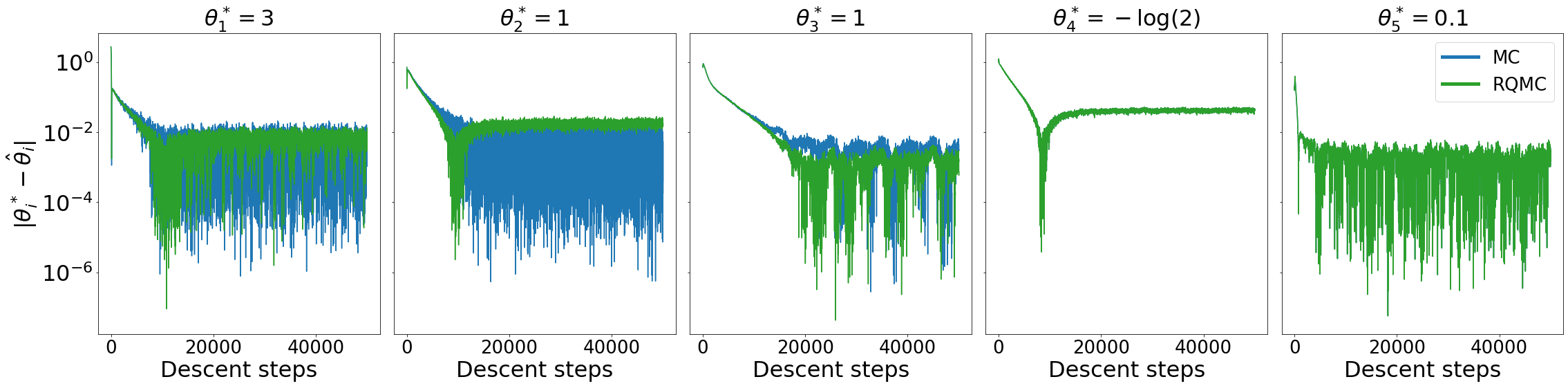}
  \caption{Minimum distance estimation with the MMD for the multivariate g-and-k distribution. The figure plots the $l_1$ error between the estimated parameter and the true value as a function of the number of stochastic gradient descent steps.}
  \label{fig:gandk_d2_optim_d5}
\end{figure}

Figure  \ref{fig:gandk_d2_optim_MMD} illustrates the results for $|\mMMD^2(\P^n_{\hat{\theta}},\Q^m)|$ as a function of the number of descent steps of the stochastic gradient descent method, where $\Q^m$ here corresponds to the entire original data (i.e. $m=2^{16}$). The experiment is repeated for a range of values of $d$ between $5$ and $25$. As we can observe, the estimated MMD is much more accurate when sampling is done with RQMC. In Figure \ref{fig:gandk_d2_optim_d5}, we then look at the case of $d=5$ in more details. In particular, the figure shows the $l_1$ distance between the true and estimated parameters of the g-and-k as a function of the number of descent steps. \color{black}{It is overall unclear which of RQMC and MC outperforms the other, and this depends on which parameters are of most interest. RQMC seems to outperform MC for $\theta_3$, performs equally well as MC for $\theta_4$ and $\theta_5$ (the curves overlap), and tends to do worse for $\theta_2$. For all parameters, the jumps in $l_1$ error between descent steps is much larger for MC than RQMC, highlighting that RQMC estimates have a much smaller variance. The contrast between Figure \ref{fig:gandk_d2_optim_MMD} and Figure \ref{fig:gandk_d2_optim_d5} highlights that minimisation of a discrepancy does not necessarily mean that the estimates for all parameter values will be accurate.} In fact, 
Figure \ref{fig:gandk_d2_optim} in Appendix \ref{appendix:gandk} actually shows that RQMC actually has a worse performance than MC as $d$ grows when looking at the results in terms of $l_2$ errors instead of MMD (as in Figure \ref{fig:gandk_d2_optim_MMD}). In this case, we expect that such a counter-intuitive result is due to the value of $m$ being too small relative to that of $n$ for RQMC, which could lead to over-fitting. 

{
\color{black}
\subsection{Inference for Generative Neural Networks}

Our final model is a generative neural network which was trained using the Sinkhorn divergence by \cite{Genevay2019}. More precisely, this model is the decoder network of a variational autoencoder (VAE) given by $G_\theta:\mathcal{U} \rightarrow \X$ with $\mathcal{U} = [0,1]^2$ and $\X = [0,1]^{784}$ (i.e. $s=2$ and $d = 784$) where:
\begin{align*}
    G_\theta(u) = \phi_2(\phi_1(\phi_1(u^\top W^1 +b^1)^\top W^2+b^2)^\top W^3+b^3)
\end{align*}
and $\theta$ is a vector containing all entries of the weight matrices $W^1 \in \R^{2 \times 500}, W^2 \in \R^{500 \times 500}, W^3 \in \R^{500 \times 784}$ and biases $b^1 \in \R^{500}, b^2 \in \R^{500}, b^3 \in \R^{784}$ so that $p = 644784$. Additionally, $\phi_1(x) = \log(\ \exp(x)+1)$ (a softplus activation function) and $\phi_2(x) = (1+\exp(-x))^{-1}$ (a logistic activation function), and the output of the generator is a $784-$dimensional vector which can be rescaled to form a $28 \times 28$ pixel image. Since $G_\theta$ is the composition of smooth functions, it is itself smooth. Furthermore, since $\X$ is bounded, the derivatives of $G_\theta$ (which are continuous) must also be bounded, and $G_\theta$ therefore satisfies Assumption \ref{assumptions:generator}, and hence our theorems hold.

\begin{figure}[t!]
    \centering
    \includegraphics[width=0.48\textwidth]{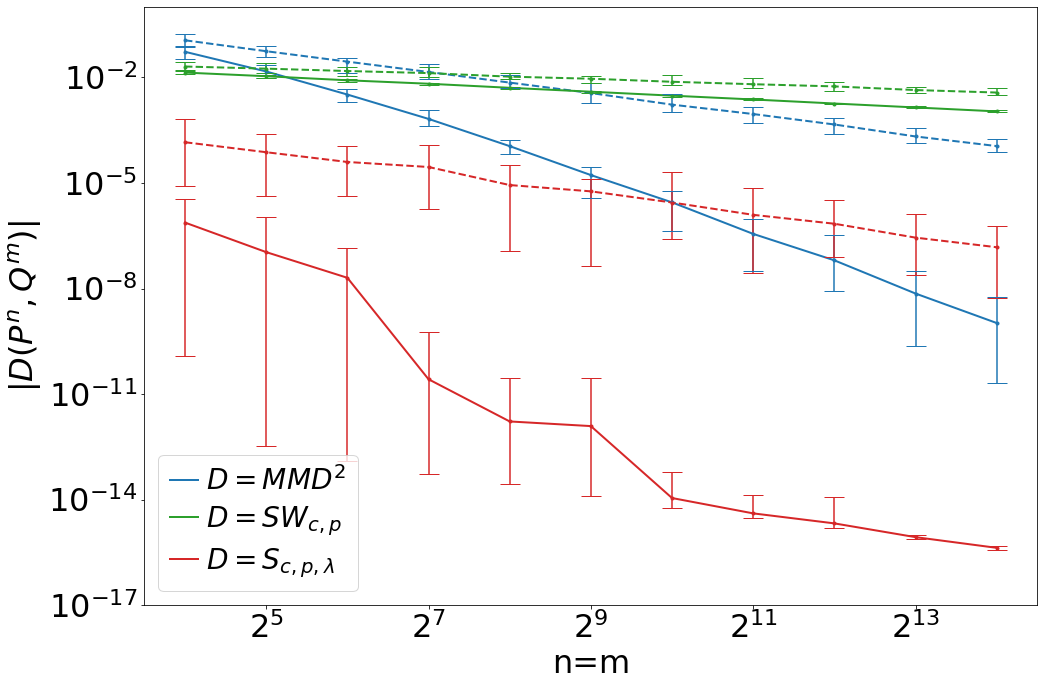}
    \includegraphics[width=0.48\textwidth]{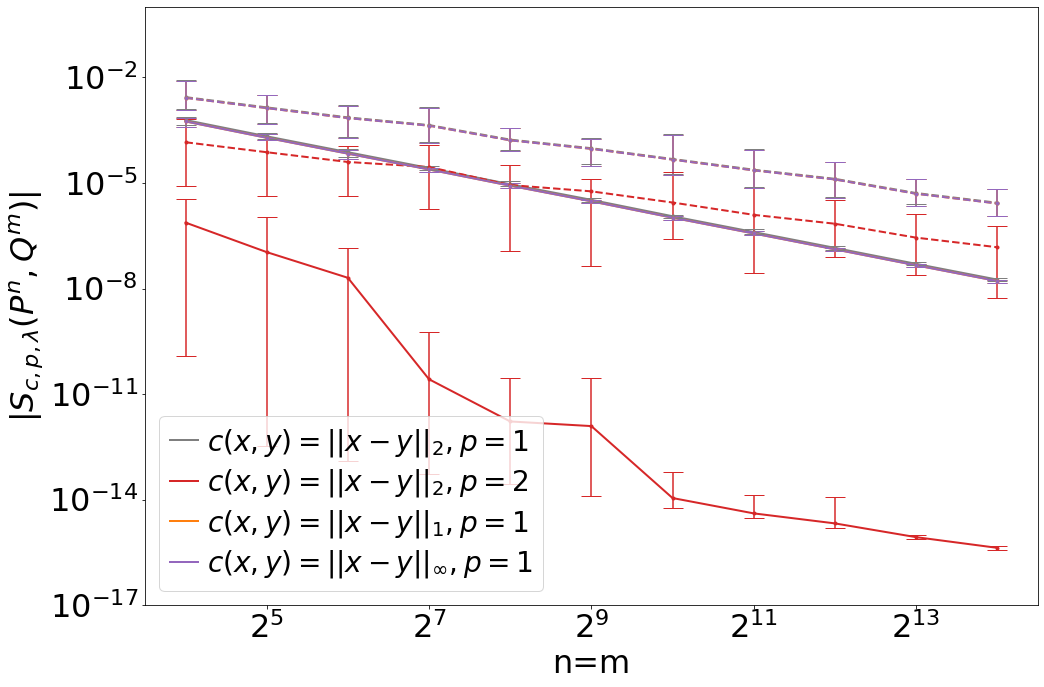}
    \caption{\color{black}{Sample complexity results for the generative neural network. Each solid line corresponds to RQMC, whereas the dashed lines correspond to MC.}
    }
    \label{fig:vae_sample_complexity}
\end{figure}

In the right-hand side plot of Figure \ref{fig:vae_sample_complexity}, the sample complexity is plotted as a function of $n$ for the MMD (with squared-exponential kernel and lengthscale $l=0.01$), the sliced Wasserstein distance (with $c(x,y)=\|x-y\|_2$, $p=1$ and 100 projections), and the Sinkhorn divergence (with $c(x,y)=\|x-y\|_2$, $p=2$ and $\lambda=1$). Here, the Wasserstein distance is omitted due to the prohibitive computational cost in high dimensions. We observe that QMC leads to significant improvements in sample complexity, especially for the MMD and Sinkhorn divergence.

Comparing the sample complexity for the Sinkhorn divergence with different choices of cost $c$ and order $p$ in the right-hand side plot of Figure \ref{fig:vae_sample_complexity}, we find that the choice of squared Euclidean cost, i.e. $c(x,y)=\|x-y\|_2$ and $p=2$, significantly outperforms the other considered choices. It is therefore used in the following experiments.

\begin{figure}[t!]
    \centering
    \includegraphics[width=0.48\textwidth]{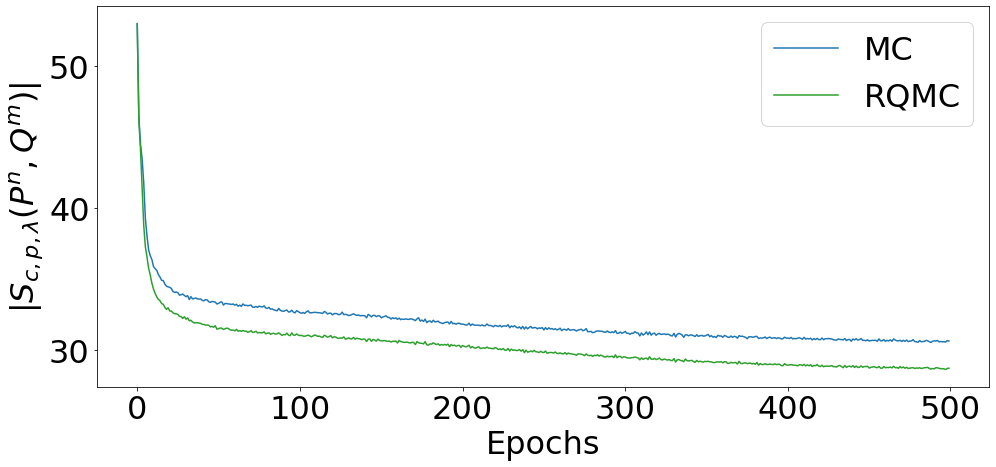}
    \includegraphics[width=0.48\textwidth]{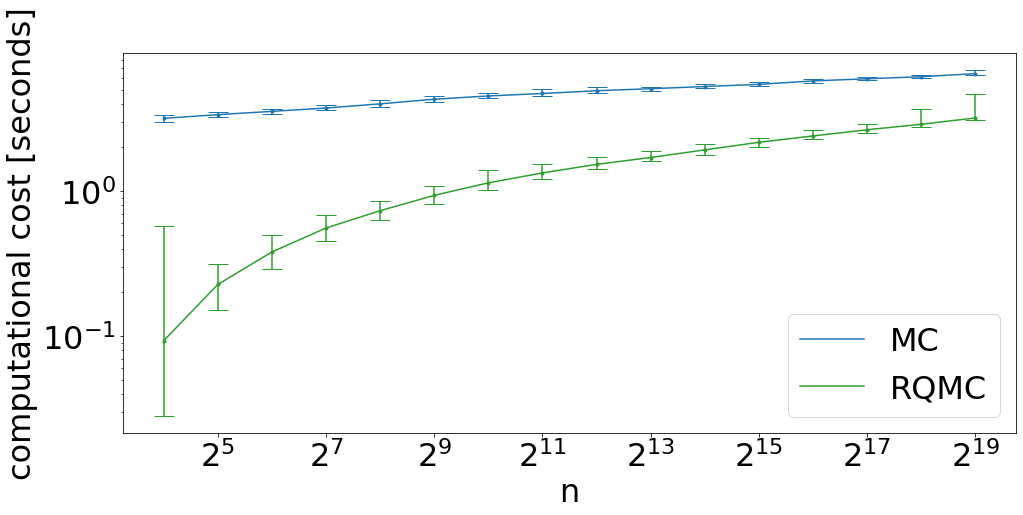}
    \caption{\color{black}{Left: Minimum distance estimation with the Sinkhorn divergence for the VAE. The figure plots the Sinkhorn divergence between the model with the estimated parameters and the data as a function of the number of training epochs. Right: Comparison of the computational cost for simulating from the generative neural network using MC and RQMC. Each line represents the average of 500 repetitions and the error bars give the minimum and maximum values observed.} 
    }
    \label{fig:vae_optim_loss}
\end{figure}

In the final experiment, the generative neural network is trained as the decoder network of a VAE on the MNIST dataset for 500 epochs with mini-batches of size 300 using the Adam optimizer \cite{Kingma2014Adam}. This MDE approach is based on the Sinkhorn divergence with parameters $\lambda=1$, $c(x,y)=\|x-y\|_2$ and $p=2$ and a dataset of size $m=n=55,000$. This setup corresponds to the one used by \cite{Genevay2018}. The implementation of this experiment uses the python library \texttt{TensorFlow} \cite{Tensorflow2015} and \texttt{SciPy}\cite{SciPy2020} to generate Sobol points. Using RQMC sampling, we observe in the left-hand side plot of Figure \ref{fig:vae_optim_loss} that the training loss decreases significantly faster in the number of training epochs than when using MC sampling.

The right-hand side plot of Figure \ref{fig:vae_optim_loss} compares the computational cost of simulating from the generative neural network, which implies sampling with $d=2$. We observe that RQMC sampling is much cheaper than MC for all considered $n$.
}


\section{Conclusion} \label{sec:conclusions}

This paper focused on the use of QMC and RQMC point sets for discrepancy-based inference in intractable generative models. We showed (in Theorems \ref{thm:QMC_concentration_ineq}, \ref{thm:Wasserstein_sample_complexity_QMC} and  \ref{thm:Sinkhorn_QMC}) that the sample complexity becomes $O(n^{-1}{\color{black}(\log{n})^{{\alpha}_s}})$ instead of $O(n^{-1/2})$ for the MMD and Sinkhorn divergence in arbitrary dimension $d$. These faster rates can provide significant improvements on the current state-of-the-art with no significant increases in computational cost (since QMC point sets can be pre-computed). Unfortunately, the rate for the Wasserstein-1 distance can only be improved when $d=1$, and is otherwise gated at $O(n^{-1/d})$ due to a well-known curse of dimensionality. However, we showed that the recently introduced sliced-Wasserstein distance can obtain the optimal $O(n^{-1}{\color{black}(\log{n})^{{\alpha}_s}})$ rate regardless of $d$.

One significant drawback of our results is that they not only require the generator to satisfy certain regularity conditions (see Assumption \ref{assumptions:generator}), but also that $\X$ is compact (see Assumption \ref{assumptions:domain}). These are common assumptions for the QMC literature (see the discussion in \cite{Owen2013}), but these nonetheless exclude many cases of practical interest. Despite these limitations, we showed in Section \ref{sec:experiments} that QMC/RQMC can still provide significant gains when the assumptions do not hold; for example when using the inverse transform approach to sampling from Gaussian distributions (which has an unbounded generator) and sampling Gamma random variables through rejection sampling. This is in line with work in the QMC literature (see for example \cite{Owen2006}) and future work could explore these cases from a theoretical viewpoint in more detail.

Another potential line of future research would be to explore the use of other point sets, including weighted point sets, for inference in generative models. This was recently studied in the context of the Sinkhorn divergence by \cite{Beugnot2021}, who use quantization to improve sample qualities. However, alternative approaches could also be used. For example, Bayesian quadrature \cite{Briol2019b} is known to provide optimally weighted point set for the MMD, and could lead to faster sample complexity results. We expect that such approaches could provide significant improvements in performance, particularly in cases of computationally expensive generators. Higher-order digital nets could also be used to provide dimension-independent convergence rates, albeit with further assumptions on the generator. \color{black}{In this respect, one could think of adapting the architecture of deep generative models as well as the choice of discrepancy so as to ensure that such fast rates can be obtained.}

\subsubsection*{Acknowledgments}

The authors are grateful to Chris Oates {\color{black} and two anonymous reviewers} for helpful comments and suggestions on this paper. FXB was supported by the Lloyd’s Register Foundation programme on data-centric engineering at The Alan Turing Institute under the EPSRC grant [EP/N510129/1].

{\scriptsize
\printbibliography
}

\newpage
\appendix
{
\begin{center}
\LARGE
    \vspace{5mm}
    \textbf{Appendix}
    \vspace{5mm}
\end{center}
}

First, in Appendix \ref{appendix:background}, we recall relevant background material on QMC. Then, in Appendix \ref{appendix:proofs}, we provide all the proofs for the results in the main text. 
In Appendix \ref{appendix:numerical_experiments}, we provide additional numerical experiments to complement the results in the main text.


\section{Additional Background} \label{appendix:background}

For completeness, we first recall several definitions and results which are relevant for QMC. Our presentation closely follows \cite{Owen2005}, and we refer the reader to this paper for further details. For some vector $u \in \mathbb{R}^s$, we will denote its $j$'th component as $u_j$, so that $u=(u_1,\ldots,u_s)$.
 We first introduce the $s-$fold alternating sum of $f$ over $[a,b] \subset \R$: $$\Delta(f;a,b)=\sum_{v\subseteq\{1,\ldots,s\}}(-1)^{\left|v\right|}f(a_{v}:b_{-v})$$ 
 where $a,b$ are two $s$ dimensional vectors. We write $|v|$ for the cardinality of the multi-index $v$, and $-v$ for the sequence $\{1,\ldots,s\} \backslash v$ which contains all elements of $\{1,\ldots,s\}$ not in $v$. Furthermore, $a_{v}$ denotes a $\left|v\right|$-tuple of real values representing the components $a_{j}$ for $j\in v$. The symbol $a_{v}:b_{-v}$ represents the point $u\in[a,b]^{s}$ with $u_{j}=a_{j}$ for $j\in v$, and $u_{j}=b_{j}$ for $j\notin v$. 
 
 Let $\mathcal{Y} = \{ u \in [a,b]^s \; | \; 0 < u_1 < u_2< \ldots < u_s =1\}$ be a ladder on $[a,b]$. For $j=1,2,\ldots,s$, denote by $\mathcal{Y}^{j}$ a ladder on $[a_{j},b_{j}]$. A (multi-dimensional) ladder on $[a,b]$ has the form $\mathcal{Y}=\prod_{j=1}^{s}\mathcal{Y}^{j}$. For $y\in\mathcal{Y}$, the successor point $y_{+}$ is defined by taking $(y_{+})_j$ to be the successor of $y_{j}$ in $\mathcal{Y}^{j}$. The variation of $f$ over $\mathcal{Y}$ is then given by:
 \begin{talign*}
 V_{\mathcal{Y}}(f)=\sum_{y\in\mathcal{Y}}\left|\Delta(f;y,y_{+})\right|.
 \end{talign*}
 Let $\mathbb{Y}^{j}$ denote the set of all ladders on $[a_{j},b_{j}]$ and put $\mathbb{Y}=\prod_{j=1}^{s}\mathbb{Y}^{j}$. 
Then, the \emph{variation of $f$ in the sense of Hardy and Krause} is given by:
\begin{talign*}
    \VHK(f)=\sum_{v\subsetneq \{1,\ldots,s\}}V_{[a_{-v},b_{-v}]}f(u_{-v}:b_{v}).
\end{talign*}
We can now finally present the Koksma-Hwlaka inequality, which decouples the quadrature error into a term depending on the function, the Hardy-Krause variation, and a term depending on the point set, the star discrepancy. 
\begin{lemma}[Theorem 15.5 in \cite{Owen2013}] \label{lemma:KoksmaHwlaka_BVHK}
  Let $\mathcal{U}=[0,1]^s$, $f: \mathcal{U}\rightarrow \R$ and $\{u_{i}\}_{i=1}^{n} \subset  \mathcal{U}$. Then, if $V_{\text{HK}}(f)<\infty$, we have:
  \begin{talign*}
    \left| \int_{[0,1]^s} f(u) du-\frac{1}{n}{\sum_{i=1}^{n}}f(u_i)\right|
    & \leq \VHK(f) D^{*}(\{u_i\}_{i=1}^n).
  \end{talign*}
\end{lemma}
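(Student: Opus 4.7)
The plan is to reduce the statement to a multidimensional integration-by-parts identity that decouples the quadrature error into a product structure, where one factor is a uniform bound involving the star discrepancy and the other is the local variation of $f$ along each face of the cube. Summing these contributions over all ``anchored faces'' will exactly reconstruct $V_{\text{HK}}(f)$.

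First I would handle the warm-up case $s=1$ to fix ideas. Writing $F_n(u) = \tfrac{1}{n}\sum_{i=1}^{n}\mathbf{1}_{u_i \leq u}$ for the empirical CDF and $\Delta_n(u) = F_n(u)-u$, so that $\sup_{u}|\Delta_n(u)| = D^*(\{u_i\}_{i=1}^n)$, a Riemann--Stieltjes integration by parts gives
\begin{talign*}
\int_0^1 f(u)\,du - \frac{1}{n}\sum_{i=1}^n f(u_i)
&= \int_0^1 f(u)\,d(u - F_n(u))
= \int_0^1 \Delta_n(u)\,df(u),
\end{talign*}
using that the boundary contributions vanish because $u - F_n(u)$ equals $0$ at $u=0$ (assuming $u_i>0$) and we evaluate it at $u=1$ against $F_n(1)=1$. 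Taking absolute values and using $|\Delta_n| \leq D^*$ yields the bound $D^*(\{u_i\}_{i=1}^n)\cdot V(f)$.

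The main step is to iterate this one-dimensional identity coordinatewise. For each nonempty $v \subseteq \{1,\ldots,s\}$, applying successive integration by parts on the $s-|v|$ coordinates indexed by $-v$ while anchoring these coordinates at $1$, I obtain the identity
\begin{talign*}
\int_{[0,1]^s} f(u)\,du - \frac{1}{n}\sum_{i=1}^n f(u_i)
&= \sum_{\emptyset \ne v \subseteq \{1,\ldots,s\}} (-1)^{|v|+1}\!\!\int_{[0,1]^{|v|}} \!\!\Delta_n(x_v:1_{-v})\, d_v f(\,\cdot\,:1_{-v})(x_v),
\end{talign*}
where $\Delta_n(y) = \prod_{j=1}^{s} y_j - \frac{1}{n}\sum_i \mathbf{1}_{u_i \in [0,y)}$ is the multidimensional local discrepancy and $d_v f(\cdot:1_{-v})$ denotes the signed Stieltjes measure on $[0,1]^{|v|}$ obtained by differencing $f(\cdot:1_{-v})$ in the $v$-coordinates. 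The sign cancellations work out because the boundary terms produced by integration by parts at coordinate $j \in -v$ with value $u_j=0$ all vanish (since $\Delta_n$ vanishes whenever any coordinate is $0$), whereas the boundary at $u_j=1$ merely pins down that coordinate in the next iteration.

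From this identity, I would take absolute values, pull the uniform bound $|\Delta_n| \leq D^*(\{u_i\}_{i=1}^n)$ out of each summand, and recognize the remaining integrals as the one-sided (anchored) total variations whose sum is exactly $V_{\text{HK}}(f)$ per the definition in the excerpt. The expected main obstacle is justifying the multidimensional integration-by-parts identity when $f$ is not smooth but merely has finite Hardy--Krause variation, since one cannot directly appeal to $\partial^v f$; this is handled either by a density/approximation argument, replacing $\partial^v f\,dx_v$ by the Stieltjes measure $d_v f(\cdot:1_{-v})$ and carefully controlling boundary atoms of $F_n$, or by first mollifying $f$, applying the smooth version, and passing to the limit while verifying that the total variation of the mollified functions is controlled by $V_{\text{HK}}(f)$. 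A secondary technical nuisance is ensuring the identity remains valid when some $u_i$ lies on the boundary of $[0,1]^s$, which amounts to choosing the correct left/right-continuous convention for $F_n$.
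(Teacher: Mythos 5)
The paper does not prove this lemma itself---it imports it verbatim as Theorem 15.5 of Owen (2013)---and your sketch is exactly the standard argument given there: the Hlawka--Zaremba identity obtained by iterated Riemann--Stieltjes integration by parts, with the local discrepancy bounded uniformly by $D^*$ and the anchored-face variations summing to $\VHK(f)$ (consistent with the anchoring at $b=1$ in the paper's Appendix A definition). Your approach is correct and matches the cited source's proof, including your accurate identification of the genuine technical work (passing from smooth $f$ to merely bounded Hardy--Krause variation via Stieltjes measures or mollification, and the half-open-box convention for $F_n$).
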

Combining this result with the definition of QMC or RQMC point set allows us to provide results on the convergence of QMC/RQMC estimators for functions with bounded Hardy-Krause variation.


\section{Proof of Theoretical Results}\label{appendix:proofs}

In this appendix, we provide proofs of all the theoretical results in the main text. 
Firstly, in Section \ref{appendix:proof_preliminary}, we provide some useful preliminary results. Then, Section \ref{appendix:proof_MMD} contains the proof of our results on MMD, Section \ref{appendix:proof_Wasserstein} the proof of our results on the Wasserstein distance, and Section \ref{appendix:proof_sinkhorn} the proof of our results on the Sinkhorn divergence.

\subsection{Preliminary Results}\label{appendix:proof_preliminary}

Before stating our main result for this section, Theorem \ref{thm:general_Sbolev_composition}, we recall a preliminary results which will be used in its proof.  
\begin{lemma}[Generalised H\"older's Inequality; Corollary 2.6 of \cite{Adams2006}]\label{lemma:Holder}
Suppose that $p, p_1, \ldots, p_r \in (0,\infty]$ and $\sum_{i=1}^r p_{i}^{-1} = p^{-1}$. Then, if $\|f_i\|_{L^{p_i}(\X)} < \infty$ for all $i \in \{1,\ldots,r\}$, we have:
\begin{talign*}
\| \prod_{i=1}^r f_i \|_{L^r(\X)} \leq \prod_{i=1}^r \|f_i\|_{L^{p_i}(\X)}.
\end{talign*}
\end{lemma}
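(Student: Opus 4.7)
The plan is to prove the inequality by induction on $r$, reducing everything to the classical two-function H\"older inequality, which itself follows from Young's inequality. I read the conclusion so that the left-hand side norm is $L^{p}(\X)$ (the exponent constraint $\sum_i p_i^{-1}=p^{-1}$ forces this reading for dimensional consistency).

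For the base case $r=2$, I would prove the mixed-exponent form $\|f_1 f_2\|_{L^{p}(\X)} \leq \|f_1\|_{L^{p_1}(\X)}\|f_2\|_{L^{p_2}(\X)}$ with $1/p_1+1/p_2=1/p$ by a standard renormalisation: set $q_i := p_i/p$ so that $1/q_1+1/q_2=1$, apply classical H\"older to the functions $|f_1|^{p}$ and $|f_2|^{p}$ in $L^{q_1}(\X)$ and $L^{q_2}(\X)$, and take $p$-th roots. Classical H\"older itself comes from integrating Young's inequality $ab \leq a^{q_1}/q_1 + b^{q_2}/q_2$ applied pointwise to the normalised functions $|f_i|/\|f_i\|_{L^{p_i}(\X)}$. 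For the inductive step, assume the claim for $r-1$. Given $p_1,\ldots,p_r$ with $\sum_{i} 1/p_i = 1/p$, define $1/q := \sum_{i=1}^{r-1} 1/p_i$, so that $1/q + 1/p_r = 1/p$, and apply the base case to the pair $(\prod_{i=1}^{r-1} f_i,\, f_r)$ to obtain
\begin{talign*}
\Bigl\|\prod_{i=1}^{r} f_i\Bigr\|_{L^{p}(\X)} \;\leq\; \Bigl\|\prod_{i=1}^{r-1} f_i\Bigr\|_{L^{q}(\X)} \|f_r\|_{L^{p_r}(\X)}.
\end{talign*}
The inductive hypothesis then bounds the first factor on the right by $\prod_{i=1}^{r-1}\|f_i\|_{L^{p_i}(\X)}$, closing the induction.

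I do not expect any serious obstacle here; the argument is classical and essentially bookkeeping. The only places that require care are the boundary cases where some $p_i = \infty$ (the associated norm becomes an essential supremum and the Young-based step must be replaced by a direct pointwise $L^\infty$ bound inside the integrand), and the trivial degenerate cases where some $\|f_i\|_{L^{p_i}(\X)}$ is zero or infinite, for which the inequality holds vacuously under the usual conventions. With these conventions in place, the two-function base case together with the inductive reduction delivers the full result.
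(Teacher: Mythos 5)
Your proposal is correct. Note, however, that the paper does not prove this lemma at all: it is imported verbatim as Corollary 2.6 of the cited reference (Adams and Fournier) and used as a black box in the proof of Theorem \ref{thm:general_Sbolev_composition}, so there is no in-paper argument to compare against. Your induction on $r$, bootstrapped from the two-function mixed-exponent case via the renormalisation $q_i = p_i/p$ and classical H\"older applied to $|f_1|^p$ and $|f_2|^p$, is the standard textbook proof and is complete: the exponent bookkeeping $1/q + 1/p_r = 1/p$ closes the induction, and your handling of the $p_i=\infty$ and degenerate-norm cases is the right set of caveats. You were also right to read the left-hand side as the $L^{p}(\X)$ norm rather than the $L^{r}(\X)$ norm printed in the statement; the subscript $r$ there is a typo, as confirmed by the equality constraint $\sum_i p_i^{-1} = p^{-1}$ and by the way the lemma is invoked in Equations \ref{eq:preliminary_1} and \ref{eq:preliminary_3}, where the conclusion is always stated with an $L^{p}$ norm on the left. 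One further remark for consistency with the paper's usage: in the appendix the lemma is applied under the weaker hypothesis $\sum_r p_r^{-1} \leq p^{-1}$ rather than equality; on the probability space $[0,1]^{|\alpha|}$ this is harmless because $\|f\|_{L^{p}} \leq \|f\|_{L^{\tilde p}}$ for $p \leq \tilde p$, but it is a step your equality-based statement would need to absorb if you wanted your proof to cover exactly what the paper uses.
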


We now provide an intermediate result which upper bounds the norm of the composition of two functions. For this, we will need to introduce an ordering on multi-indices. Let $a,b$ be two multi-indices, then $a \prec b$ means that $|a| < |b|$ or, $|a| = |b|$ and $a_i < b_i$ for the smallest $i$ such that $a_i \neq b_i$. The proof closely follows \cite{Basu2016}, but allows for additional smoothness of $g\circ h$.

\begin{theorem}\label{thm:general_Sbolev_composition}
Let $\X \subset \mathbb{R}^d$ be an open set and let $\Hk$ be an RKHS with kernel $k:\R^d \times \R^d \rightarrow \R$ satisfying $k \in \mathcal{C}^{s,s}(\R^d \times \R^d)$ with $\sup_{x\in \X}\partial^{t,t}k(x,x)<C_{k},\forall t\in\mathbb{N}_0^{d}$ such that $|t|\leq s$ where $C_k$ is some universal constant only depending on kernel. Suppose $g:\X \rightarrow \R$ satisfies $g\in \Hk$ and $h:[0,1]^s \rightarrow \X$. Then, assuming $h$ is sufficiently regular for all norms to exist:
\begin{talign*}
    \VHK(g \circ h)
    & \leq C \|g\|_{\Hk(\X)} \\
    & \quad \times \sum_{\alpha \neq \varnothing, \alpha \subseteq 1:s} \sum_{1\leq\left|t\right|\leq\left|\alpha\right|} \sum_{l=1}^{\left|\alpha\right|}\sum_{\left(\ell_{r},k_{r}\right)\in S(l,\alpha,t)} \prod_{r=1}^l  \big\| \partial_{\ell_{r}} h_{k_{r}}(\cdot:1_{-\alpha}) \big\|_{L^{p_{r}}([0,1]^{|\alpha|})} 
\end{talign*}
for any  $\sum_{r=1}^{s}p_{r}^{-1} \leq 1$ and where
\begin{talign*}
    S(l,\alpha,t) 
    & 
    = \Big\{(\ell_r,k_r), r=1,\ldots, l \; \Big| \; \ell_r \in 1:s, k_r \in 1:d, \cup_{r=1}^l \ell_r = \alpha, \\  
    &\qquad \qquad \; \ell_r \cap \ell_{r'} = \emptyset \text{ for } r \neq r', \text{ and } | \{ j \in 1:l \; | \;  k_j =i\}| = t_i \Big\}. \nonumber
\end{talign*}
\end{theorem}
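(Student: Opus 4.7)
The plan is to exploit the smooth integral representation of the Hardy--Krause variation, expand mixed partials of $g \circ h$ via a multivariate Fa\`a di Bruno formula, control the $g$-factor by the RKHS reproducing property, and control the $h$-factors by the generalized H\"older inequality (Lemma \ref{lemma:Holder}). Concretely, under the implicit smoothness assumption that all norms appearing in the statement are finite, we begin from
\begin{talign*}
\VHK(g\circ h) = \sum_{\varnothing \neq \alpha \subseteq 1:s} \int_{[0,1]^{|\alpha|}} \bigl|\partial_\alpha (g\circ h)(u:1_{-\alpha})\bigr|\, du,
\end{talign*}
where $\partial_\alpha = \prod_{i \in \alpha}\partial/\partial u_i$. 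This identity is the familiar ``smooth'' reformulation of the definition of $\VHK$ given in Appendix \ref{appendix:background}: the alternating sums $\Delta(g\circ h; y, y_+)$ telescope to iterated integrals of $\partial_\alpha(g \circ h)$ over sub-rectangles, and summing over ladders recovers the integral above.

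Next, I would apply the first-order multivariate Fa\`a di Bruno formula. Since $\alpha$ is square-free, only set partitions of $\alpha$ appear, and each block picks a coordinate of $h$ to differentiate. This gives
\begin{talign*}
\partial_\alpha (g \circ h)(u) = \sum_{l=1}^{|\alpha|}\sum_{\substack{t\in\mathbb{N}_0^d\\1 \le |t|\le |\alpha|}}\sum_{(\ell_r,k_r)\in S(l,\alpha,t)} c_{\ell,k}\,(\partial^t g)(h(u))\prod_{r=1}^l \partial_{\ell_r} h_{k_r}(u),
\end{talign*}
where the integers $c_{\ell,k}$ reconcile the convention of ordered tuples in $S(l,\alpha,t)$ with unordered set partitions; they depend only on $s,d$ and can be absorbed into the constant $C$. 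The $g$-factor is handled by the derivative reproducing property of $\Hk$: whenever $k \in \mathcal{C}^{s,s}$, one has $\partial^t g(x) = \langle g,\partial^{(0,t)}_{(x',x)}k(\cdot,x)\rangle_{\Hk}$ for $|t|\le s$, so Cauchy--Schwarz yields
\begin{talign*}
|\partial^t g(x)| \le \|g\|_{\Hk}\sqrt{\partial^{t,t}k(x,x)} \le \sqrt{C_k}\,\|g\|_{\Hk}, \qquad x\in\X.
\end{talign*}
Substituting this bound and pulling the constants out of the integral leaves an integral of $\prod_{r=1}^l |\partial_{\ell_r} h_{k_r}(u:1_{-\alpha})|$ over $[0,1]^{|\alpha|}$.

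The last step is to apply Lemma \ref{lemma:Holder} with exponents $(p_r)_{r=1}^l$ satisfying $\sum_r p_r^{-1} \le 1$. Strict generalized H\"older requires equality, but since $[0,1]^{|\alpha|}$ has unit Lebesgue measure we have $\|\cdot\|_{L^p}\le \|\cdot\|_{L^q}$ for $p \le q$, so the ``$\le 1$'' case reduces to the ``$=1$'' case by enlarging some exponents. This gives
\begin{talign*}
\int_{[0,1]^{|\alpha|}}\prod_{r=1}^l |\partial_{\ell_r} h_{k_r}(u:1_{-\alpha})|\,du \;\le\; \prod_{r=1}^l \bigl\|\partial_{\ell_r} h_{k_r}(\cdot:1_{-\alpha})\bigr\|_{L^{p_r}([0,1]^{|\alpha|})},
\end{talign*}
and assembling the three bounds (integral representation, Fa\`a di Bruno with RKHS pointwise control, generalized H\"older) produces the claim with $C = \sqrt{C_k}\cdot\max_{\ell,k} c_{\ell,k}$.

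The main obstacle is the combinatorial bookkeeping of the multivariate Fa\`a di Bruno formula, i.e.\ verifying that its native indexing by set partitions of $\alpha$ together with a choice of coordinate of $h$ per block matches the enumeration $S(l,\alpha,t)$ up to a universal multiplicative constant. A secondary subtlety is that we apply Fa\`a di Bruno after restricting to the face $(\,\cdot\, : 1_{-\alpha})$, which is harmless since the missing coordinates are fixed before differentiation and can thus be absorbed into a $|\alpha|$-variable function before the chain rule is invoked.
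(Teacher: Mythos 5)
Your proposal is correct and follows essentially the same route as the paper's proof: bounding $\VHK(g\circ h)$ by the sum of $L^1$ norms of the mixed partials on the anchored faces (the paper cites this as an inequality from Equation 3 of \cite{Basu2016} rather than the equality you state, but either suffices), expanding via the multivariate Fa\`a di Bruno formula of \cite{Constantine1996}, controlling $\partial^t g$ through the derivative reproducing property and the bound $\sup_x \partial^{t,t}k(x,x)\le C_k$, and finishing with the generalized H\"older inequality. The only cosmetic differences are your explicit combinatorial constants $c_{\ell,k}$ (absorbed into $C$) and your remark on reducing $\sum_r p_r^{-1}\le 1$ to the equality case, both of which are harmless.
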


\begin{proof}
Starting with Equation 3 in \cite{Basu2016} and recalling that $h:[0,1]^s\rightarrow \X$ and $g:\X \rightarrow \R$: 
\begin{talign}
\VHK(g \circ h) \leq \sum_{\alpha \neq \varnothing,\alpha \subseteq 1:s} \| \partial_\alpha (g \circ h)(\cdot:1_{-\alpha})\|_{L^1([0,1]^{|\alpha|})}\label{eq:preliminary_0}
\end{talign}
where we recall that $\VHK(g \circ h)$ denotes the variation of $g \circ h$ in the sense of Hardy and Krause. In order to express the norm of $g \circ h$, we first need an expression for its partial derivatives. We will use Theorem 1 in \cite{Constantine1996} which a Faa di Bruno formula for mixed partial derivative. In particular, for $\alpha \subseteq 1:s$:
\begin{talign*}
 \partial_{\alpha}(g\circ h)(u:1_{-\alpha})
    &
    = \sum_{1\leq\left|t\right|\leq\left|\alpha\right|}
    \partial^t g (h(u:1_{-\alpha})) \sum_{l=1}^{\left|\alpha\right|}\sum_{\left(\ell_{r},k_{r}\right)\in S(l,\alpha,t)}\prod_{r=1}^{l}  \partial_{\ell_{r}}h_{k_{r}}(u:1_{-\alpha})
\end{talign*}
To clarify, here the first sum is over all multi-indices $t\in\mathbb{N}_{0}^{d}$, and $\partial_{\alpha} (g \circ h)$ denotes mixed partial derivatives where we differentiate at most once per coordinate. 
Taking the $L^p$ norm of these derivatives, we get that for $\alpha \subseteq 1:s$:
\begin{talign}
    & \left\| \partial_{\alpha}(g\circ h)(\cdot:1_{-\alpha}) \right\|_{L^p([0,1]^{|\alpha|})} \nonumber \\
    & =  \left\|
    \sum_{1\leq\left|t\right|\leq\left|\alpha\right|}
    \partial^t g(h(\cdot:1_{-\alpha})) \sum_{l=1}^{\left|\alpha\right|} \sum_{\left(\ell_{r},k_{r}\right)\in S(l,\alpha,t)} \prod_{r=1}^{l}  \partial_{\ell_{r}}h_{k_{r}}(\cdot:1_{-\alpha})\right\|_{L^p([0,1]^{|\alpha|})} \nonumber \\
     & \leq 
    \sum_{1\leq\left|t\right|\leq\left|\alpha\right|}\left\|
    \partial^t g(h(\cdot:1_{-\alpha})) \sum_{l=1}^{\left|\alpha\right|} \sum_{\left(\ell_{r},k_{r}\right)\in S(l,\alpha,t)} \prod_{r=1}^{l} \partial_{\ell_{r}}h_{k_{r}}(\cdot:1_{-\alpha}) \right\|_{L^p([0,1]^{|\alpha|})} \nonumber \\
   & \leq 
    \sum_{1\leq\left|t\right|\leq\left|\alpha\right|} 
    \left\|\partial^t g(h(\cdot:1_{-\alpha})) \right\|_{L^\infty([0,1]^{|\alpha|})}  \left\| \sum_{l=1}^{\left|\alpha\right|}\sum_{\left(\ell_{r},k_{r}\right)\in S(l,\alpha,t)} \prod_{r=1}^{l} \partial_{\ell_{r}}h_{k_{r}}(\cdot:1_{-\alpha}) \right\|_{L^p([0,1]^{|\alpha|})} \nonumber \\
    & \leq
    \sum_{1\leq\left|t\right|\leq\left|\alpha\right|}
    \left\|\partial^t g(h(\cdot:1_{-\alpha})) \right\|_{L^\infty([0,1]^{|\alpha|})} \sum_{l=1}^{\left|\alpha\right|}\sum_{\left(\ell_{r},k_{r}\right)\in S(l,\alpha,t)} \left\| \prod_{r=1}^{l}  \partial_{\ell_{r}}h_{k_{r}}(\cdot:1_{-\alpha}) \right\|_{L^p([0,1]^{|\alpha|})} \label{eq:preliminary_1}
\end{talign}
Here, the first inequality follows by the triangle inequality. The second inequality follows from H\"older's inequality (Lemma \ref{lemma:Holder} with $p_1 =\infty$ and $p_2=p$). Finally, the third inequality once again follows from the triangle inequality. The rest of the proof will consist of bounding each of the remaining norms separately.

For the first norm, we will use the fact that $g \in \Hk$ and we can therefore bound the norm of its derivatives. Since $k \in \mathcal{C}^{m \times m}(\R^d \times \R^d)$, we have $k \in \mathcal{C}^{m \times m}(\X \times \X)$. Following Corollary 4.36 of \cite{Steinwart2008}, we have that  $g\in \Hk$ implies $g \in \mathcal{C}^m(\X)$, and $\forall t \in \mathbb{N}^d_0$ with $|t|\leq m$ and $\forall x \in \X$,
\begin{talign*}
\partial^{t} g(x)  \leq \|g\|_{\Hk(\X)} (\partial^{t,t} k(x,x))^{\frac{1}{2}}
\end{talign*}

 Given the assumption that $\sup_{x\in \X}\partial^{t,t}k(x,x)\leq C_k$, we obtain the following inequality combining above results with $m=s$
\begin{align}
    \|\partial^{t} g (h(\cdot:1_{-\alpha}))\|_{L^{\infty}([0,1]^{|\alpha|})} \leq \|\partial^{t} g\|_{L^{\infty}(\X)} = \sup_{x \in \X} \partial^{t} g(x)  \leq C_k^{1/2} \|g\|_{\Hk(\X)}.\label{eq:preliminary_2}
\end{align}

For the second norm, we can once again make use of H\"older's inequality (Lemma \ref{lemma:Holder}) which guarantees that if $\partial_{\ell_r} h_{k_r}(\cdot:1_{-\alpha}) \in L^{p_{r}}([0,1]^{|\alpha|})$ for $\alpha \subseteq 1:s$ and $\sum_{r=1}^{l}p_{r}^{-1}\leq p^{-1}$, then 
\begin{talign}
     \left\|\prod_{r=1}^{l} \partial_{\ell_{r}}h_{k_{r}}(\cdot:1_{-\alpha})\right\|_ {L^{p}([0,1]^{|\alpha|})} 
     &\leq 
     \prod_{r=1}^l  \left\|\partial_{\ell_{r}}h_{k_{r}}(\cdot:1_{-\alpha}) \right\|_ {L^{p_{r}}([0,1]^{|\alpha|})}. \label{eq:preliminary_3}
\end{talign}
Plugging the inequalities in \ref{eq:preliminary_2} and \ref{eq:preliminary_3} into \ref{eq:preliminary_1}, we get that for $\alpha \subseteq 1:s$:
\begin{talign*}
    &\|\partial_\alpha (g \circ h)(\cdot:1_{-\alpha})\|_{L^{p}([0,1]^{|\alpha|})} \\
    & \leq C \|g\|_{\Hk(\X)} \sum_{1\leq\left|t\right|\leq\left|\alpha\right|}  \sum_{l=1}^{\left|\alpha\right|}\sum_{\left(\ell_{r},k_{r}\right)\in S(l,\alpha,t)} \prod_{r=1}^l  \left\|\partial_{\ell_{r}}h_{k_{r}}(\cdot:1_{-\alpha}) \right\|_ {L^{p_{r}}([0,1]^{|\alpha|})}.
\end{talign*}
Plugging this bound with $p=1$ in Equation \ref{eq:preliminary_0} concludes the proof.
\end{proof}


\subsection{Proof of Theorem \ref{thm:QMC_concentration_ineq} and Corollary \ref{cor:MMD}}\label{appendix:proof_MMD}

\subsubsection{Proof of Theorem \ref{thm:QMC_concentration_ineq}}
\begin{proof}
First, we notice that under our assumptions, we may directly apply Theorem \ref{thm:general_Sbolev_composition} in order to get that $\exists C_\theta>0$ such that for any $f \in \Hk$:
\begin{talign*}
\VHK(f \circ G_\theta) \leq C_\theta \|f\|_{\Hk}.
\end{talign*}
More precisely, $G_\theta$ takes the place of $h$ in Theorem 4, and all norms depending on $G_\theta$ are bounded thanks to Assumption \ref{assumptions:generator}.
We can then directly combine this result with the Koksma-Hlawka inequality (Lemma \ref{lemma:KoksmaHwlaka_BVHK}) to get a bound on the MMD:
\begin{talign*}
    \mMMD(\P_\theta,\P_\theta^n) 
    & = 
    \sup_{\|f\|_{\Hk(\X)}  \leq 1}\left|\int_{\X}f(x)\P_\theta(\d x)-\int_{\X}f(x)\P^n_\theta(\d x)\right| \nonumber \\
    & = 
    \sup_{\|f\|_{\Hk(\X)}  \leq 1}\left|\int_{[0,1]^s}f(G_{\theta}(u))du-\frac{1}{n}\sum_{i=1}^{n}f(G_{\theta}(u_{i}))\right| \nonumber \\
    & \leq \sup_{\|f\|_{\Hk(\X)} \leq 1} \VHK(f \circ G_\theta) D^{*}(\{x_i\}_{i=1}^n) \nonumber \\
    & \leq \sup_{\|f\|_{\Hk(\X)} \leq 1} C_\theta \|f\|_{\Hk(\X)} D^{*}(\{x_i\}_{i=1}^n) =  C_\theta D^{*}(\{x_i\}_{i=1}^n) 
\end{talign*}
By definition, we know that whenever $\{u_i\}_{i=1}^n$ is a QMC point set, we have 
$D^*(\{u_i\}_{i=1}^n)= O(n^{-1}{\color{black}(\log{n})^{{\alpha}_s}})$.
This concludes the proof. 
\end{proof}

\subsubsection{Proof of Corollary \ref{cor:MMD}}

\begin{proof}
The proof is trivial by using the fact that $\mMMD$ is a distance and thus the triangle inequality holds $ \left|\mMMD(\P_\theta,\Q^{m})-\mMMD(\P^{n}_\theta,\Q^{m})\right|\leq\mMMD(\P_\theta, \P_\theta^{n})$. The rate therefore follows from Theorem \ref{thm:QMC_concentration_ineq}.
\end{proof}

\subsection{Proof of Theorem \ref{thm:Wasserstein_sample_complexity_QMC} and Corollary \ref{cor:wasserstein}}\label{appendix:proof_Wasserstein}

\subsubsection{Proof of Theorem \ref{thm:Wasserstein_sample_complexity_QMC}}
\begin{proof}
Using the Kantorovich-Rubinstein duality theorem, we may express the Wasserstein distance as an integral probability metric associated to the class of Lipschitz continuous functions when $p=1$:
\begin{talign}
    W_{c,1}(\P_\theta,\P_\theta^n) & :=\sup_{\|f\|_{\text{L}}\leq 1}\left|\int_\X f(x)\P_\theta(\d x)- \int_\X f(x)\P_\theta^n(\d x)\right| \nonumber \\
    & = \sup_{\|f\|_{\text{L}}\leq 1}\left|\int_{[0,1]^s} f(G_\theta(u)) du- \frac{1}{n} \sum_{i=1}^n f(G_\theta(u_i)) \right| \label{eq:Wasserstein_proof_1}
\end{talign}
where $\|g\|_{\text{L}}:=\left|g(x)-g(y)\right|/c(x,y)$. Then, using the Koksma-Hwlaka inequality in Lemma \ref{lemma:KoksmaHwlaka_BVHK}, we get:
\begin{talign}
    \left| \int_{[0,1]^s} f(G_\theta(u)) \d u - \frac{1}{n} \sum_{i=1}^n f(G_\theta(u_i)) \right| 
    & \leq V_{\text{HK}}(f \circ G_\theta ) D^*(\{u_i\}_{i=1}^n). \label{eq:Wasserstein_proof_2}
\end{talign}
Let $\mathcal{Y}^N$ be the ladder $\{ v \in [0,1]^N | 0 < v_1 < v_2< \ldots < v_N =1\}$. Assuming $f$ is Lipschitz and $G_\theta$ has bounded variation in the sense of Hardy and Krause, we have
\begin{talign}
    V_{\text{HK}}(f \circ G_\theta)
    & 
    =\sup_{N \geq 1} \sup_{v \in \mathcal{Y}^N}\sum_{i=1}^{N}\left|f(G_{\theta}(v_{i}))-f(G_{\theta}(v_{i-1}))\right| \nonumber\\
    &
    = \|f\|_{\text{L}} \sup_{N \geq 1} \sup_{v \in \mathcal{Y}^N}\sum_{i=1}^{N}   c(G_{\theta}(v_{i}),G_{\theta}(v_{i-1})) \nonumber\\
    &
    \leq M \|f\|_{\text{L}}  \sup_{N \geq 1} \sup_{v \in \mathcal{Y}^N}\sum_{i=1}^{N}\left|G_{\theta}(v_{i})-G_{\theta}(v_{i-1})\right| \nonumber\\
    &
    = M \|f\|_{\text{L}}V_{\text{HK}}(G_{\theta}). \label{eq:Wasserstein_proof_3}
\end{talign}
where the first equality follows by definition of the Hardy-Krause variation, the second equality from the definition of the Lipschitz norm, and the first inequality from the fact that all norms are equivalent on $\R$ so that $\exists M>0$ such that $c(x,y) \leq M |x-y|$ for all $x,y \in \R$. Combining the results in Equations \ref{eq:Wasserstein_proof_1}, \ref{eq:Wasserstein_proof_2} and \ref{eq:Wasserstein_proof_3}, we get:
\begin{talign*}
    W_{c,1}(\P_\theta,\P_\theta^n) 
    & = \sup_{\|f\|_{\text{L}} \leq 1} \left| \int_{[0,1]^s} f(G_\theta(u)) \d u - \frac{1}{n} \sum_{i=1}^n f(G_\theta(u_i)) \right|\\
    & \leq \sup_{\|f\|_{\text{L}} \leq 1} M \|f\|_{\text{L}} V_{\text{HK}}(G_\theta) D^*(\{u_i\}_{i=1}^n) = M V_{\text{HK}}(G_\theta) D^*(\{u_i\}_{i=1}^n).
\end{talign*}
The proof of the theorem is concluded by noting the rate of convergence for the star discrepancy in the case of QMC or RQMC point sets.
\end{proof}

\subsubsection{Proof of Corollary \ref{cor:wasserstein}}
\begin{proof}
The proof is simple by noticing when $p\geq1$, the Wasserstein distance $W_{c,p}$ with distance function $c$ is indeed a distance satisfying the triangle inequality (Proposition 2.3 in \cite{Peyre2019}). 
\end{proof}


\subsection{Proof of Theorem \ref{thm:Sinkhorn_QMC}} \label{appendix:proof_sinkhorn}

We will now prove Theorem \ref{thm:Sinkhorn_QMC}. The bounds follow the main approach in \cite{Genevay2019}, but need to be significantly modified to accommodate QMC or RQMC point sets instead of IID realisations.

\begin{proof}
Since the Sinkhorn divergence is a normalised version of the regularised optimal transport problem, we can use this definition together with the triangle inequality to get: 
\begin{talign}
& \left|S_{c,p,\lambda}(\P_{\theta},\mathbb{Q})-S_{c,p,\lambda}(\P_{\theta}^{n},\mathbb{Q})\right|  \nonumber \\
& = \left|\bar{W}_{c,p,\lambda}(\P_{\theta},\mathbb{Q})-\bar{W}_{c,p,\lambda}(\P_{\theta}^{n},\mathbb{Q}) - \frac{1}{2} \left( \bar{W}_{c,p,\lambda}(\P_\theta,\P_\theta) - \bar{W}_{c,p,\lambda}(\P_\theta^n,\P_\theta^n) \right)\right| \nonumber \\
& \leq \left|\bar{W}_{c,p,\lambda}(\P_{\theta},\mathbb{Q})-\bar{W}_{c,p,\lambda}(\P_{\theta}^{n},\mathbb{Q})\right| + \frac{1}{2} \left| \bar{W}_{c,p,\lambda}(\P_\theta,\P_\theta) - \bar{W}_{c,p,\lambda}(\P_\theta^n,\P_\theta^n) \right|. \label{eq:proof_sinkhorn_1}
\end{talign}
We will now focus on bounding these terms. To do so, we first recall that the regularized optimal transport problem can be expressed as follows:
\begin{talign*}
\bar{W}_{c,p,\lambda}(\P,\Q)
=\max_{g\in\mathcal{C}(\X),h\in\mathcal{C}(\X)}\mathbb{E}_{X \sim \P, Y \sim \Q}\left[F^{X,Y}_{\lambda}(g,h)\right]+\lambda,
\end{talign*}
where
\begin{talign*}
F_{\lambda}^{x,y}(g,h)=g(x)+h(y)-\lambda\exp \left({\frac{g(x)+h(y)-c^p(x,y)}{\lambda}}\right).
\end{talign*}
We will denote by $(g^{*},h^{*})$ the optimal potentials for $\bar{W}_{c,p,\lambda}(\P_{\theta},\mathbb{Q})$ (i.e. the functions $g$ and $h$ attaining the maximum), by $\left(\Bar{g},\Bar{h}\right)$ the optimal potentials for $\bar{W}_{c,p,\lambda}(\P_{\theta}^{n},\mathbb{Q})$, by $(\tilde{g},\tilde{h})$ the optimal potentials for $\bar{W}_{c,p,\lambda}(\P_{\theta},\P_{\theta})$, and by $(\hat{g},\hat{h})$ the optimal potentials for $\bar{W}_{c,p,\lambda}(\P^n_{\theta},\P^n_{\theta})$.

Now we can upper bound the first term in \eqref{eq:proof_sinkhorn_1} using the triangle inequality as follows:
\begin{talign}
     & \left|\bar{W}_{c,p,\lambda}(\P_{\theta},\mathbb{Q}) - \bar{W}_{c,p,\lambda}(\P_{\theta}^{n},\mathbb{Q})\right| \nonumber \\
     & =
    \left|\mathbb{E}_{X\sim\P_\theta,Y\sim\Q}\left[F_{\lambda}^{X,Y}\left(g^{*},h^{*}\right)\right]- \frac{1}{n}\sum_{i=1}^{n}\mathbb{E}_{Y\sim\Q}\left[F_{\lambda}^{x_{i},Y}\left(\Bar{g},\Bar{h}\right)\right]\right| \nonumber\\
    &\; \leq 
    \left|\mathbb{E}_{X\sim\P_\theta,Y\sim\Q}\left[F_{\lambda}^{X,Y}\left(g^{*},h^{*}\right)\right]-\mathbb{E}_{X\sim\P_\theta,Y\sim\Q}\left[F_{\lambda}^{X,Y}\left(\Bar{g},\Bar{h}\right)\right]\right| \nonumber\\
    &
    \quad +\left|\mathbb{E}_{X\sim\P_\theta,Y\sim\Q}\left[F_{\lambda}^{X,Y}\left(\Bar{g},\Bar{h}\right)\right] - \frac{1}{n}\sum_{i=1}^{n}\mathbb{E}_{Y\sim\Q}\left[F_{\lambda}^{x_{i},Y}\left(\Bar{g},\Bar{h}\right)\right]\right| \label{eq:proof_sinkhorn_2}
\end{talign}
We will now bound the remaining terms. The first term of \eqref{eq:proof_sinkhorn_2} can be upper bounded by:
\begin{talign}
    & \left| \mathbb{E}_{X\sim\P_\theta,Y\sim\Q} \left[F_{\lambda}^{X,Y}(g^{*},h^{*})\right] - \mathbb{E}_{X\sim\P_\theta,Y\sim\Q}\left[F_{\lambda}^{X,Y}(\Bar{g},\Bar{h})\right] \right| \nonumber\\
    & = \; \Big| \mathbb{E}_{X\sim\P_\theta,Y\sim\Q} \left[F_{\lambda}^{X,Y}(g^{*},h^{*})\right] - \frac{1}{n}\sum_{i=1}^{n} \mathbb{E}_{Y\sim\Q} \left[ F_{\lambda}^{x_i,Y}(g^{*},h^{*})\right]
    + \frac{1}{n}\sum_{i=1}^{n}\mathbb{E}_{Y\sim\Q} \left[F_{\lambda}^{x_i,Y}(g^{*},h^{*})\right] \nonumber   \\
    &
    \quad -\frac{1}{n}\sum_{i=1}^{n}\mathbb{E}_{Y\sim\Q} \left[F_{\lambda}^{x_i,Y}(\Bar{g},\Bar{h})\right] + \frac{1}{n}\sum_{i=1}^{n} \mathbb{E}_{Y\sim\Q} \left[F_{\lambda}^{x_i,Y}(\Bar{g},\Bar{h})\right] - \mathbb{E}_{X\sim\P_\theta,Y\sim\Q} \left[F_{\lambda}^{X,Y}(\Bar{g},\Bar{h})\right] \Big| \nonumber \\
    & \leq \; \left| \mathbb{E}_{X\sim\P_\theta,Y\sim\Q} \left[F_{\lambda}^{X,Y}(g^{*},h^{*})\right] - \frac{1}{n}\sum_{i=1}^{n} \mathbb{E}_{Y\sim\Q} \left[ F_{\lambda}^{x_i,Y}(g^{*},h^{*})\right]\right|
     \nonumber   \\
    &
    \quad + \left|\frac{1}{n}\sum_{i=1}^{n} \mathbb{E}_{Y\sim\Q} \left[F_{\lambda}^{x_i,Y}(\Bar{g},\Bar{h})\right] - \mathbb{E}_{X\sim\P_\theta,Y\sim\Q} \left[F_{\lambda}^{X,Y}(\Bar{g},\Bar{h})\right]\right| \label{eq:proof_sinkhorn_3}
\end{talign}
where the inequality follows from triangle inequality and the fact that
\begin{talign*}
   \frac{1}{n}\sum_{i=1}^{n}\mathbb{E}_{Y\sim\Q} \left[F_{\lambda}^{x_i,Y}(g^{*},h^{*})\right] - \frac{1}{n}\sum_{i=1}^{n}\mathbb{E}_{Y\sim\Q} \left[F_{\lambda}^{x_i,Y}(\Bar{g},\Bar{h})\right] \leq 0 
\end{talign*}
because of the optimality of $(\Bar{g},\Bar{h})$. We will now turn to the second term in \eqref{eq:proof_sinkhorn_1}, which can be similarly upper-bounded as follows:
\begin{talign}
& \left| \bar{W}_{c,p,\lambda}(\P_\theta,\P_\theta) - \bar{W}_{c,p,\lambda}(\P_\theta^n,\P_\theta^n) \right| \nonumber \\
& =
\left| \mathbb{E}_{X\sim\P_\theta,Y\sim\P_\theta} \left[F_{\lambda}^{X,Y}(\tilde{g},\tilde{h})\right] - \frac{1}{n^2}\sum_{i=1}^{n} \sum_{j=1}^n F_{\lambda}^{x_i,y_j}(\hat{g},\hat{h}) \right| \nonumber \\
& =
\Big| \mathbb{E}_{X\sim\P_\theta,Y\sim\P_\theta} \left[F_{\lambda}^{X,Y}(\tilde{g},\tilde{h})\right] 
- \frac{1}{n^2}\sum_{i=1}^{n} \sum_{j=1}^n F_{\lambda}^{x_i,y_j}(\hat{g},\hat{h}) 
- \frac{1}{n} \sum_{i=1}^n \E_{Y \sim \P_{\theta}} \left[F_\lambda^{x_i,Y}(\tilde{g},\tilde{h})\right] \nonumber \\
& \quad + \frac{1}{n} \sum_{i=1}^n \E_{Y \sim \P_{\theta}} \left[F_\lambda^{x_i,Y}(\tilde{g},\tilde{h})\right] 
+ \frac{1}{n^2}\sum_{i=1}^{n} \sum_{j=1}^n F_{\lambda}^{x_i,y_j}(\tilde{g},\tilde{h}) - \frac{1}{n^2}\sum_{i=1}^{n} \sum_{j=1}^n F_{\lambda}^{x_i,y_j}(\tilde{g},\tilde{h})
\Big| \nonumber \\
& \leq \Big| \mathbb{E}_{X\sim\P_\theta,Y\sim\P_\theta} \left[F_{\lambda}^{X,Y}(\tilde{g},\tilde{h})\right] 
- \frac{1}{n} \sum_{i=1}^n \E_{Y \sim \P_{\theta}} \left[F_\lambda^{x_i,Y}(\tilde{g},\tilde{h})\right] \Big| \nonumber \\
& \quad + \Big| \frac{1}{n} \sum_{i=1}^n \E_{Y \sim \P_{\theta}} \left[F_\lambda^{x_i,Y}(\tilde{g},\tilde{h})\right]  - \frac{1}{n^2}\sum_{i=1}^{n} \sum_{j=1}^n F_{\lambda}^{x_i,y_j}(\tilde{g},\tilde{h})
\Big|\label{eq:proof_sinkhorn_4}
\end{talign}
where the last inequality holds since 
\begin{talign*}
\frac{1}{n^2}\sum_{i=1}^{n} \sum_{j=1}^n F_{\lambda}^{x_i,y_j}(\tilde{g},\tilde{h})- \frac{1}{n^2}\sum_{i=1}^{n} \sum_{j=1}^n F_{\lambda}^{x_i,y_j}(\hat{g},\hat{h}) \leq 0
\end{talign*} 
due to the definition of $\tilde{g}, \tilde{h}, \hat{g}$ and $\hat{h}$.

Combining \eqref{eq:proof_sinkhorn_2}, \eqref{eq:proof_sinkhorn_3} and \eqref{eq:proof_sinkhorn_4}, we end up with several terms which take the form of absolute integration errors for integrating against $\P_\theta$ for various choices of potentials. From Theorem 2 in \cite{Genevay2019}, we know that if $c\in\mathcal{C}^{\infty, \infty}(\X \times \X)$, then all of these potentials are in $W^{m,2}(\mathcal{X})$ for $m=d/2+1$. 
We will now obtain an upper bound on the integration error for any arbitrary potentials $g,h\in W^{m,2}(\X)$. Firstly, using the definition of $F_\lambda^{x,y}(g,h)$ and the triangle inequality:
\begin{talign}
    &\left|\mathbb{E}_{X\sim\P_\theta,Y\sim\Q}\left[F_{\lambda}^{X,Y}(g,h)\right] 
    - \frac{1}{n}\sum_{i=1}^{n} \mathbb{E}_{Y\sim\Q}\left[F_{\lambda}^{x_{i},Y}(g,h)\right]\right| \nonumber \\
    &
    \leq \left|\int_{\X}g(x)\P_\theta(\d x)-\frac{1}{n}\sum_{i=1}^{n}g(x_{i})\right| \nonumber \\
    & 
    \; + \lambda \left|\int_{\X}\int_{\X}\exp\left(\frac{g(x)+h(y)-c^p(x,y)}{\lambda}\right)\mathbb{Q}(dy)\P_\theta(\d x) -\frac{1}{n}\sum_{i=1}^{n}\int_{\mathcal{X}}\exp\left(\frac{g(x_{i})+h(y)-c^p(x_{i},y)}{\lambda}\right)\mathbb{Q}(dy)\right| \nonumber\\
    & =
    \left|\int_{[0,1]^s}g(G_{\theta}(u))du-\frac{1}{n}\sum_{i=1}^{n}g(G_{\theta}(u_{i}))\right| \label{eq:proof_sinkhorn_5}
    \end{talign}
where the equality holds due to the duality equation (see e.g. Equation 6 in \cite{Genevay2019}):
\begin{talign*}
\exp \left( - \frac{g(x)}{\lambda} \right) = \int_{\X} \exp \left(\frac{h(y)-c^p(x,y)}{\lambda}\right) \Q(dy).
\end{talign*}
To bound the expression above, we may then use the Koksma-Hwlaka inequality in Lemma \ref{lemma:KoksmaHwlaka_BVHK}:
\begin{talign*}
    \left|\int_{[0,1]^s}g(G_{\theta}(u))du-\frac{1}{n}\sum_{i=1}^{n}g(G_{\theta}(u_{i}))\right| & \leq
    \VHK(g\circ G_\theta)D^{*}\left(\{u_i\}_{i=1}^{n}\right).
\end{talign*}
To conclude the proof, our approach will be to upper bound $\VHK(g\circ G_\theta)$ using Theorem \ref{thm:general_Sbolev_composition} for some sufficiently smooth kernel $k$ which we will take to be  Mat\'ern kernel $k$ of smoothness $m-d/2=1$ (see Appendix \ref{appendix:uniform_gaussian} for a definition). This will require that $G_\theta$ is sufficiently regular to satisfy the assumptions in Theorem \ref{thm:general_Sbolev_composition}, but this is true thanks to Assumption \ref{assumptions:generator}. As a result $\exists C_\theta > 0$ such that $\VHK(g\circ G_\theta) \leq C_\theta \|g\|_{\Hk}$, which leads to a bound of the form:
\begin{talign} \label{eq:proof_sinkhorn_6}
\left|\int_{[0,1]^s}g(G_{\theta}(u))du-\frac{1}{n}\sum_{i=1}^{n}g(G_{\theta}(u_{i}))\right| & \leq
    C_\theta \|g\|_{\Hk} D^{*}\left(\{u_i\}_{i=1}^{n}\right).
\end{talign}
It has been proven in Theorem 2 of \cite{Genevay2018} that the potentials $g,h\in W^{m,2}(\mathcal{X})$, where $\mathcal{X}\subseteq\mathbb{R}^d$ is a compact space and $m\in\mathbb{N}$. Conveniently, when $m>d/2$, we know that $W^{m,2}(\mathcal{X})$ is norm-equivalent to the RKHS $\Hk$ with Mat\'ern kernel $k$ of smoothness $m-d/2=1$ (see Example 2.6 in \cite{kanagawa2018}), so that $\exists C_1,C_2>0$ such that: 
\begin{talign*}
    C_{1}\|g\|_{\mathcal{H}_{k}(\X)}\leq \|g\|_{W^{m,2}(\X)} \leq C_{2}\|g\|_{\mathcal{H}_{k}(\X)}
\end{talign*}
We can then combine this result with Equation \ref{eq:proof_sinkhorn_6} to get a bound of the form
\begin{talign}
\left|\int_{[0,1]^s}g(G_{\theta}(u))du-\frac{1}{n}\sum_{i=1}^{n}g(G_{\theta}(u_{i}))\right| & \leq
    \frac{C_\theta}{C_1}  \|g\|_{W^{m,2}(\X)} D^{*}\left(\{u_i\}_{i=1}^{n}\right). \label{eq:proof_sinkhorn_7}
\end{talign} 
Putting all of the pieces together we end up with 
\begin{talign*}
 \left|S_{c,p,\lambda}(\P_{\theta},\mathbb{Q})-S_{c,p,\lambda}(\P_{\theta}^{n},\mathbb{Q})\right|  
 & \leq \tilde{C}_{\theta} D^{*}\left(\{u_i\}_{i=1}^{n}\right).
\end{talign*}
where the bound follows from combining Equations \ref{eq:proof_sinkhorn_1} and \ref{eq:proof_sinkhorn_2} to obtain an upper bound in terms of integration error, then Equation \ref{eq:proof_sinkhorn_7} to upper bound such error, and finally combining all of the constants. This concludes our proof.
\end{proof}


\section{Additional Numerical Experiments} \label{appendix:numerical_experiments}

In this section, we provide additional details on the numerical experiments presented in the main text, and also complement these with additional results to provide a more complete picture of the impact of QMC and RQMC point sets. First, in Section \ref{appendix:uniform_gaussian}, we provide additional experiments on the sample complexity for the uniform and Gaussian models. Sections \ref{appendix:bivariate_beta} and \ref{appendix:gandk} then provide additional details on the experiments with the bivariate Beta and multivariate g-and-k distributions respectively.

\subsection{Uniform and Gaussian Models} \label{appendix:uniform_gaussian}

\begin{figure}[b!]
    \centering
  \includegraphics[width=\textwidth]{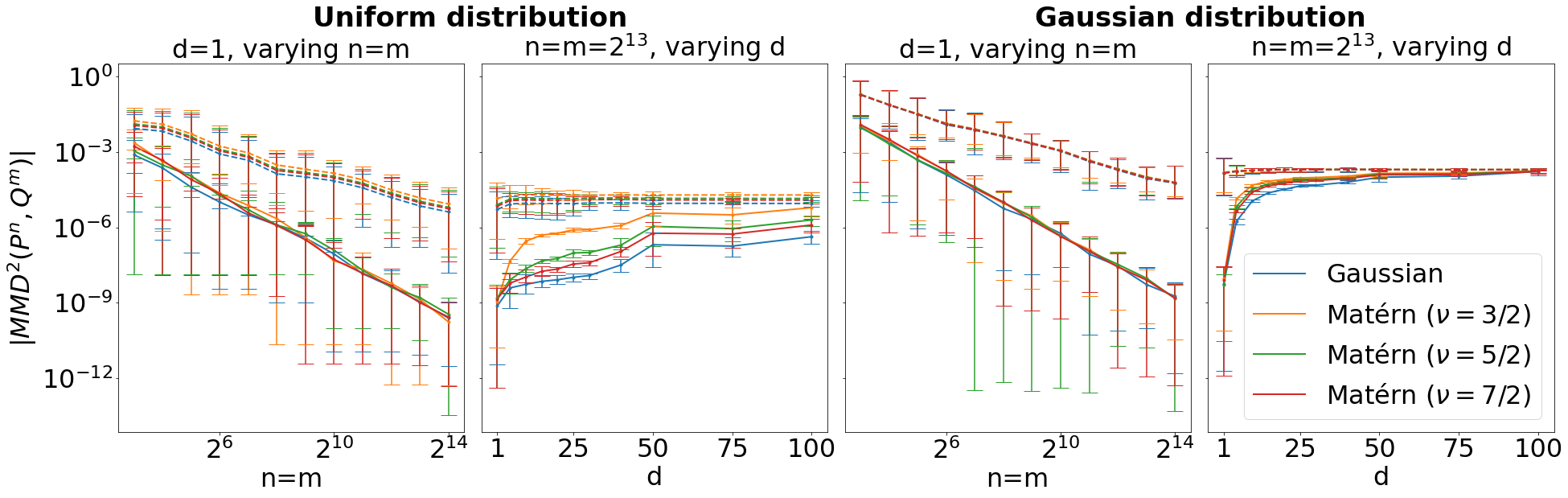}
  \caption{Sample complexity results for the MMD squared with different choices of kernels. The smoother the kernel, the more QMC point sets improve performance for $d>1$. }
\label{fig:MMD_matern_kernel}
\end{figure}

\begin{figure}[t!]
    \centering
  \includegraphics[width=\textwidth]{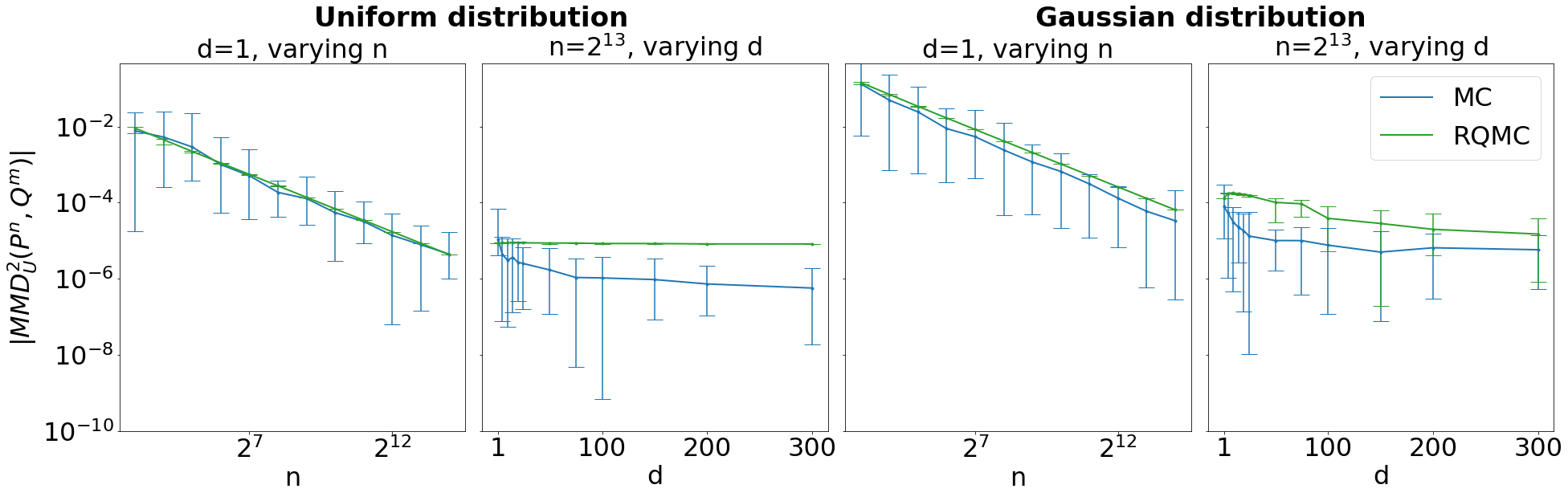}
\caption{Sample complexity results for the U-statistic approximation of the maximum mean discrepancy squared for the uniform and Gaussian distributions. We compare MC realisations with realisations obtained through a RQMC point set. The setup is identical to that of Figure \ref{fig:sample_complexity_unif_gauss} (top row), except we use a U-statistic approximation instead of the squared MMD with empirical measures.}
\label{fig:U-statistic_MMD}
\end{figure}

\begin{figure}[t!]
    \centering
  \includegraphics[width=\textwidth]{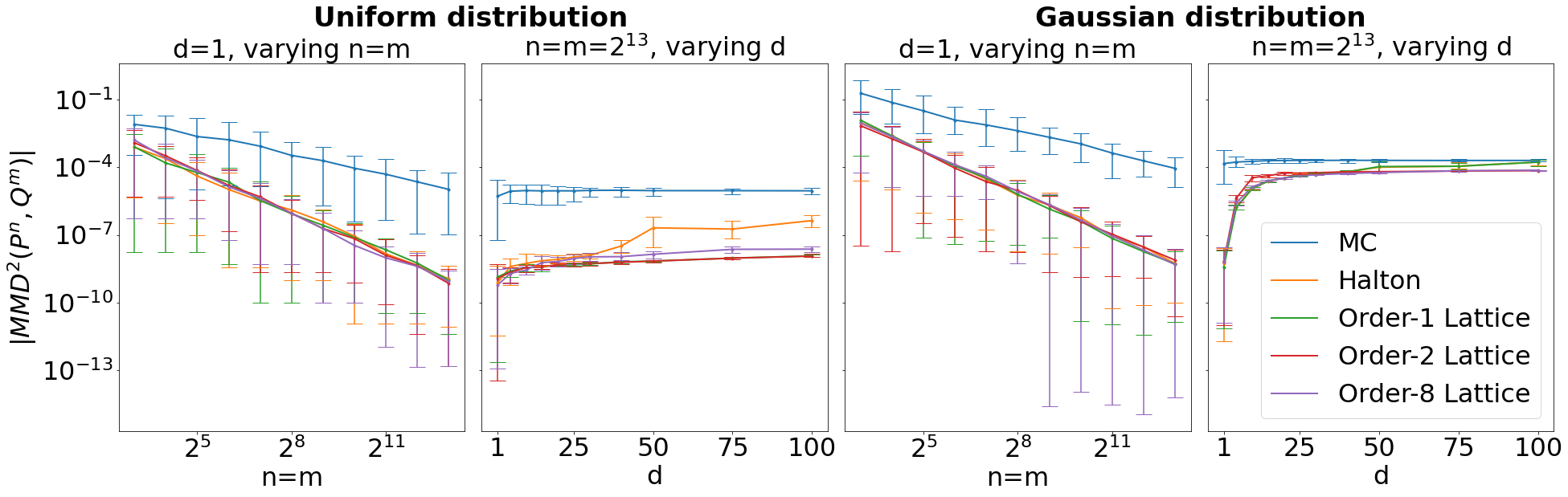}
  \caption{Sample complexity results for the MMD squared for different QMC point sets. The performance does not seem to be significantly impacted by the choice of QMC point set.}
  \label{fig:MMD_different_pointset}
\end{figure}

\begin{figure}[t!]
    \centering
  \includegraphics[width=\textwidth]{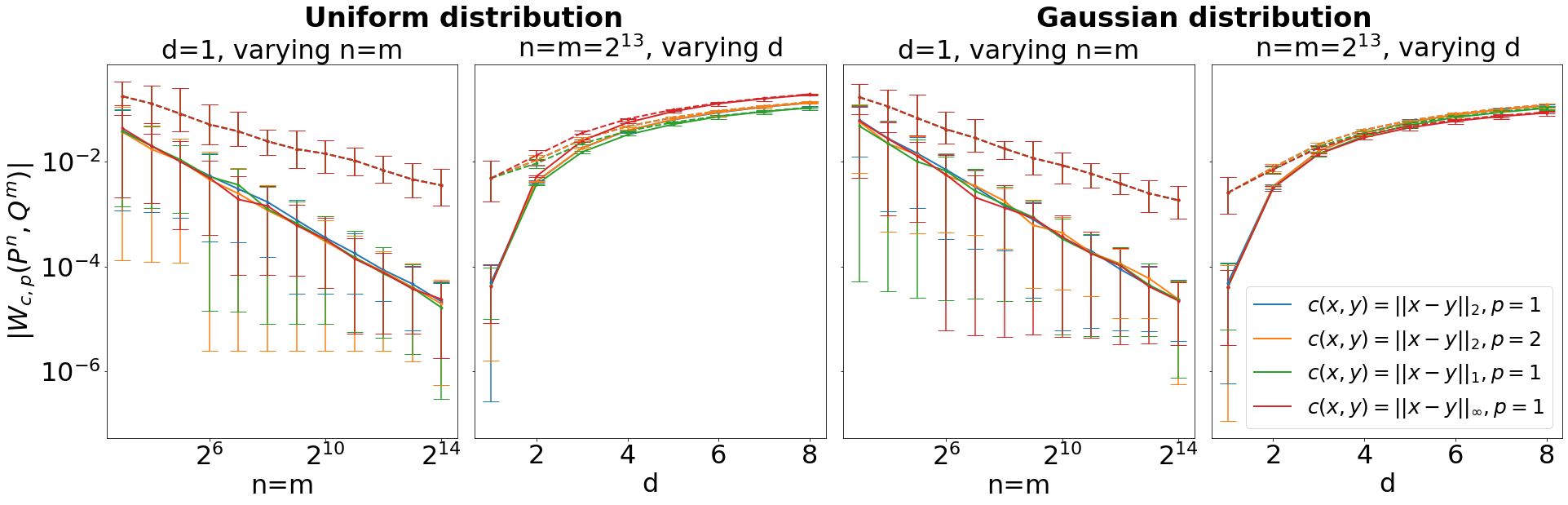}
  \caption{Sample complexity results for the Wasserstein distance with various choices of the cost $c$ and order $p$.}
  \label{fig:Wass_different_cp}
\end{figure}

\begin{figure}[t!]
    \centering
  \includegraphics[width=\textwidth]{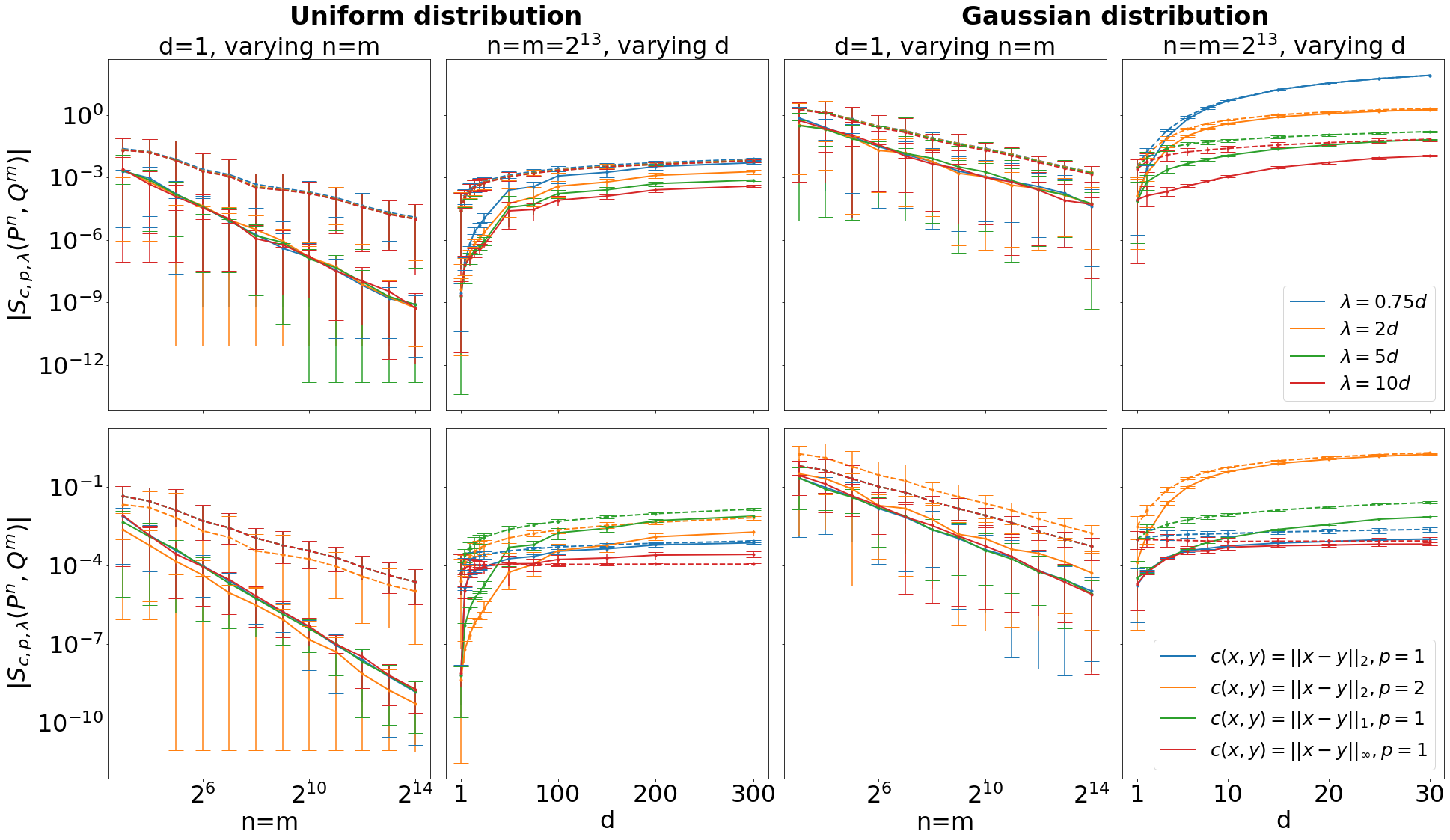}
  \caption{Sample complexity results for the Sinkhorn divergence with various choices of cost $c$, order $p$ and regularisation $\lambda$. The top row corresponds to $c=\|x-y\|_2$ and $p=2$, whereas the second row corresponds to $\lambda=2d$.}
\label{fig:Sinkorn_different_cplambda}
\end{figure}

The first set of additional experiments focuses on the sample complexity of MMD. These experiments were once again performed with generalised Halton sequences randomised using the scrambling factors of \cite{Faure2009}, and with a lengthscale of $l=1.5d^{1/2}$. In Figure \ref{fig:MMD_matern_kernel}, we compare the sample complexity of MMD when different kernels are used. In particular, we compare a squared-exponential kernel with Mat\'ern kernels of smoothness $3/2, 5/2$ and $7/2$. The Mat\'ern kernels take the form:
\begin{talign*}
k_{\nu}(x,x')=\frac{\lambda^2 2^{1-\nu}}{\Gamma(\nu)}\left(\frac{\sqrt{2\nu}\|x-x'\|_{2}}{\sigma^2}\right)^{\nu}K_{\nu}\left(\frac{\sqrt{2\nu}\|x-x'\|_{2}}{\sigma^2}\right),
\end{talign*}
where $\nu>0$ is the smoothness parameter, $\Gamma$ is the Gamma function, and $K_{\nu}$ is the modified Bessel function of the second kind of order $\nu$. In dimension $d=1$, all kernels lead to similar sample complexity results for either MC and QMC point sets. However, for $d>1$, we see a clear improvement when using a smoother kernel and QMC point sets, with the squared-exponential kernel providing the best overall performance. This clearly supports our choice of squared-exponential kernel for the experiments in the main text, and also shows the importance of the smoothness requirements on the kernel in Theorem \ref{thm:QMC_concentration_ineq}.

Another choice we made in the main text was to focus on the MMD with empirical measures. However, many papers in the literature use a U-statistic approximation instead:
\begin{talign*}
    \mMMD^{2}_\text{U}(\mathbb{P}^{n},\mathbb{Q}^{m})
    & =\frac{\sum_{i\neq j}^{n}k(x_i,x_j)}{n(n-1)}-\frac{2\sum_{i=1}^{n}\sum_{j=1}^{m}k(x_i,y_{j})}{nm}+\frac{\sum_{i\neq j}^{m}k(y_{i},y_{j})}{m(m-1)},
\end{talign*}  
see for example \cite{Briol2019,Park2016}. The main advantage of the U-statistic is that it is unbiased, but it does have a larger variance. This turns out to have a significant impact when using QMC point sets in which case we cannot obtain an improved convergence rate. This is illustrated in Figure \ref{fig:U-statistic_MMD} where we reproduced the sample complexity plots in the top row of Figure \ref{fig:sample_complexity_unif_gauss} using the U-statistic. As can be observed, we are not able to obtain a faster convergence rate, and this is the case even in $d=1$. In fact, the results are significantly worse than MC when $d>1$.

These experiments were complemented by a study of the impact of the QMC point sets in Figure \ref{fig:MMD_different_pointset} where we compare an order-1, order-2, and order-8 lattice which were shifted to obtain randomised point sets. As observed, there is only negligible differences in the performance of the different QMC point sets when $d$ is small, but further gains can be obtaioned when $d$ is large in the case of the uniform distribution. 

Next, we studied the impact of the choice of $c$ and $p$ on sample complexity results for the Wasserstein distance. Note that the result in Theorem \ref{thm:Wasserstein_sample_complexity_QMC} is only valid for $p=1$. As we can see in Figure \ref{fig:Wass_different_cp}, the performance is similar across various choices of $c$ and $p$. In each case, a faster rate is obtained for $d=1$ indicating that the result of our theorem could potentially be extended to $p \neq 1$. However, in all cases this gain in performance quickly vanishes as $d$ increases. A similar study was performed for the Sinkhorn divergence in Figure \ref{fig:Sinkorn_different_cplambda} (bottom row). In this case, we may rely on Theorem \ref{thm:Sinkhorn_QMC} which is also valid for $d>1$. As we can see, there seems to be a larger impact due to the choice of cost function or of $p$, and this should warrant further study.

\subsection{Bivariate Beta Model}\label{appendix:bivariate_beta}

As mentioned in Section \ref{sec:bivariate_beta} in the main text, it is possible to sample from the bivariate beta model using uniform random variables whenever all parameter take integer values. However, in the more general setting where the parameters may take scalar values, we will also require realisations from a Gamma random variable.

In order to make this model amenable to realisations from QMC point sets, we therefore need an approach to sampling from Gamma random variables using uniform random variables. A number of approaches are highlighted in Chapter IX.3. of \cite{Devroye1986}, but we will focus specifically on the rejection sampling algorithm by Ahrens and Dieter \cite{Ahrens1974} which we recall in Algorithm \ref{alg1}. 

\begin{algorithm}[h!] 
\caption{Rejection sampling for $n$ realisations of a $\text{Gamma}(\alpha,1)$ where $\alpha \in (0,1)$}
\label{alg1} 
\begin{algorithmic}
    \REQUIRE A 3-dimensional point set of size $n$: $\{u_i=(u_{i1},u_{i2},u_{i3})\}_{i=1}^n \subset [0,1]^3$ 
    \STATE Set $b=(\alpha+e)/e$ 
    \FOR{$i$ in $1,\ldots,n$}
        \STATE Set  $p_i = b \times u_{i1}$.
        \IF{$p\leq 1$}
            \STATE Let $x_i=p_i^{1/\alpha}$
            \STATE If $u_{i2} \leq \exp(-x_i)$, accept $x_i$, else reject $x_i$.
        \ELSE
            \STATE Let $x_i = -\log\left((b-p_i)/\alpha\right)$
            \STATE If $u_{i3} \leq x_i^{\alpha-1}$, accept $x_i$, else reject $x_i$.
        \ENDIF
    \ENDFOR
\end{algorithmic}
\end{algorithm}

\begin{figure}[h!]
	   \centering
	   \includegraphics[width=\textwidth]{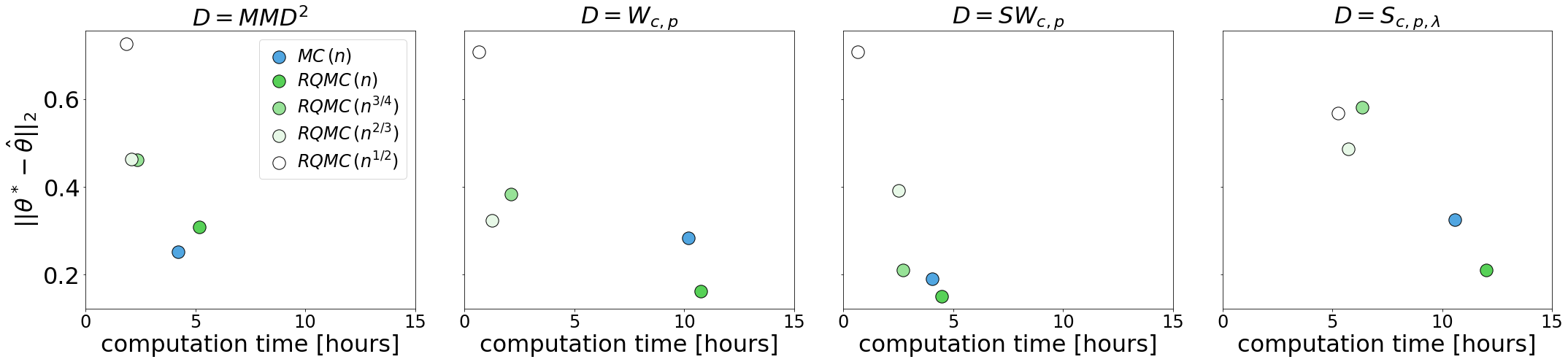}
	   \caption{\color{black}{Minimum distance estimation for the parameters of the bivariate Beta distribution with the MMD, Wasserstein, sliced Wasserstein and Sinkhorn divergence. Each point corresponds to the median of $10$ repetitions of an identical experiment.}}
	   \label{fig:bibeta_l2vstime}
\end{figure}

\newpage
{\color{black}
\paragraph{Alternative Representation} Figure \ref{fig:bibeta_l2vstime} gives an alternative representation of the results in Figure \ref{fig:MDE_bivariate_beta} in the main text. It highlights the relationship between the computation time and the $l_2$ error of the estimates in the different setups.}


\subsection{Univariate and Multivariate g-and-k Models}\label{appendix:gandk}

In this final subsection, we provide additional details for the g-and-k models.

\paragraph{Generator} In order to simulate from the multivariate g-and-k distribution studied in this paper, we will simply need to simulate some uniform random variables and transform these. In order to do so, one quantity of interest will be the matrix-square root of $\Sigma \in \R^{d\times d}$. We recall that $\Sigma$ is a symmetric tri-diagonal Toepliz matrix with diagonal entries all equal to $1$ and off-diagonal entries equal to $\theta_5$, i.e.:
\begin{talign*}
     \Sigma  = \begin{bmatrix}
    1 & \theta_5 & 0 & \dots  & 0 \\
     \theta_5 & 1 & \theta_5 & \ddots  & \vdots \\
     0 & \theta_5 & \ddots & \ddots & 0 \\
     \vdots & \ddots & \ddots & \ddots & \theta_5 \\
     0 & \dots & 0 & \theta_5  & 1
 \end{bmatrix}.
\end{talign*}
For such matrices, the square-root is known in closed form and its computation does not require the use of an algorithm. It has entries given by
\begin{talign*}
\left(\Sigma^{\frac{1}{2}}\right)_{ij} = \frac{2}{d+1} \sum_{k=1}^d \sqrt{1 +2 \theta_5 \cos\left(\frac{k \pi}{d+1}\right)} \sin \left(\frac{i k \pi}{d+1}\right) \sin \left(\frac{j k \pi}{d+1}\right).
\end{talign*}
This can be used directly in the expression for the generator of this model.

\begin{figure}[t!]
    \centering
  \includegraphics[width=0.48\textwidth]{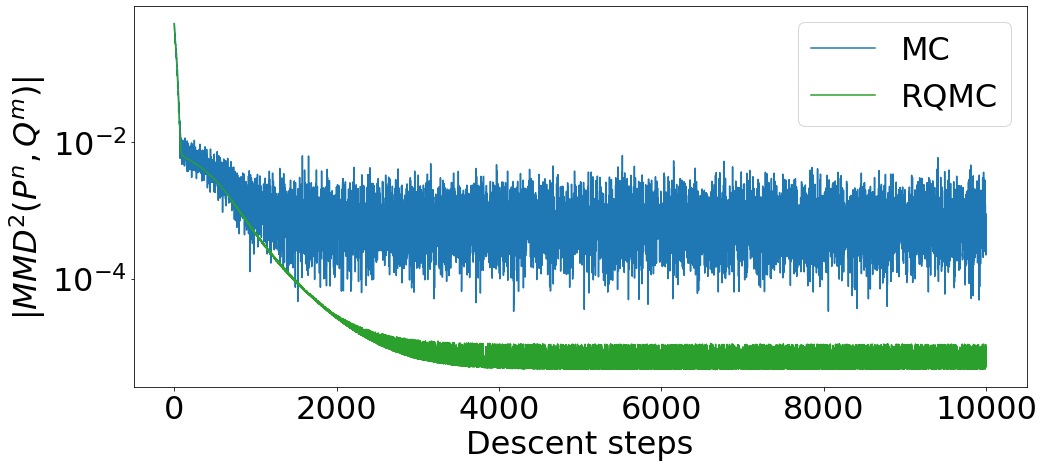}
  \includegraphics[width=0.48\textwidth]{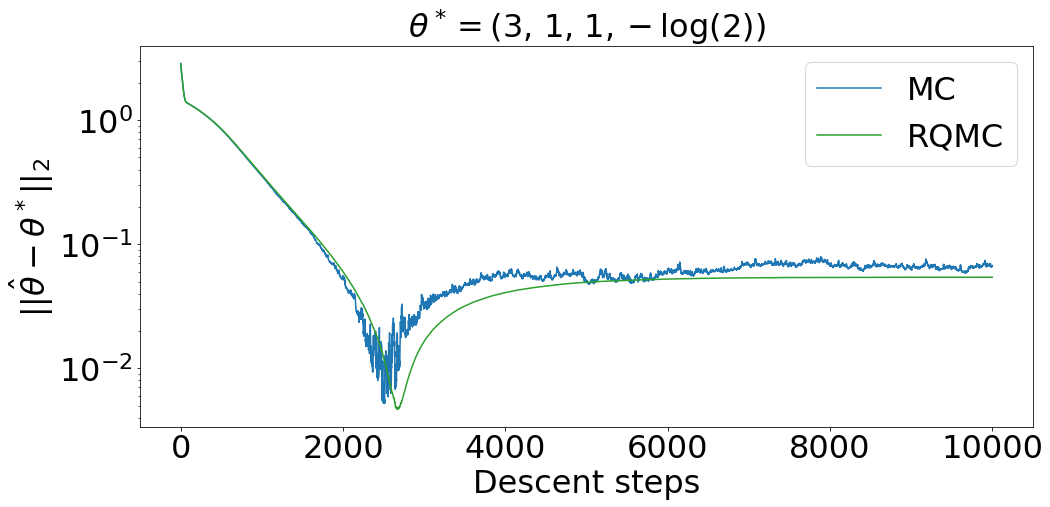}
  \includegraphics[width=\textwidth]{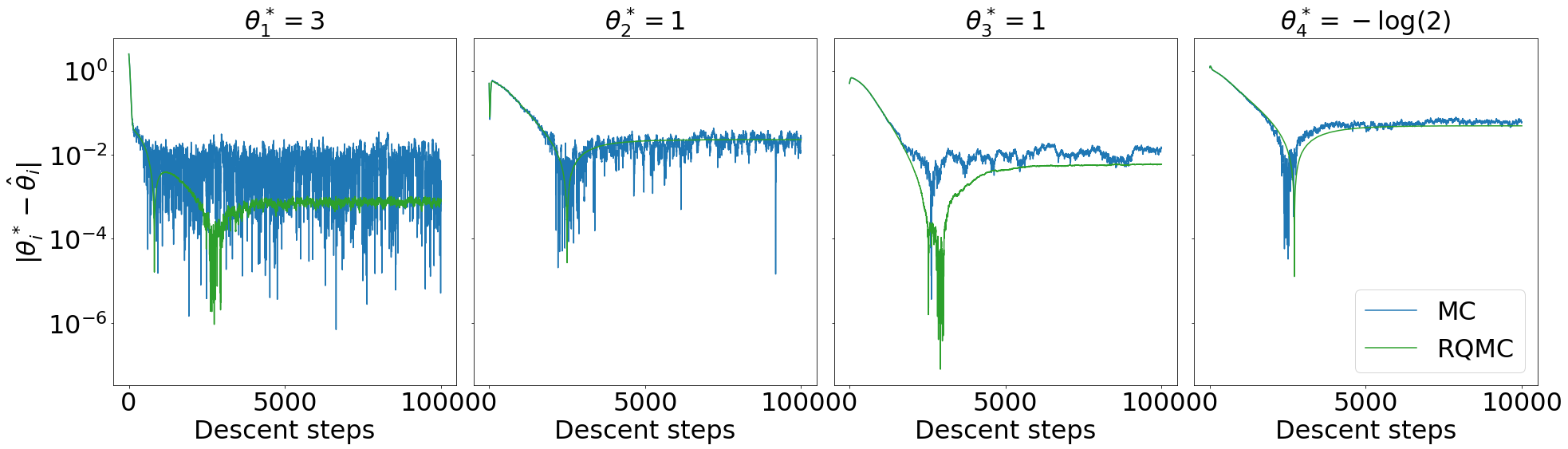}\\
  \caption{Minimum MMD estimation of the parameters of the univariate g-and-k distribution using stochastic optimisation. }
  \label{fig:univariate_results_gandk_optim}
\end{figure}
{\color{black}
\paragraph{Computational Cost} Figure \ref{fig:cost_gandk} describes the computational cost of simulating $n$ realisations of the g-and-k distribution using our implementation. In particular, it compares MC and RQMC for a range of values of $d$. When $n$ is less that $2^{12}$, the cost is usually slightly smaller with RQMC, but as $n$ goes beyond this point the cost of using MC was significantly smaller.  }

\begin{figure}[h!]
    \centering
    \includegraphics[width=0.65\textwidth]{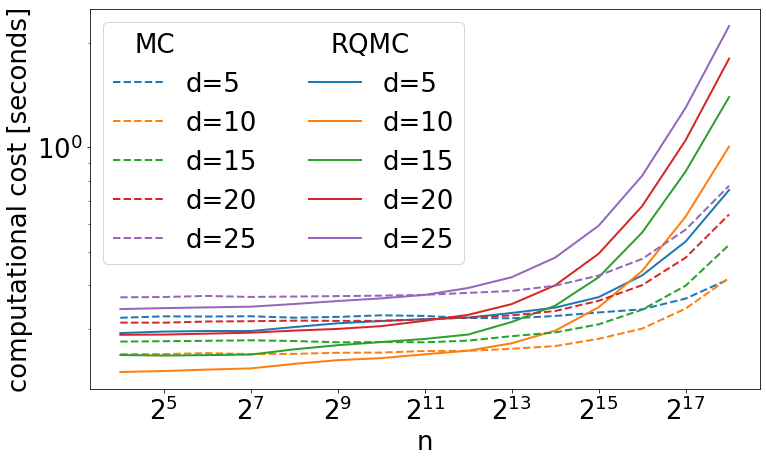}
    \caption{\color{black}{Comparison of the computational cost for simulating from the multivariate g-and-k distribution using MC and RQMC. Each line represents the average of 1000 repetitions.}}
    \label{fig:cost_gandk}
\end{figure}

\paragraph{Additional Numerical Results}
To complement the results in the main text, we first provide results for parameter estimation in the case $d=1$, which is the most common in the literature. In this case, $p=4$ since the parameter $\theta_5$ does not enter the generator.

The results were obtained without sub-sampling the dataset and are provided in Figure \ref{fig:univariate_results_gandk_optim}. As observed in the top left plot, the stochastic optimisation algorithm is able to attain low values of the MMD squared in a much smaller number of steps when using RQMC as opposed to MC. This then leads to an improved parameter estimate as measure in terms of $l_2$-norm between the estimated parameter and the true parameter $\theta^*$; see the top right plot. The bottom row of the figure gives the error for each of the four parameters as the number of step increases. In each case, the RQMC estimates provide significant improvements over the MC estimates, although the gains are limited for the second parameter (which controls the variance).

To complement these results, we provide a histogram obtained by sampling $n=2^{11}$ from the model at $\theta^*$ from MC and QMC, and compare these to a histogram of $\P_{\theta^*}$ (obtained in practice by sampling a number of samples order of magnitude larger). The results are provided in Figure \ref{fig:univariate_results_gandk_hist}. The RQMC-based realisations provide a much better approximation of the distribution near the mode. This is confirmed by the table which provides the distance between the MC-based histogram or the RQMC-based histogram and the truth in terms of various choices of distance including the Kullback-Leibler divergence, the $l_2$ norm, or the Hellinger distance. We also notice that both MC and RQMC provide relatively poor approximation at the tail of the distribution. This is most likely due to the small number of realisations used to create the histogram.

Finally, Figure \ref{fig:gandk_d2_optim} provides the $l_2$ error between true and estimated parameters for the experiment presented in Figure \ref{fig:gandk_d2_optim_MMD}. Clearly, a smaller value of the estimated MMD does not necessarily guarantee a better parameter estimate.

\begin{figure}[t!]
    \centering
  \includegraphics[width=0.7\textwidth]{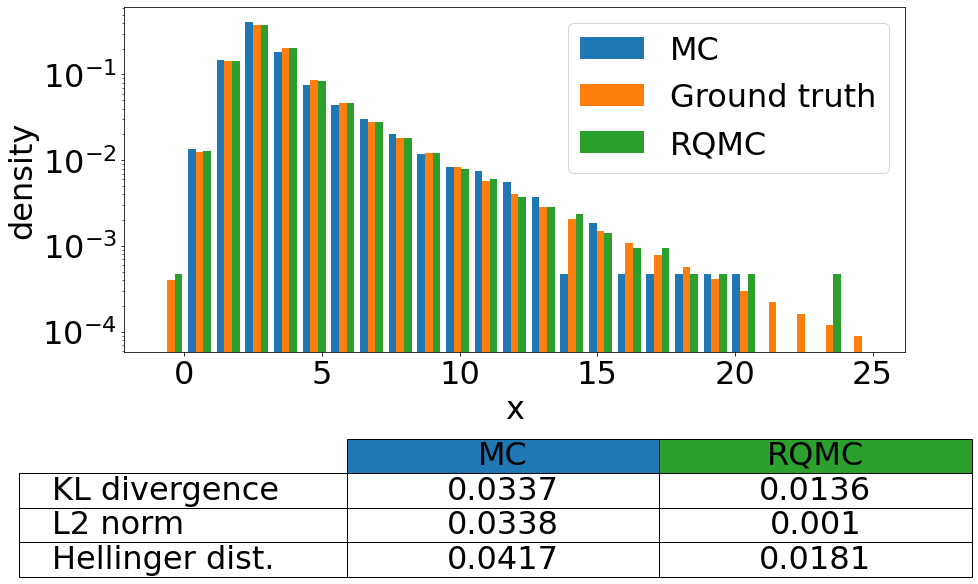}
  \caption{Histograms of the univariate g-and-k distribution at $\theta^*$, together with MC and RQMC-based approximations with $n=2^{11}$. }
  \label{fig:univariate_results_gandk_hist}
\end{figure}

\begin{figure}[t!]
    \centering
      \includegraphics[width=0.85\textwidth]{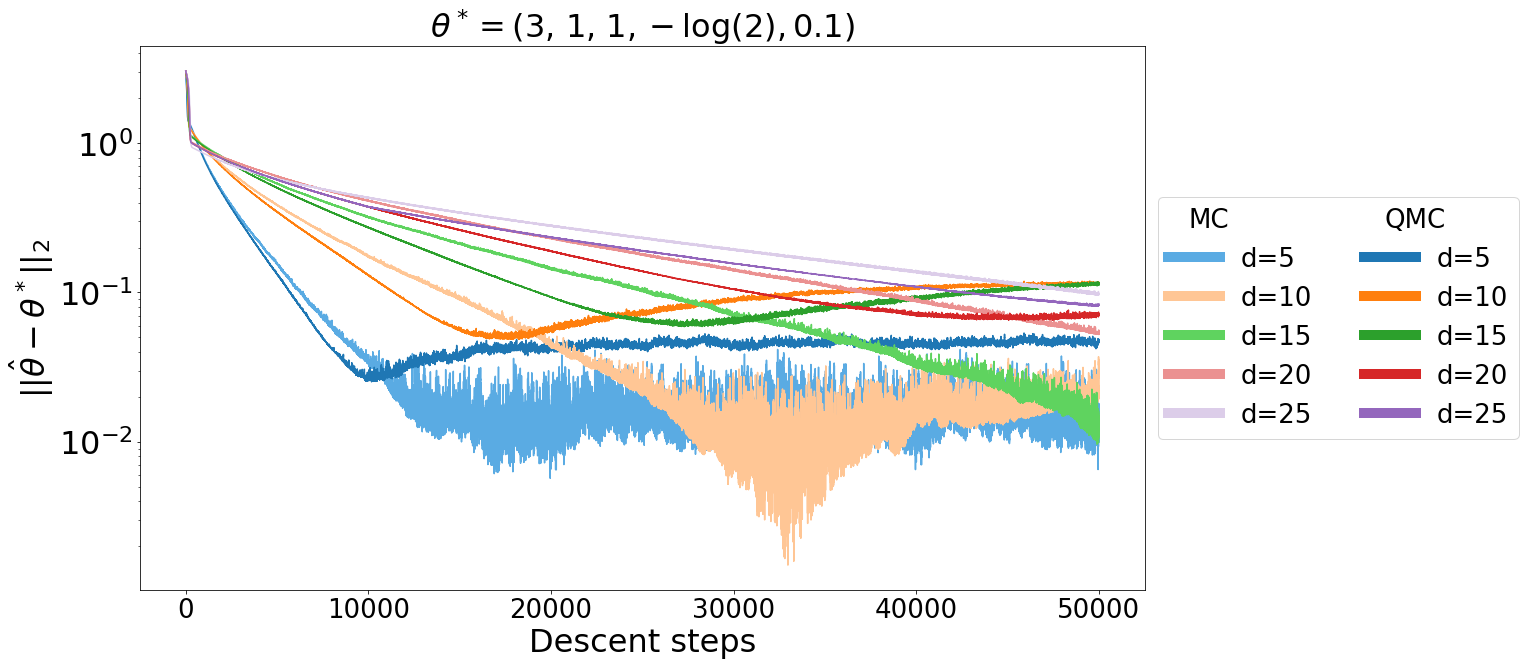}
  \caption{Minimum distance estimation with the MMD for the multivariate g-and-k distribution. The figure plots the $l_2$ error between the estimated parameter and the true value as the number of stochastic gradient descent steps increases.}
  \label{fig:gandk_d2_optim}
\end{figure}

\end{document}